\documentclass[preprint, 3p]{elsarticle}

\usepackage[english]{babel}
\usepackage[utf8]{inputenc}
\usepackage{amsthm}
\usepackage{amsmath}
\usepackage{amssymb}
\usepackage{hyperref}
\usepackage{enumerate}
\usepackage{xcolor}
\usepackage{color,soul}
\usepackage{tikz}
\usepackage{placeins}
\usepackage{subfigure} 
\usepackage{booktabs}
\usepackage{bbm}

\journal{TBA}

%
%

\newtheorem{theorem}{Theorem}[section]
\newtheorem{corollary}[theorem]{Corollary}
\newtheorem{lemma}[theorem]{Lemma}
\newtheorem{proposition}[theorem]{Proposition}
\newtheorem{definition}[theorem]{Definition}
\newtheorem{remark}[theorem]{Remark}
\numberwithin{equation}{section}

\DeclareMathOperator{\esssup}{ess\,sup}

\DeclareMathOperator{\spt}{spt}
\DeclareMathOperator{\argmax}{argmax}
\DeclareMathOperator{\diag}{diag}

%


\begin{document}
	
\begin{frontmatter}
	
\title{On finite population games of optimal trading}

\author[EMApFGV]{D. Evangelista}
\ead{david.evangelista@fgv.br}

\author[IMEUFF]{Y. Thamsten}
\ead{ythamsten@id.uff.br}

\address[EMApFGV]{Escola de Matem\'atica Aplicada (EMAp), Funda\c{c}\~ao Get\'ulio Vargas (FGV), 22250-900, Rio de Janeiro, RJ, Brasil}

\address[IMEUFF]{Instituto de Matem\'atica e Estat\'istica (IME), Universidade Federal Fluminense (UFF), 24020-140, Niter\'{o}i, RJ, Brasil}

\date{}
	
	
\begin{abstract}
    We investigate stochastic differential games of optimal trading comprising a finite population. There are market frictions in the present framework, which take the form of stochastic permanent and temporary price impacts. Moreover, information is asymmetric among the traders, with mild assumptions. For constant market parameters, we provide specialized results. Each player selects her parameters based not only on her informational level but also on her particular preferences. The first part of the work is where we examine the unconstrained problem, in which traders do not necessarily have to reach the end of the horizon with vanishing inventory. In the sequel, we proceed to analyze the constrained situation as an asymptotic limit of the previous one. We prove the existence and uniqueness of a Nash equilibrium in both frameworks, alongside a characterization, under suitable assumptions. We conclude the paper by presenting an extension of the basic model to a hierarchical market, for which we establish the existence, uniqueness, and characterization of a Stackelberg-Nash equilibrium.
\end{abstract}	

\begin{keyword}
Finite Population Games, Optimal Execution, Price Impacts, Hierarchic Games, Asymmetric Information. 
\MSC[2010] 91A06, 91A15, 91A80, 93E20.
\end{keyword}

\end{frontmatter}	

\section{Introduction}

It is often the case that large institutional investors have to execute large trades. For instance, when there is a market shock, it is common for these agents to diminish their exposure in certain assets to comply with regulatory requirements, see \cite{braouezec2018risk,wagalath2018liquidation}. In these circumstances, a plethora of issues arise. Here, we are mainly concerned with three of these, namely: (i) having to deal with market frictions; (ii) managing the risk stemming from the uncertainty in price movements; (iii) facing the presence of arbitrageurs trying to profit out of the pressure that sizeable trades exert on the price of the assets. 

Transaction costs can be intelligible in simple terms, such as brokerage firms' fees, or can have a more complex nature, such as indirect costs. A  popular research directions on the optimal trade execution problem in the context of markets with frictions began with the seminal works \cite{almgren2001optimal,bertsimas1998optimal} of Bertsimas and Lo, and Almgren and Chriss. It is also worthwhile to mention an alternative approach that Obizhaeva and Wang proposed, see \cite{obizhaeva2013optimal}. They introduce supply-demand functions for the Limit Order Book (LOB), deriving a price impact process for a LOB aspect called resilience. We can see the former approach as a particular case of the second one, namely, its high resilience limit. We will focus henceforth on the Almgren-Chriss (AC) setting. The literature in this direction is quite rich, e.g., see \cite{cartea2015algorithmic,gueant2016financial} and the references therein.  

The AC model is a phenomenological one, with costs stemming from limited liquidity, which manifests in two kinds of price impact: permanent and temporary. On the one hand, the effect of a given agent's trading rate in the dynamics of the asset price is what we understand as a permanent price impact. On the other hand, the temporary price impact refers to the additional cost per share that the investor incurs by consuming layers of liquidity within the LOB (a process often referred to as ``walking the book'') to execute an order fully. Empirically, there are assessments of the reasonableness of this model in \cite{cartea2015algorithmic,sophie2018market}.

There are several possible extensions of the AC model in the multiplayer setting. One possible direction is, e.g., to proceed as in \cite{bank2018liquidity,fujii2020mean} or in the case of the dealer market of \cite{bouchard2018equilibrium}, in which the asset's price is, in a certain sense, determined to be an equilibrium resulting from a market clearing condition. This approach seems more suited to investigate the problem from a price formation perspective. In another direction, we adopt a more phenomenological modeling viewpoint, relatively close to \cite{cardaliaguet2017mean,drapeau2019fbsde,fu2018mean,huang2019mean,luo2018nash,schied2017state}, or the open market of \cite{bouchard2018equilibrium}. More precisely, we assume that the asset's mid-price reacts to the aggregate action of the traders. This reaction happens in a detrimental way to the overall population movement, i.e., an aggregate sell (buy) pressure pushes the price down (up); see \cite{bucci2020co,cont2016institutional} for some discussions in this connection. The work \cite{drapeau2019fbsde} considers an $n-$person game where the drift and volatility of the asset price are stochastic, but the impact parameters are deterministic and there is no liquidation constraint. The paper \cite{luo2018nash} regards a similar game, but with constant parameters and transient impact, whereas \cite{schied2017state} studies the finite population game with terminal state constraint and also assumes constant market parameters.


The perspective of \cite{bouchard2018equilibrium,fujii2020mean} is to study equilibria determined by a market clearing condition. As a first step, they assume the price to be given, implying each player's corresponding behavior. Subsequently, they solve decoupled individual optimization problems. In the sequel, they determine the equilibrium price via the balance condition (at least asymptotically), assuming the players' previously derived individual actions. This approach leads to a fixed point problem, which is equivalent to solving a suitable coupled forward-backward stochastic differential equations (FBSDE) system, see \cite[Lemma 5.1, Theorem 5.2]{bouchard2018equilibrium} and \cite[Theorem 3.1]{fujii2020mean}, together with the discussion following the latter therein. Consequently, in both of these works, individually optimal strategies are the best responses to this equilibrium price. The model in \cite{bank2018liquidity} assumes that there are two markets where trading takes place, namely, the dealer and open ones. In the latter, the transaction price of each player consists of a martingale (expected future dividend payments) plus some liquidity costs stemming from permanent and temporary price impacts. In this market, traders accommodate to a Nash equilibrium, for which closed-form formulas are at hand. In the former, from the open market's resulting behavior, the price is determined by the previously described equilibrium methodology. 

Another line of research related to differential games of optimal trading is the one that employs Mean-Field Game (MFG) models. MFGs constitute a branch of game theory developed to study the behavior of large populations of competing rational players. On many occasions, they are useful precisely because the finite population counterpart is not quite tractable. They were introduced in the mathematical community independently by J-M. Lasry and P-L. Lions \cite{ll1,ll2,ll3}, and by M. Huang, P. Caines and R. Malham\'e  \cite{Caines2,Caines1}. There are some relevant advances on finite population trading games prior to the development of MFGs, such as \cite{brunnermeier2005predatory,carlin2007episodic}. Lately, the efforts on MFGs instigated many studies on their finite population counterparts; we refer to the works \cite{cardaliaguet2017mean,casgrain2019trading,fu2018mean,fu2018leader,fujii2020mean,huang2019mean}. In \cite{fujii2020mean}, the authors have to work with the MFG limit, since the market clearing condition at the finite population level is incompatible with their adaptability assumption on the given price. In this way, they must investigate this condition asymptotically. In \cite{casgrain2019trading,fu2018mean,fujii2020mean,huang2019mean}, authors analyze the MFG model and show that the agent’s best response to the optimal aggregate rate is an approximate Nash equilibrium in the finite population game.

We also mention some advances related to settings of asymmetric information. The MFGs in \cite{fu2018mean,fu2018leader,fujii2020mean} have a common noise component, and allow the presence of private information. Differing beliefs between sub-populations of traders is a feature analyzed in \cite{casgrain2020mean}. Authors in \cite{bayraktar2018mini} propose a finite population game in which the drift is a latent process, and there is a temporary price impact, but their goal is to investigate mini-flash crashes. Other references considering latent factors are \cite{bismuth2019portfolio,casgrain2019trading,firoozi2018mean}. Equilibrium prices in the setting of finite population models with players having private information, which they bring into the game through their trading targets, is studied in \cite{choi2018smart}.

On the more technical side, we refer to the recent results in \cite{djete2020mean} regarding convergences of MFGs of controls and approximate Nash equilibria of the corresponding finite population counterparts, in both directions, considering solutions of the former in a proper sense. Proving the convergence of unconstrained games to constrained ones via monotonicity arguments is developed in \cite{ankirchner2014bsdes,graewe2015non,kim2013backward,kruse2016minimal,popier2006backward}. There is an alternative method which consists in identifying the precise asymptotic behavior of the candidate solution at the terminal time, see \cite{graewe2017optimal,graewe2018smooth}. In \cite{bensoussan2015maximum}, there is an analysis of stochastic Stackelberg differential games within symmetric information framework and Brownian filtrations.


Our basic model generalizes the finite population one described at the beginning of \cite[Subsection 1.2]{fu2018mean}, which in turn is motivated by \cite{cardaliaguet2017mean,carmona2015probabilistic}. In contradistinction to those, we do not assume that martingales driving the asset's price are arithmetic Brownian motions, and we also allow for the presence of an uncertain drift. On the one hand, as opposed to the previously mentioned finite population trading games, e.g., \cite{bank2018liquidity,bouchard2018equilibrium}, we allow all parameters to be stochastic, with mild assumptions. On the other hand, we also provide new results in the settings in which parameters are constant. In comparison with \cite{fu2018mean,fu2018leader,fujii2020mean}, our assumptions on informational asymmetry are more lenient. The extension to the leader-follower setting builds upon the ideas of \cite{fu2018leader,huang2019mean}.

We divide the present work into three parts. Firstly, we consider the unconstrained setting. That is a context in which players will not necessarily execute their total inventory by terminal time. However, they penalize strategies reaching terminal time with a non-zero amount of shares. We characterize the Nash equilibrium (NE) as the solution of a coupled FBSDE system of the McKean-Vlasov type. We prove that, under a weak interaction assumption, akin to that made in \cite{fu2018mean,horst2005stationary}, this FBSDE admits a unique solution. The condition we stipulate is equivalent to the one made in \cite{fu2018mean}, provided that the population size is sufficiently large. We use a continuation technique developed in \cite{peng1999fully}.

Under the assumption of constant parameters, but still heterogeneous, we prove that the NE rates, together with their corresponding inventories, form a solution of an ordinary differential equations (ODE) system. We demonstrate that, still under weak interaction, this ODE has a unique solution. Furthermore, we derive it in a semi-explicit form. If we further assume that parameters are homogeneous throughout the population, we show that the average inventory solves a second-order scalar ODE, akin to its MFG counterpart, derived in \cite{cardaliaguet2017mean}. For this ODE system, closed-form formulas are available. 

Secondly, we analyze the constrained problem, in which we require strict liquidation for all players. We prove that a similar characterization of the NE holds in this circumstance. Assuming weakly interacting agents, we manage to prove boundedness on the players' strategies uniformly on the terminal penalization parameters. Using weak convergence arguments, we show that we can pass to a subsequence to identify a solution to the FBSDE; hence, it is a NE for the constrained problem. When this solutions turns out to have continuous paths, a characterization follows. Putting ourselves under the same framework of \cite{fu2018mean}, which studies the MFG counterpart of our model, we prove that the average of the rates forming the NE converges to the optimal mean-field aggregation rate, as the population size tends to infinity, under suitable assumptions. We also provide a convergence rate. 

Thirdly, we develop an extension of our previous model to a hierarchical market. We assume there is a leader and a population of followers. We analyze a setting which generalizes that serving as motivation to the MFGs treated in \cite{fu2018leader,huang2019mean}. In our model, we assume that information is entirely asymmetric. Furthermore, we need not assume the leader’s strategy and parameters' adaptedness to the follower’s filtrations. In \cite{fu2018leader}, authors consider followers as informed traders, whereas \cite{huang2019mean} assumes no informational asymmetry. 

We introduce hierarchy by stipulating that the leader has a first-mover advantage. Therefore, the natural equilibrium to seek is that of Stackelberg-Nash. Thus, for each leader strategy, followers accommodate in an NE. Subsequently, the leader player solves an optimization problem conditional on minors following the corresponding NE. We prove that there exists a unique Stackelberg-Nash equilibrium, for given initial data, and characterize it by an FBSDE system consisting of the one identified in the previous part coupled with adequate adjoint states. If we assume that parameters are constant and homogeneous among the followers' population, we render this resulting FBSDE as a second-order three dimensional ODE system for the average state and adjoint variables.


We finish this Introduction by fixing some notations we use throughout the paper. In Section \ref{sec:the_market_model}, we describe our model, stipulate standing assumptions, and pose the equilibria problems we will investigate. We analyze the NE of the unconstrained problem in Section \ref{sec:NplayerGame}, proving existence and uniqueness. We also provide, in this Section, specialized results in the context of constant parameters. Next, in Section \ref{sec:AnalysisNplayerConstrained}, we obtain the NE of the constrained problem as an asymptotic weak limit of unconstrained NE, and relate the finite population game with its MFG counterpart. In Section \ref{sec:remarks_major_minor}, we extend our previous model to a hierarchic game of optimal trader, with a single major agent and a finite population of minor ones. We make concluding remarks in Section \ref{sec:conclusions}.

\textbf{Notations.} We consider a fixed time horizon $T>0,$ a population size $N\geqslant 1,$ and a complete filtered probability space $\left( \Omega, \mathcal{F}, \mathbb{F} = \left\{\mathcal{F}_t\right\}_{0\leqslant t \leqslant T}, \mathbb{P} \right),$ where $\mathbb{F}$ is complete, continuous, and such that $\mathcal{F}_T \subseteq \mathcal{F}.$ We write $\mathcal{N} := \left\{1,...,N\right\}$, and we assume that, to each $i \in \mathcal{N},$ there corresponds a filtration $\mathbb{F}^i = \left\{ \mathcal{F}^i_t \right\}_{0\leqslant t \leqslant T},$ which we also suppose complete and continuous, and satisfying $\mathcal{F}^i_t \subseteq \mathcal{F}_t,$ for every $t \in \left[0,T\right].$

From now on, $\mathcal{G}$ represents an arbitrary $\sigma-$algebra contained in $\mathcal{F}.$ We will consider the following functional spaces:
$$
L^2\left(\Omega,\mathcal{G}\right) := \left\{ X : X \text{ is } \mathcal{G}-\text{measurable and } \mathbb{E}\left[X^2\right] < \infty \right\};
$$
\begin{align*}
\mathbb{L}^2 := \Bigg\{ x=\left\{x_t\right\}_{0\leqslant t \leqslant T} : x \text{ is } \mathbb{F}-\text{progressively } &\text{measurable, and } \mathbb{E}\left[\int_0^T x_t^2\,dt\right] < \infty \Bigg\};
\end{align*}
$$
\mathbb{S} := \Bigg\{ x=\left\{x_t\right\}_{0\leqslant t \leqslant T} \in \mathbb{L}^2 :  \mathbb{E}\left[\sup_{0\leqslant t \leqslant T} x_t^2 \right] < \infty \Bigg\};
$$
\begin{align*}
    \mathcal{M}_i := \Big\{ M = \left\{ M_t \right\}_{0\leqslant t \leqslant T} : M_t &\in L^2\left(\Omega,\mathcal{F}^i_t\right),\, \text{ for a.e. } t \in \left[0,T\right], \text{ and } \left\{ M_t,\mathcal{F}^i_t\right\}_{0\leqslant t \leqslant T} \text{ is a martingale} \Big\};
\end{align*}
\begin{align*}
    \mathbb{M}_i := \Big\{ M \in \mathcal{M}_i : \mathbb{E}\left[ \left\langle M \right\rangle_T \right] < \infty \Big\}.
\end{align*}
Above, we have written $\left\langle M \right\rangle_T$ to denote the quadratic variation of $M$ over the interval $\left[0,T\right].$ We emphasize that we consider all the expectations appearing above under the measure $\mathbb{P}.$

These spaces are endowed with the norms
$$
\|X\|_{L^2\left(\Omega,\mathcal{G}\right)} := \mathbb{E}\left[X^2\right]^{1/2},
$$
$$
\|x\|_{\mathbb{L}^2} := \mathbb{E}\left[\int_0^T x_t^2\,dt\right]^{1/2},
$$
$$
\|y\|_{\mathbb{S}} := \mathbb{E}\left[\sup_{0\leqslant t \leqslant T} y_t^2 \right]^{1/2},
$$
and
$$
\|M\|_{\mathbb{M}_i} := \mathbb{E}\left[\left\langle M\right\rangle_T\right]^{1/2}.
$$
for each $X \in L^2\left(\Omega,\mathcal{G}\right),\, x \in \mathbb{L}^2,\, y \in \mathbb{S}$ and $M \in \mathbb{M}_i,\, i \in \mathcal{N}.$  For simplicity, we write from now on $\|\cdot\| := \|\cdot\|_{\mathbb{L}^2}.$ We clarify that the norms above are well-defined because we do not distinguish processes equal $dt\times d\mathbb{P}-$a.e.a.s. We abbreviated the expressions ``almost everywhere'' and ``almost surely''  by ``a.e.'' and ``a.s.,'' respectively. We will do this from now on. Similarly, we do not make a difference in random variables that coincide $\mathbb{P}-$a.s. If there is a version of a stochastic process with continuous paths, then this is the one we fix.

In general, throughout this work, given $m \geqslant 1$ normed spaces $E_1,...,E_m,$ we will consider in the product space $E:=\Pi_{i=1}^m E_i$ the norm
$$
\|\boldsymbol{x}\|_{E} := \left( \|x_1\|_{E_1}^2 + \cdots + \|x_m\|_{E_m}^2 \right)^{1/2},
$$
for $\boldsymbol{x} = \left(x_1,...,x_m\right)^\intercal \in E.$ For instance, we set
$$
\mathbb{M}_{(N)} := \Pi_{i=1}^N \mathbb{M}_i.
$$
With a slight abuse of notation, we will also denote the norm of $\left( \mathbb{L}^2 \right)^N$ by $\|\cdot\|.$ 

For $t \in \left[0,T\right],$ we denote by $\mathcal{P}^i_t$ the $L^2\left( \Omega, \mathcal{F}\right)-$projection operator onto $L^2\left(\Omega,\mathcal{F}^i_t\right),$ i.e.,
$$
\mathcal{P}^i_t\left(X\right) = \mathbb{E}\left[X|\mathcal{F}^i_t\right] \hspace{1.0cm} \left(X \in L^2\left(\Omega,\mathcal{F}\right) \right).
$$
We set $\boldsymbol{\mathcal{P}}_t : L^2\left(\Omega,\mathcal{F}\right)^N \rightarrow \Pi_{i=1}^N L^2\left(\Omega,\mathcal{F}^i_t\right)$ to be
$$
\boldsymbol{\mathcal{P}}_t\left(\boldsymbol{X}\right) := \left( \mathcal{P}^1_t\left(X^1\right),...,\mathcal{P}^N_t\left(X^N\right) \right)^\intercal,
$$
for $\boldsymbol{X} = \left( X^1,...,X^N \right)^\intercal.$ We also fix the following conventions, which we use throughout the remainder of the present work:
\begin{itemize}
    \item We convention that the letter $C$ will denote a generic positive constant, depending only on model parameters, which may change within estimates from line to line;
    \item For $d\geqslant 1$ and a matrix $\boldsymbol{\Lambda} \in \mathbb{R}^{d \times d},$ we will write $\boldsymbol{\Lambda} = \left( \ell_{ij} \right)_{i,j}$ to express that $\ell_{ij}$ is the entry $(i,j)$ of it. 
    \item We employ the notation $\diag\left( \ell_i \right) := \left( \delta_{ij}\ell_i \right)_{i,j}$ for diagonal matrices, where $\ell_1,...,\ell_d$ are given real numbers, and $\delta_{ij}$ denotes the Kronecker delta. We will particularly denote the square matrix of order $d,$ all of whose entries are equal to zero, by $\boldsymbol{0}^{d\times d}.$ We write $\boldsymbol{I}^{d \times d}$ to denote the identity matrix of order $d;$
    \item For two square matrices of the same order, $\boldsymbol{\Lambda}_1$ and $\boldsymbol{\Lambda}_2,$ we write $\boldsymbol{\Lambda}_1 \geqslant \boldsymbol{\Lambda}_2$ to signify that $\boldsymbol{\Lambda}_1 - \boldsymbol{\Lambda}_2$ is positive definite (not necessarily in the strict sense);
    \item For a given vector $\boldsymbol{a} = \left( a^1,...,a^N \right)^\intercal,$ we put
    $$
    \boldsymbol{a}^{-i} := \left( a^1,...,a^{i-1},a^{i+1},...,a^N\right)^\intercal,
    $$
    and
    $$
    \left( a^i, \boldsymbol{a}^{-i} \right) := \boldsymbol{a},
    $$
    for each $i \in \mathcal{N};$
    \item For two vectors $\boldsymbol{x},\boldsymbol{y} \in \mathbb{R}^d,$ we write $\boldsymbol{x} \cdot \boldsymbol{y} := \boldsymbol{x}^\intercal \boldsymbol{y};$
    \item We will only consider strong solutions of the stochastic differential equations (SDE), backward stochastic differential equations (BSDE), as well as of the FBSDE appearing in this text. We always understand solutions of an ODE system in the classical sense, except if we explicitly state otherwise.
\end{itemize}

\section{The market model}\label{sec:the_market_model}

Let us consider a stochastic differential game model comprising $N$ competitive rational traders negotiating a single financial asset. We index the players by $i \in \mathcal{N}.$ We will use the words trader, player or agent, interchangeably henceforth. Agent $i\in \mathcal{N}$ controls her trading rate $\left\{ \nu^i_t \right\}_{0\leqslant t \leqslant T} \in \mathbb{A}_i,$ where 
$$
\mathbb{A}_i := \left\{ \nu \in \mathbb{L}^2 : \nu \text{ is } \mathbb{F}^i-\text{progressively measurable} \right\}
$$
is the admissible set of trading strategies for this player. We write 
$$
\mathbb{A}_{(N)} := \Pi_{j=1}^N \mathbb{A}_j \text{ and } \mathbb{A}_{-i} := \Pi_{j\neq i} \mathbb{A}_j. 
$$
Similarly, we consider
$$
\mathbb{S}_i := \mathbb{S}\cap \mathbb{A}_i \text{ and } \mathbb{S}_{(N)} := \Pi_{i=1}^N \mathbb{S}_i.
$$
We endow each of the spaces $\mathbb{S}_i$ with the restriction of the norm of $\mathbb{S}$ to it.

Each player $i \in \mathcal{N}$ has a corresponding inventory process $\left\{q^i_t\right\}_{0\leqslant t \leqslant T} \in \mathbb{A}_i$ and a cash process $\left\{ x^i_t \right\}_{0\leqslant t \leqslant T} \in \mathbb{A}_i.$ We assume that the initial inventory $q^i_0,$ as well as the initial cash amount $x^i_0,$ belong to $L^2(\Omega,\mathcal{F}^i_0).$

\textbf{Dynamics of the state variables.} Let us fix $i \in \mathcal{N}$ arbitrarily. The inventory $\left\{q^i_t\right\}_{0\leqslant t \leqslant T}$ of the agent $i$ evolves according to
\begin{equation} \label{eq:InvDynamics}
    dq^i_t = \nu^i_t\,dt.
\end{equation}
Let us denote the price from the perspective of player $i$ by $\{s^i_t\}_{0\leqslant t \leqslant T};$ we stipulate that it is given by
\begin{equation} \label{eq:MidPriceDynamics}
    ds^i_t = \mu^i_t\,dt + \alpha^i_t\mathbb{E}\left[ \frac{1}{N}\sum_{j=1}^N \nu^j_t \Bigg| \mathcal{F}^i_t\right] \,dt + dP^i_t,
\end{equation}
where $ P^i_t \in \mathbb{M}_i,$ and $\left\{ \alpha^i_t\right\}_{0\leqslant t \leqslant T}$ is the permanent price impact parameter. 

Several remarks are in order. We observe that, even if $\mathbb{F}^i = \mathbb{F}^j,$ $i\neq j,$ we do not necessarily require that $\mu^i = \mu^j,$ or $\alpha^i = \alpha^j,$ or $P^i = P^j.$ We can interpret this as distinct beliefs between agents $i$ and $j.$ Particularly, we do not assume the existence of a fundamental price immediately perceived by every trader, but rather that each player $i \in \mathcal{N}$ negotiates according to a price $s^i,$ e.g., that they compute using the parameters they estimated. This might be a pertinent supposition once we notice that parameters such as $\mu^i$ and $\alpha^i$ must be estimated from data, and this can lead agents to assume distinct evolution to prices such as \eqref{eq:MidPriceDynamics} --- specially in short time horizons. From a mathematical viewpoint, allowing for a general heterogeneity of the parameters does not constrain the analysis, although the pertinent homogeneous parameters hypothesis can be helpful to specialize our results.

Alternatively, we can think that there exist correct, although uncertain, market parameters $\alpha$ and $\mu,$ i.e., which are $\mathbb{F}-$progressively measurable, but not necessarily $\mathbb{F}^i-$progressively measurable. In this context, it is natural to consider that trader $i$ utilizes $\alpha^i_t = \mathbb{E}\left[\alpha_t|\mathcal{F}^i_t\right]$ and $\mu^i_t = \mathbb{E}\left[\mu_t|\mathcal{F}^i_t\right],$ akin to models with latent processes, see \cite{bayraktar2018mini,bismuth2019portfolio,casgrain2019trading,firoozi2018mean}. If we were to assume this, then there is no difference in agents' beliefs having the same level of information.

We also emphasize that, for $j\neq i,$ the strategy $\nu^j$ of trader $j$ need not be $\mathbb{F}^i-$adapted; hence, the average
$$
\frac{1}{N}\sum_{j=1}^N \nu^j
$$
is not necessarily $\mathbb{F}^i-$adapted, whence we assume that player $i$ projects it in the way we describe in \eqref{eq:MidPriceDynamics}.

Since she also undergoes a temporary price impact, usually modeled to be proportional to her trading rate through a stochastic coefficient $\left\{ \kappa^i_t \right\}_{0\leqslant t \leqslant T},$ we assume that her transaction price per share is
$$
\widehat{s}^i_t = s^i_t + \kappa^i_t\nu^i_t.
$$
In this way, her cash process $c^i$ has the dynamics
\begin{equation} \label{eq:CashDynamics}
    dc^i_t = -\widehat{s}^i_t\nu^i_t\,dt.
\end{equation}

The last preference of trader $i$ that we will introduce is her (stochastic) risk aversion parameter $\lambda^i = \left\{ \lambda^i_t \right\}_{0 \leqslant t \leqslant T}.$ It will play an important role in her performance criteria; prior to describing those, we proceed to stipulate the general assumptions that we require to be valid throughout this work --- in particular, these conditions serve to ensure that the objective criteria we assign to the players are well-defined.

\textbf{Standing assumptions.} Let us maintain $i \in \mathcal{N}$ fixed. Henceforth, we consider stochastic processes $\left\{\alpha^i_t\right\}_{0\leqslant t \leqslant T},$ $\left\{\kappa^i_t\right\}_{0\leqslant t \leqslant T},$ $\left\{\mu^i_t\right\}_{0\leqslant t \leqslant T}$ and $\left\{\lambda^i_t\right\}_{0\leqslant t \leqslant T},$ all of which are $\mathbb{F}^i-$progressively measurable, satisfying the following conditions:
\begin{itemize}
    \item[\textbf{A1}] There exist positive constants $\underline{\alpha}^i,\, \underline{\kappa}^i,\, \underline{\lambda}^i, \, \overline{\alpha}^i,\, \overline{\kappa}^i$ and $\overline{\lambda}^i$ such that
    $$
    \underline{\alpha}^i \leqslant \alpha^i_t \leqslant \overline{\alpha}^i,\, \underline{\kappa}^i \leqslant \kappa^i_t \leqslant \overline{\kappa}^i_t \text{ and } \underline{\lambda}^i \leqslant \lambda^i_t \leqslant \overline{\lambda}^i,
    $$
    for $dt \times d\mathbb{P}-$a.e.a.s.;
    \item[\textbf{A2}] The drift $\left\{ \mu^i_t \right\}_{0\leqslant t \leqslant T}$ belongs to $\mathbb{L}^2;$
    \item[\textbf{A3}] The processes $\left\{ \alpha^i_t \right\}_{0\leqslant t \leqslant T}$ are semimartingales of the form
    $$
    d\alpha^i_t = \beta^i_t\,dt + dM^{\alpha^i}_t,
    $$
    where $\left\{  \beta^i_t \right\}_{0\leqslant t \leqslant T}$ is an essentially bounded process, uniformly on time, and $M^{\alpha^i} \in \mathbb{M}_i.$
\end{itemize}

The following quantities will figure in the estimates of some of this paper's main results: for each real number $u,$ we set
\begin{equation} \label{eq:KeyConstants}
    \begin{cases}
        c_1(u) := \min_{i\in\mathcal{N}}\inf_{0\leqslant t \leqslant T} \left(\frac{1}{2\kappa^i_t} - \frac{ u^2}{8\left( \kappa^i_t \right)^2} \right) \\
        \text{and } c_2(u) := \min_{i\in\mathcal{N}}\inf_{0\leqslant t \leqslant T}\left(2\lambda^i_t - \frac{1}{N}\beta^i_t - \frac{\left(\alpha^i_t\right)^2}{2}u^2 \right).
    \end{cases}
\end{equation}
We also write, from now on,
$$
\underline{\kappa} := \min_{i \in \mathcal{N}} \underline{\kappa}^i,\, \overline{\kappa} := \max_{i \in \mathcal{N}} \overline{\kappa}^i,\, \underline{\alpha} := \min_{i \in \mathcal{N}} \underline{\alpha}^i,\, \overline{\alpha} := \max_{i \in \mathcal{N}} \overline{\alpha}^i,\, \underline{\lambda} := \min_{i \in \mathcal{N}} \underline{\lambda}^i,\,\overline{\lambda} := \max_{i \in \mathcal{N}} \overline{\lambda}^i, 
$$
as well as
$$
\overline{\beta}^i := \sup_{t\in\left[0,T\right]} \esssup \left|\beta^i_t\right| \text{ and } \overline{\beta} := \max_{i \in \mathcal{N}} \overline{\beta}^i.
$$

The dynamic assumption we made on the parameter $\alpha^i$ holds if it is of the form 
$$
\alpha^i_t = f\left( \boldsymbol{Y}_t \right) \hspace{1.0cm} \left(t \in \left[0,T\right]\right),
$$
where:
\begin{itemize}
    \item For some $d \geqslant 1$, the function $f : \mathbb{R}^d \rightarrow \mathbb{R}$ is strictly positive, has essentially bounded weak derivatives up to order two (i.e., $f \in W^{2,\infty}\left(\mathbb{R}^d\right)$), and $\nabla f$ has compact support;
    \item The $d-$dimensional factor $\boldsymbol{Y}$ satisfies
    $$
    d\boldsymbol{Y}_t = \boldsymbol{\beta}^{\boldsymbol{Y}}_t\,dt + d\boldsymbol{M}^{\boldsymbol{Y}}_t;
    $$
    \item The drift $\left\{ \boldsymbol{\beta}^{\boldsymbol{Y}}_t \right\}_{0\leqslant t \leqslant T}$ is an $\mathbb{F}^i-$progressively measurable stochastic process, and $\left\{ \boldsymbol{\beta}^{\boldsymbol{Y}}_t \mathbb{I}_{ \left\{ \boldsymbol{Y}_t \in \spt\left(\nabla f\right) \right\} } \right\}_{0\leqslant t \leqslant T}$ is essentially bounded;
    \item The martingale $\boldsymbol{M}^{\boldsymbol{Y}} \in \mathbb{M}_i$ is such that $\left\{ \left\langle \boldsymbol{M}^{\boldsymbol{Y}} \right\rangle_t \mathbb{I}_{ \left\{ \boldsymbol{Y}_t \in \spt\left(\nabla f\right) \right\} } \right\}_{0\leqslant t \leqslant T}$ is essentially bounded.
\end{itemize}
Above, $\mathbb{I}_S$ denotes the indicator function of a set $S.$ A particular instance of this class is, e.g., when we take $\boldsymbol{Y}$ to be a one-dimensional Ornstein-Uhlenbeck or Cox-Ingersoll-Ross process, and $f(y) =\left( \underline{\alpha}^i \vee y \right) \wedge \overline{\alpha}^i,$ $0<\underline{\alpha}^i<\overline{\alpha}^i.$

\textbf{Optimization criteria.} Let us assume that the agent $i$ benchmarks her terminal performance by her initial wealth marked-to-market, i.e., $c^i_0 + q^i_0 s^i_0,$ and utilizes a quadratic penalty for holding inventory or ending up with it:
\begin{align}\label{eq:main_payoff}
    \begin{split}
        J_i(\nu^i;\boldsymbol{\nu}^{-i}) &= \mathbb{E}\left[c^i_T + q^i_T s^i_T - \int_0^T \lambda^i_t \left(q^i_t\right)^2\,dt - A^i\left(q^i_T\right)^2  - \left(c^i_0 + q^i_0 s^i_0 \right) \right] \\
        &= \mathbb{E}\Bigg[\int_0^T \left(-\kappa^i_t \left( \nu^i_t \right)^2 -\lambda^i_t\left(q^i_t\right)^2 + \frac{\alpha^i_t q^i_t}{N}\sum_{j= 1}^N \nu^j_t + q^i_t\mu^i_t\right) dt - A^i (q^i_T)^2 \Bigg].
    \end{split}
\end{align}
For strategy profiles $\boldsymbol{\nu} \in \mathbb{A}_{(N)},$ the functional $J_i$ is well-defined by \textbf{A1} and \textbf{A2}. The parameter $A^i > 0$ is a preference of player $i,$ and it represents her terminal inventory penalization. Regarding $\lambda^i,$ we can think that $ \lambda^i_t = \frac{1}{2}\gamma^i \left( \sigma^i_t \right)^2,$ where $\left(\sigma_t^i\right)^2$ is the price's variance from the perspective of the corresponding trader. The constant $\gamma^i$ here stands for the risk aversion level of the trader, akin to the mean-variance modeling, see \cite{almgren2001optimal} (cf. Equations (4), (5) and (15) therein), or to a constant absolute risk aversion (CARA) setting, see \cite[Chapter 6]{cartea2015algorithmic} or \cite[Chapter 3]{gueant2016financial}. 

Let us remark that, although players aim to finish with zero inventory, their initial holdings need not be positive. Thus, if trader $i$ is such that $q^i_0 \leqslant 0$ (respectively, $q^i_0 \geqslant 0$), then she is targeting to acquire (respectively, liquidate) $\left|q^i_0\right|$ shares of the asset. If an agent begins with $q^i_0 \equiv 0,$ then she will carry out an arbitrage program.

\textbf{Nash Equilibria} Our objective is to investigate Nash equilibria determined by the set of functionals $\{J_1,...,J_N\}$ in terms of the following definition.

\begin{definition}[The unconstrained setting]\label{def:nasheq}
A set of admissible strategy profiles ${\boldsymbol{\nu}}^*=\left(\nu^{*1}, \cdots, \nu^{*N}\right)^\intercal \in \mathbb{A}_{(N)}$ is a Nash equilibrium for the unconstrained game if, for each $i \in \mathcal{N},$
\begin{equation}\label{eq:nasheq}
\nu^{*i} = \argmax_{\nu^i \in \mathbb{A}_i} J_{i}\left( \nu^{i}; {\boldsymbol{\nu}}^{*-i}\right).
\end{equation}
\end{definition}

We will also consider equilibria in the constrained setting, i.e., in the framework in which all players demand full execution by terminal time. We introduce the constrained admissible control set
$$
\mathbb{A}_{c,i} := \left\{ \nu^i \in \mathbb{A}_{i} : \int_0^T \nu^i_t\,dt = -q^i_0,\, \mathbb{P}-\text{a.s.} \right\} \hspace{1.0cm} (i\in\mathcal{N}).
$$
Moreover, we set
$$
\mathbb{A}_{c,-i} := \Pi_{j\neq i}\mathbb{A}_{c,j},
$$
$$
\mathbb{A}_{c,i}^0 := \left\{ \nu^i \in \mathbb{A}_{i} : \int_0^T \nu^i_t\,dt = 0,\, \mathbb{P}-\text{a.s.} \right\},
$$
for $i \in \mathcal{N},$ and
$$
\mathbb{A}_{c,(N)} := \Pi_{j=1}^N \mathbb{A}_{c,j},\, \mathbb{S}_{c,(N)} := \mathbb{S}_{(N)} \cap \mathbb{A}_{c,(N)},\, \mathcal{M}_{(N)} := \Pi_{i=1} \mathcal{M}_i,
$$
as well as
$$
\mathbb{A}_{c,(N)}^0 := \Pi_{j=1}^N \mathbb{A}_{c,j}^0.
$$

\begin{definition}[The constrained setting]\label{def:nasheq_Constrained}
A Nash equilibrium for the constrained problem is a stochastic process ${\boldsymbol{\nu}}^*=\left(\nu^{*1}, \cdots, \nu^{*N}\right)^\intercal \in \mathbb{A}_{c,(N)}$ such that
\begin{equation}\label{eq:nasheq_Constrained}
\nu^{*i} = \argmax_{\nu^i \in \mathbb{A}_{c,i}} J_{i}\left( \nu^{i}; {\boldsymbol{\nu}}^{*-i}\right),
\end{equation}
for each $i \in \mathcal{N}.$
\end{definition}

\section{Analysis of the \texorpdfstring{$N-$}{N-}player game: the unconstrained setting} \label{sec:NplayerGame}

\subsection{The general unconstrained setting}

We base our approach here on the variational formulation. It allows us to characterize the speeds of trading comprising the Nash equilibrium, alongside their corresponding inventories, as the solution of an FBSDE system. We develop this below. Our starting point is a lemma.
\begin{lemma} \label{lem:concavity}
    Let us assume the following conditions
    $$
    \underline{\lambda}^i > \frac{\overline{\beta}^i}{2N} \text{ and } A^i > \frac{\overline{\alpha}^i}{2N},
    $$
    for every $i\in \mathcal{N}.$ Then, given $i\in \mathcal{N}$ and $\boldsymbol{\nu}^{-i} \in \mathbb{A}_{-i},$ the functional $w^i \mapsto J_i(w^i;\boldsymbol{\nu}^{-i})$ is strictly concave. 
\end{lemma}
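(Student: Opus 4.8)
The plan is to exploit that, with $\boldsymbol{\nu}^{-i}$ frozen, the map $\nu^i \mapsto J_i(\nu^i;\boldsymbol{\nu}^{-i})$ is a \emph{quadratic} functional: the inventory $q^i_t = q^i_0 + \int_0^t \nu^i_s\,ds$ is affine in $\nu^i$, and each summand of \eqref{eq:main_payoff} is at most quadratic in the pair $(\nu^i,q^i)$. Hence strict concavity is equivalent to the pure second-order part being strictly negative. Concretely, for $w^i \in \mathbb{A}_i$ and a direction $h \in \mathbb{A}_i$, writing $H_t := \int_0^t h_s\,ds$ (so $H_0 = 0$ and $\dot H_t = h_t$), I would expand $\epsilon \mapsto J_i(w^i + \epsilon h;\boldsymbol{\nu}^{-i})$ and collect the coefficient of $\epsilon^2$, namely
\[
Q(h) := \mathbb{E}\left[\int_0^T\left(-\kappa^i_t h_t^2 - \lambda^i_t H_t^2 + \frac{\alpha^i_t}{N}H_t h_t\right)dt - A^i H_T^2\right].
\]
The crucial point is that player $i$'s own rate sits inside the average $\frac1N\sum_j \nu^j$, so perturbing $\nu^i$ produces the genuinely second-order cross term $\frac{\alpha^i_t}{N}H_t h_t$; every other contribution is manifestly nonpositive. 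It then suffices to prove $Q(h) < 0$ whenever $h \neq 0$ in $\mathbb{L}^2$, since this makes $\epsilon \mapsto J_i(w^i + \epsilon h)$ a downward parabola along every segment, hence $J_i$ strictly concave.

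The only indefinite term in $Q(h)$ is the cross term, which I would tame by integration by parts, using $H_t h_t\,dt = \tfrac12\,d(H_t^2)$ together with the semimartingale decomposition $d\alpha^i_t = \beta^i_t\,dt + dM^{\alpha^i}_t$ from \textbf{A3}. Since $t\mapsto H_t^2$ is continuous and of finite variation, the product rule gives $\alpha^i_T H_T^2 = \int_0^T \alpha^i_t\,d(H_t^2) + \int_0^T H_t^2\,d\alpha^i_t$ with no bracket term, whence
\[
\mathbb{E}\left[\frac1N\int_0^T \alpha^i_t H_t h_t\,dt\right] = \frac{1}{2N}\,\mathbb{E}\left[\alpha^i_T H_T^2\right] - \frac{1}{2N}\,\mathbb{E}\left[\int_0^T \beta^i_t H_t^2\,dt\right],
\]
after discarding the expectation of $\int_0^T H_t^2\,dM^{\alpha^i}_t$. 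Substituting back regroups $Q(h)$ as
\[
Q(h) = -\,\mathbb{E}\left[\int_0^T \kappa^i_t h_t^2\,dt\right] - \mathbb{E}\left[\int_0^T \left(\lambda^i_t + \frac{\beta^i_t}{2N}\right)H_t^2\,dt\right] - \mathbb{E}\left[\left(A^i - \frac{\alpha^i_T}{2N}\right)H_T^2\right].
\]

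Now the hypotheses enter precisely where needed. By \textbf{A1} and the definition of $\overline{\beta}^i$ one has $\lambda^i_t + \frac{\beta^i_t}{2N} \geqslant \underline{\lambda}^i - \frac{\overline{\beta}^i}{2N} > 0$ thanks to $\underline{\lambda}^i > \frac{\overline{\beta}^i}{2N}$, and $A^i - \frac{\alpha^i_T}{2N} \geqslant A^i - \frac{\overline{\alpha}^i}{2N} > 0$ thanks to $A^i > \frac{\overline{\alpha}^i}{2N}$; together with $\kappa^i_t \geqslant \underline{\kappa}^i > 0$ this shows all three bracketed factors are nonnegative multipliers of squares, so
\[
Q(h) \leqslant -\underline{\kappa}^i\,\mathbb{E}\left[\int_0^T h_t^2\,dt\right] = -\underline{\kappa}^i\,\|h\|^2 < 0
\]
for every $h \neq 0$, which is the asserted strict concavity. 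The step I expect to be most delicate is the vanishing of $\mathbb{E}[\int_0^T H_t^2\,dM^{\alpha^i}_t]$, i.e.\ justifying that this stochastic integral is a true martingale and not merely a local one. Here I would record that $H \in \mathbb{S}$ — indeed $\sup_{0\leqslant t\leqslant T} H_t^2 \leqslant T\int_0^T h_s^2\,ds$ by Cauchy–Schwarz, so $\mathbb{E}[\sup_t H_t^2] \leqslant T\|h\|^2 < \infty$ — and combine this with $M^{\alpha^i} \in \mathbb{M}_i$ through a localization argument (stopping times reducing the local martingale, then passing to the limit by dominated convergence using the bound on $\sup_t H_t^2$ and $\mathbb{E}[\langle M^{\alpha^i}\rangle_T] < \infty$). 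Everything else reduces to the bookkeeping expansion and an application of \textbf{A1}.
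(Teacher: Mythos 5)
Your proposal is correct and follows essentially the same route as the paper: the paper computes the concavity gap $J_i(\theta w^i + (1-\theta)\widetilde{w}^i;\boldsymbol{\nu}^{-i}) - \theta J_i(w^i;\boldsymbol{\nu}^{-i}) - (1-\theta)J_i(\widetilde{w}^i;\boldsymbol{\nu}^{-i})$, which is exactly $-\theta(1-\theta)\,Q(w^i-\widetilde{w}^i)$ in your notation, and makes it sign-definite by the very same integration-by-parts absorption of the cross term $\frac{\alpha^i_t}{N}H_t h_t$ into the $\beta^i_t$ and $\alpha^i_T$ terms via \textbf{A3}, with the hypotheses entering in the identical places. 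The one step you flag as delicate --- the vanishing of $\mathbb{E}\left[\int_0^T H_t^2\,dM^{\alpha^i}_t\right]$, which the paper discards silently --- does go through by your localization plan, but the clean dominating variable comes from noting that \textbf{A1} and \textbf{A3} make $M^{\alpha^i}$ a \emph{bounded} martingale, so that a pathwise integration by parts bounds the stopped stochastic integrals by a constant multiple of $T\int_0^T h_s^2\,ds$, rather than from $\mathbb{E}\left[\langle M^{\alpha^i}\rangle_T\right]<\infty$ alone (which, absent fourth moments of $H$, does not suffice).
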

\begin{proof}
Let us fix $w^i, \widetilde{w}^i \in \mathbb{A}_i,$ $\boldsymbol{\nu}^{-i} \in \mathbb{A}_{-i},$ $0 \leqslant \theta \leqslant 1.$ Denote by $q^i$ and $\widetilde{q}^i$ the corresponding inventory processes associated with $w^i$ and $\widetilde{w}^i,$ respectively. We have
\begin{align*}
    J_i&( \theta w^i + (1-\theta)\widetilde{w}^i; \boldsymbol{\nu}^{-i}) -\theta J_i(w^i; \boldsymbol{\nu}^{-i}) - (1-\theta) J_i(\widetilde{w}^i; \boldsymbol{\nu}^{-i}) \\
    &= \theta(1-\theta)\mathbb{E}\bigg[\int_0^T \bigg\{\kappa^i_t \left(w^i_t - \widetilde{w}^i_t \right)^2 + \lambda^i_t \left(q^i_t - \widetilde{q}^i_t \right)^2  - \frac{\alpha^i_t}{N}\left( q^i_t - \widetilde{q}^i_t \right)\left( w^i_t - \widetilde{w}^i_t \right) \bigg\}\,dt + A^i\left( q^i_T - \widetilde{q}^i_T \right)^2 \bigg] \\
    &= \theta(1-\theta)\mathbb{E}\bigg[ \int_0^T \left\{ \kappa_t^i \left(w^i_t - \widetilde{w}^i_t \right)^2 + \left( \lambda_t^i + \frac{\beta_t^i}{2N} \right) \left(q^i_t - \widetilde{q}^i_t \right)^2\right\} dt + \left( A^i - \frac{\alpha^i_T}{2N}\right)\left( q^i_T - \widetilde{q}^i_T \right)^2 \bigg] \\
    &\geqslant 0,
\end{align*}
with equality holding above if, and only if, $w^i = \widetilde{w}^i.$ This argument shows the strict concavity of $w^i \mapsto J_i(w^i, \boldsymbol{\nu}^{-i}).$ 
 
\end{proof}


The following definition will be of great technical importance from now on.
\begin{definition}
Given $i\in\mathcal{N},$ $\boldsymbol{\nu} \in \mathbb{A}_{(N)},$ and $w^i \in \mathbb{A}_i,$ the $i-$th partial G\^ateaux derivative of $J_i$ in the point $\boldsymbol{\nu},$ in the direction $w^i,$ is defined as
\begin{equation} \label{eq:GateauxDerivativeDefn}
    \left\langle D_i J_i(\nu^i; \boldsymbol{\nu}^{-i}) , w^i \right\rangle := \lim_{\epsilon \rightarrow 0} \frac{J_i(\nu^i + \epsilon w^i, \boldsymbol{\nu}^{-i}) - J_i(\nu^i, \boldsymbol{\nu}^{-i})}{\epsilon}.
\end{equation}
\end{definition}

\begin{lemma} \label{lem:GateauxDerivWellDefnd}
The G\^ateaux derivative \eqref{eq:GateauxDerivativeDefn} is well-defined for each $i\in \mathcal{N},$ $\boldsymbol{\nu}\in\mathbb{A}_{(N)},$ and $w^i\in \mathbb{A}_i.$
\end{lemma}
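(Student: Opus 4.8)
The plan is to exploit the fact that $J_i$ is a \emph{quadratic} functional of the pair $(\nu^i,q^i)$ and that $\mathbb{A}_i$ is a linear space, so that the scalar map $\epsilon \mapsto J_i(\nu^i + \epsilon w^i,\boldsymbol{\nu}^{-i})$ is a genuine second-degree polynomial in $\epsilon$ with finite coefficients; the difference quotient in \eqref{eq:GateauxDerivativeDefn} is then affine in $\epsilon$, and its limit as $\epsilon \to 0$ is immediate. Concretely, I would fix $i \in \mathcal{N}$, $\boldsymbol{\nu}\in\mathbb{A}_{(N)}$ and $w^i\in\mathbb{A}_i$, and note that $\nu^i+\epsilon w^i\in\mathbb{A}_i$ for every $\epsilon\in\mathbb{R}$, so that by the remark following \eqref{eq:main_payoff} each quantity $J_i(\nu^i+\epsilon w^i,\boldsymbol{\nu}^{-i})$ is a well-defined finite real number. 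Writing $q^i$ for the inventory driven by $\nu^i$ and setting $r^i_t := \int_0^t w^i_s\,ds$, the inventory associated with $\nu^i+\epsilon w^i$ is exactly $q^i + \epsilon r^i$, whereas the aggregate $\frac1N\sum_{j=1}^N \nu^j$ gets perturbed by $\frac{\epsilon}{N}w^i$. Substituting these into \eqref{eq:main_payoff} and expanding the three quadratic contributions, the bilinear interaction term and the linear drift term, one collects the polynomial identity
$$
J_i(\nu^i+\epsilon w^i,\boldsymbol{\nu}^{-i}) = J_i(\nu^i,\boldsymbol{\nu}^{-i}) + \epsilon\, L_i + \epsilon^2\, Q_i,
$$
where $L_i$ gathers the first-order cross terms and $Q_i$ the second-order ones (the perturbation of $\nu^i$ inside the aggregate contributing, in particular, a $\frac{\alpha^i}{N}q^i w^i$ piece to $L_i$ and a $\frac{\alpha^i}{N}r^i w^i$ piece to $Q_i$).

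The substantive step is then to verify that $L_i$ and $Q_i$ are finite real numbers, which is precisely what legitimizes splitting the expectation term by term and reading off the polynomial structure above. Here I would invoke the standing assumptions: by \textbf{A1} the coefficients $\kappa^i,\lambda^i,\alpha^i$ are essentially bounded, by \textbf{A2} one has $\mu^i\in\mathbb{L}^2$, and $\nu^i,w^i\in\mathbb{L}^2$ by definition of $\mathbb{A}_i$. Moreover, the inventories $q^i$ and $r^i$ in fact lie in $\mathbb{S}$: from Cauchy--Schwarz,
$$
\sup_{0\leqslant t \leqslant T} \left(q^i_t\right)^2 \leqslant 2\left(q^i_0\right)^2 + 2T\int_0^T \left(\nu^i_s\right)^2\,ds,
$$
and taking expectations gives $q^i\in\mathbb{S}$ (and likewise $r^i\in\mathbb{S}$, with $r^i_0=0$). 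Consequently every summand appearing in $L_i$ and $Q_i$ is the expectation of a product of two square-integrable factors weighted by a bounded coefficient, so a further application of Cauchy--Schwarz yields finiteness of each term, hence of $L_i$ and $Q_i$.

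With $L_i,Q_i\in\mathbb{R}$ secured, the difference quotient in \eqref{eq:GateauxDerivativeDefn} equals $L_i+\epsilon Q_i$, whose limit as $\epsilon\to 0$ is $L_i$; this shows the G\^ateaux derivative exists and, incidentally, identifies its value. I do not anticipate a genuine obstacle in this proof — the only care required is the integrability bookkeeping for the mixed terms, and in particular the observation that the inventories belong to $\mathbb{S}$, which is what guarantees that the terminal cross-product $A^i q^i_T r^i_T$ and the running cross-products $\lambda^i q^i r^i$ and $\alpha^i r^i \frac1N\sum_j \nu^j$ are integrable.
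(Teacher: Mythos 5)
Your proof is correct and is essentially the paper's own argument: both exploit the fact that $J_i$ is quadratic in $(\nu^i,q^i)$, expand $J_i(\nu^i+\epsilon w^i;\boldsymbol{\nu}^{-i})$, and pass to the limit, with integrability of each cross term secured by \textbf{A1}, \textbf{A2}, Cauchy--Schwarz, and $\nu^i,w^i\in\mathbb{L}^2$. The only difference is organizational --- the paper computes the limit of the difference quotient term by term ($I_1^\epsilon,\dots,I_6^\epsilon$) and then integrates by parts to display the derivative explicitly, whereas you package the same expansion as the polynomial identity $J_i(\nu^i+\epsilon w^i;\boldsymbol{\nu}^{-i})=J_i(\nu^i;\boldsymbol{\nu}^{-i})+\epsilon L_i+\epsilon^2 Q_i$, which makes the limit immediate once $L_i,Q_i$ are finite.
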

\begin{proof}
Let us write
$$
\frac{J_i(\nu^i + \epsilon w^i, \boldsymbol{\nu}^{-i}) - J_i(\nu^i, \boldsymbol{\nu}^{-i})}{\epsilon} = I_1^\epsilon + I_2^\epsilon + I_3^\epsilon + I_4^\epsilon + I_5^\epsilon +  I_6^\epsilon,
$$
where
$$
\begin{cases}
I_1^\epsilon := -\mathbb{E}\left[ \int_0^T \kappa^i_t\left\{ \frac{\left( \nu^i + \epsilon w^i_t \right)^2 - \left( \nu^i_t \right)^2}{\epsilon} \right\}\,dt\right],\\
I_2^\epsilon := -\mathbb{E}\left[ \int_0^T \lambda^i_t\left\{ \frac{\left( q^i + \epsilon \int_0^t w^i_u\,du \right)^2 - \left( q^i_t \right)^2}{\epsilon} \right\}\,dt\right],\\
I_3^\epsilon := \frac{1}{N\epsilon}\mathbb{E}\left[\int_0^T \alpha^i_t \left\{  \left(q^i_t + \epsilon\int_0^t w^i_u\,du \right)\left( \nu^i_t + \epsilon w^i_t\right) - q^i_t \nu^i_t \right\}\,dt\right],\\
I_4^\epsilon := \frac{1}{N\epsilon}\mathbb{E}\left[\int_0^T \alpha^i_t\left\{\left(q^i_t + \epsilon\int_0^t w^i_u\,du \right)\sum_{j\neq i}\nu^j_t - q^i_t\sum_{j\neq i} \nu^j_t \right\}\,dt\right],\\
I_5^\epsilon := \frac{1}{\epsilon} \mathbb{E}\left[ \int_0^T \left\{ \left(q^i_t + \epsilon \int_0^t w^i_u\,du \right)\mu^i_t - q^i_t \mu^i_t \right\}\,dt\right], \\
I_6^\epsilon := -\mathbb{E}\left[ A^i\left\{ \frac{\left( q^i_T + \epsilon\int_0^Tw^i_t\,dt\right)^2 - \left(q^i_T\right)^2 }{\epsilon} \right\} \right] .
\end{cases}
$$
We notice that
\begin{equation} \label{eq:I1eps}
    I_1^\epsilon = -\mathbb{E}\left[\int_0^T 2\kappa^i_t w^i_t\nu^i_t \,dt \right] + \epsilon\|w^i\|_0^2 \xrightarrow{\epsilon \rightarrow 0} -\mathbb{E}\left[\int_0^T 2\kappa^i_t w^i_t\nu^i_t \,dt \right].
\end{equation}
Similarly, since $\left\{ \int_0^t w^i_u\,du \right\}_{0\leqslant t \leqslant T} \in \mathbb{L}^2,$ we prove that
\begin{equation} \label{eq:I2eps}
    \lim_{\epsilon \rightarrow 0} I_2^\epsilon = -\mathbb{E}\left[ \int_0^T 2\lambda_t q^i_t \int_0^t w^i_u\,du \,dt \right],
\end{equation}
\begin{equation} \label{eq:I3eps}
    \lim_{\epsilon \rightarrow 0} I_3^\epsilon = \mathbb{E}\left[ \int_0^T \frac{\alpha^i_t}{N}\left(\nu^i_t \int_0^t w^i_u\,du + q^i_t w^i_t \right)\,dt \right],
\end{equation}
\begin{equation} \label{eq:I4eps}
    \lim_{\epsilon \rightarrow 0} I_4^\epsilon = \mathbb{E}\left[\int_0^T\frac{\alpha^i_t}{N}\sum_{j\neq i}\nu^j_t \int_0^t w^i_u\,du\,dt \right],
\end{equation}
\begin{equation} \label{eq:I5eps}
    \lim_{\epsilon \rightarrow 0} I_5^\epsilon = \mathbb{E}\left[\int_0^T\mu^i_t \int_0^t w^i_u\,du\,dt \right],
\end{equation}
and
\begin{equation} \label{eq:I6eps}
    \lim_{\epsilon \rightarrow 0} I_6^\epsilon = \mathbb{E}\left[ -2A^i q^i_T \int_0^T w^i_t\,dt \right].
\end{equation}
Gathering \eqref{eq:I1eps}-\eqref{eq:I5eps} together, and integrating by parts when necessary, we deduce that the limit in the right-hand side of \eqref{eq:GateauxDerivativeDefn} exists and is equal to
\begin{align} \label{eq:GateauxDerivativeFormula}
    \begin{split}
        \left\langle D_iJ_i\left(\nu^i;\boldsymbol{\nu}^{-i}\right), w^i\right\rangle = \mathbb{E}\Bigg[\int_0^T w^i_t\Bigg\{&-2\kappa^i_t \nu^i_t -2 \int_t^T \lambda^i_u q^i_u\,du + \frac{\alpha^i_t}{N}q^i_t \\
         &+ \int_t^T \frac{\alpha^i_u}{N}\sum_{j=1}^N  \nu^j_u\,du + \int_t^T \mu^i_u\,du -2A^iq^i_T\Bigg\}\,dt \Bigg].
    \end{split}
\end{align}  
\end{proof}

\begin{corollary} \label{cor:PartialOptmFBSDE}
    Let us suppose that the assumptions of Lemma \ref{lem:concavity} hold. We consider $i\in\mathcal{N}$ and $\boldsymbol{\nu}^{ -i} \in \mathbb{A}_{-i}.$ A strategy $\nu^{*i} \in \mathbb{A}_i$ solves the optimization problem
    \begin{equation} \label{eq:MaximizeIthDirection}
        \nu^{*i} = \argmax_{\nu^i \in \mathbb{A}_i} J_i\left( \nu^i; \boldsymbol{\nu}^{-i}\right)
    \end{equation}
    if, and only if, 
    \begin{equation} \label{eq:GateauxDerivativeIsZero}
        \left\langle D_i J_i\left( \nu^{*i};\boldsymbol{\nu}^{-i} \right) , w^i \right\rangle = 0,
    \end{equation}
    for all $w^i \in \mathbb{A}_i.$ Consequently, $\nu^{*i}$ must solve
    \begin{equation} \label{eq:IthFBSDE}
        \begin{cases}
        2\kappa^i_t\nu^{*i}_t = x^{*i}_t,\\
        dq^{*i}_t = \frac{x^{*i}_t}{2\kappa^i_t}\,dt,\\
        -dx^{*i}_t = -2\lambda^i_t q^{*i}_t\,dt + \mathbb{E}\left[\frac{\alpha^i_t}{N}\sum_{j\neq i}  \nu^j_t\Big|\mathcal{F}^i_t\right]\,dt -\frac{q^{*i}_t}{N}\beta^i_t\,dt + \mu^i_t\,dt - dM^{*i}_t,\\
        -x_T^{*i} = \left(2A^i - \frac{\alpha^i_T}{N} \right)q^{*i}_T,
        \end{cases}
    \end{equation}
    for some $ M^{*i} \in \mathbb{M}_i.$
\end{corollary}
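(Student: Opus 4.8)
The plan is to separate the statement into the variational characterization (the ``if and only if'') and the derivation of the FBSDE (the ``consequently''), treating each with the machinery already established.

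For the equivalence between \eqref{eq:MaximizeIthDirection} and \eqref{eq:GateauxDerivativeIsZero}, I would first note that $\mathbb{A}_i$ is a linear subspace of $\mathbb{L}^2$, so for any competitor $\nu^i$ the displacement $w^i := \nu^i - \nu^{*i}$ is again admissible. Necessity of \eqref{eq:GateauxDerivativeIsZero} is then immediate: if $\nu^{*i}$ is optimal, the scalar map $\epsilon \mapsto J_i(\nu^{*i} + \epsilon w^i;\boldsymbol{\nu}^{-i})$ attains an interior maximum at $\epsilon=0$ for each fixed $w^i \in \mathbb{A}_i$, and since it is differentiable at $0$ by Lemma \ref{lem:GateauxDerivWellDefnd}, its derivative---precisely $\langle D_iJ_i(\nu^{*i};\boldsymbol{\nu}^{-i}),w^i\rangle$---vanishes. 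For sufficiency I would invoke the strict concavity granted by Lemma \ref{lem:concavity}: a concave, G\^ateaux-differentiable functional lies below each of its tangent hyperplanes, i.e. $J_i(\nu^i;\boldsymbol{\nu}^{-i}) \leqslant J_i(\nu^{*i};\boldsymbol{\nu}^{-i}) + \langle D_iJ_i(\nu^{*i};\boldsymbol{\nu}^{-i}),\nu^i-\nu^{*i}\rangle$, obtained by letting $\theta\downarrow 0$ in the concavity inequality along the segment joining $\nu^{*i}$ and $\nu^i$. If \eqref{eq:GateauxDerivativeIsZero} holds, the linear term drops and $\nu^{*i}$ maximizes $J_i(\cdot;\boldsymbol{\nu}^{-i})$ (uniquely, by strictness).

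The core of the argument is to convert \eqref{eq:GateauxDerivativeIsZero} into \eqref{eq:IthFBSDE} via the explicit formula \eqref{eq:GateauxDerivativeFormula}. Writing $\xi^i_t$ for the bracketed integrand there evaluated at $\nu^{*i}$, the fact that $w^i$ ranges over \emph{all} $\mathbb{F}^i$-progressively measurable elements of $\mathbb{L}^2$ forces the $\mathbb{F}^i$-optional projection of $\xi^i$ to vanish $dt\times d\mathbb{P}$-a.e., i.e. $\mathbb{E}[\xi^i_t\mid\mathcal{F}^i_t]=0$ for a.e. $t$. I would then set $x^{*i}_t := 2\kappa^i_t\nu^{*i}_t$, which yields the first two lines of \eqref{eq:IthFBSDE} at once from $dq^{*i}_t=\nu^{*i}_t\,dt$; taking conditional expectations and isolating $x^{*i}_t$ presents it as the $\mathcal{F}^i_t$-conditional expectation of a terminal payoff plus a running integral.

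The decisive step---and the one I expect to be the main obstacle---is recasting this conditional-expectation representation as a bona fide BSDE producing both the terminal condition and the exact drift in the third line of \eqref{eq:IthFBSDE}. The difficulty is twofold: the running integral carries the self-interaction term $\tfrac{1}{N}\alpha^i_u\nu^{*i}_u$, and neither $\sum_{j\neq i}\nu^j$ nor the forward integrals are $\mathbb{F}^i$-adapted. I would address the first by integrating $\int_t^T\alpha^i_u\,dq^{*i}_u$ by parts, invoking \textbf{A3} to substitute $d\alpha^i_u=\beta^i_u\,du+dM^{\alpha^i}_u$: the boundary terms combine with the terminal payoff to yield $-x^{*i}_T=(2A^i-\tfrac{\alpha^i_T}{N})q^{*i}_T$ and to cancel the instantaneous $\tfrac{\alpha^i_t}{N}q^{*i}_t$, the term $-\tfrac{1}{N}\int_t^T q^{*i}_u\beta^i_u\,du$ supplies the $-\tfrac{q^{*i}_t}{N}\beta^i_t$ drift, and the $M^{\alpha^i}$-integral drops out under $\mathbb{E}[\,\cdot\mid\mathcal{F}^i_t]$ by the martingale property. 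For the second difficulty I would replace each non-adapted integrand by its $\mathbb{F}^i$-optional projection through the tower property and Fubini---this turns $\tfrac{\alpha^i_u}{N}\sum_{j\neq i}\nu^j_u$ into $\mathbb{E}[\tfrac{\alpha^i_t}{N}\sum_{j\neq i}\nu^j_t\mid\mathcal{F}^i_t]$---after which $x^{*i}_t$ equals an $\mathbb{F}^i$-martingale minus the integral of an $\mathbb{F}^i$-adapted drift; differentiating identifies $M^{*i}$ and closes the system. Finally, square-integrability of $M^{*i}$, hence $M^{*i}\in\mathbb{M}_i$, follows from \textbf{A1}--\textbf{A2} together with $q^{*i}\in\mathbb{S}$ and $\boldsymbol{\nu}\in\mathbb{A}_{(N)}$.
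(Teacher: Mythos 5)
Your proposal is correct and follows essentially the same route as the paper's proof: the variational equivalence rests on the same concavity/tangent-plane argument (which the paper simply cites), and the passage to \eqref{eq:IthFBSDE} uses the same ingredients --- formula \eqref{eq:GateauxDerivativeFormula}, projection onto $\mathbb{F}^i$, the tower property to replace the non-adapted aggregate by its conditional expectation, the semimartingale decomposition of $\alpha^i$ from \textbf{A3} to produce the $\beta^i$ drift and the terminal condition, and an explicit conditional-expectation construction of $M^{*i}$. The only difference is bookkeeping: you impose $\mathbb{E}\left[\xi^i_t\mid\mathcal{F}^i_t\right]=0$ and unwind the resulting backward representation of $x^{*i}$, whereas the paper first rewrites the G\^ateaux derivative in forward adapted form (its \eqref{eq:GateauxDerivativeFormula2}) with the martingale $M^i$ defined upfront and then sets the adapted bracket to zero; the two computations are identical in content.
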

\begin{proof}
The first part is standard, see \cite[Theorem 1.3]{lions1971optimal} or item ($a$) of the proof of \cite[Chapter II, Proposition 2.1]{ekeland1999convex}. To demonstrate the other half, we rewrite \eqref{eq:GateauxDerivativeFormula}, with the aid of the tower property of conditional expectations, in the following manner:
\begin{align} \label{eq:GateauxDerivativeFormula2}
    \begin{split}
        \langle D_iJ_i\left(\nu^i;\boldsymbol{\nu}^{-i}\right), w^i\rangle = \mathbb{E}\Bigg[\int_0^T w^i_t\Bigg\{&-x^i_t + 2 \int_0^t \lambda^i_u q^i_u\,du + \frac{\alpha^i_t}{N}q^i_t - \frac{1}{N}\int_0^t \frac{\alpha^i_u}{2\kappa^i_u} x^i_u\,du  \\
        & - \int_0^t \mathbb{E}\left[\frac{\alpha^i_u}{N}\sum_{j\neq i}  \nu^j_u \Bigg|\mathcal{F}^i_u\right]\,du - \int_0^t \mu^i_u\,du + M^i_t\Bigg\}\,dt \Bigg].
    \end{split}
\end{align}
where $x^i = 2\kappa^i_t\nu^i_t$ and
\begin{align*}
    \begin{split}
        M^i_t :=&\, \mathbb{E}\left[ \int_0^T \left( - 2\lambda^i_u q^i_u + \frac{1}{N} \frac{\alpha^i_u}{2\kappa^i_u} x^i_u + \frac{\alpha^i_u}{N}\sum_{j\neq i}  \nu^j_u + \mu^i_u\right)\,du -2A^i q^i_T \Bigg| \mathcal{F}^i_t\right] \\
        &- \int_0^t \left\{  \mathbb{E}\left[\frac{\alpha^i_u}{N}\sum_{j\neq i}  \nu^j_u\Bigg|\mathcal{F}^i_t\right] - \mathbb{E}\left[\frac{\alpha^i_u}{N}\sum_{j\neq i}  \nu^j_u\Bigg|\mathcal{F}^i_u\right] \right\}\,du.
    \end{split}
\end{align*}
We emphasize that $ M^i $ belongs to $\mathbb{M}_i.$ We observe that \eqref{eq:GateauxDerivativeIsZero} must be valid for every $w^i \in \mathbb{L}^2$ when $\nu^i = \nu^{*i}.$ Therefore, we conclude that $\nu^{*i}$ solves \eqref{eq:MaximizeIthDirection} if, and only if, there exists $ M^{*i} \in \mathbb{M}_i$ such that $(q^{*i},x^{*i} = 2\kappa^i\nu^{*i},M^{*i})$ is a solution of the FBSDE \eqref{eq:IthFBSDE}.
 
\end{proof}

From Corollary \ref{cor:PartialOptmFBSDE}, we obtain the subsequent characterization of Nash equilibria in the present context.
\begin{corollary} \label{cor:CharactNashHeterogReg}
    Under the assumptions of Lemma \ref{lem:concavity}, a strategy $\boldsymbol{\nu}^*$ is a Nash equilibrium if, and only if, the processes $(\boldsymbol{q}^*,\boldsymbol{\nu}^*,\boldsymbol{M}^*) \in \mathbb{S}_{(N)}\times\mathbb{A}_{(N)}\times \mathbb{M}_{(N)} $ solve the FBSDE
    \begin{equation} \label{eq:FBSDEHeterog}
        \begin{cases}
        \boldsymbol{x}^*_t = \boldsymbol{K}_t\boldsymbol{\nu}_t^*,\\
        d\boldsymbol{q}^*_t = \boldsymbol{K}_t^{-1}\boldsymbol{x}^*_t\,dt,\\
        -d\boldsymbol{x}^*_t = \boldsymbol{\mathcal{P}}_t\left( \boldsymbol{C}_t\boldsymbol{x}^*_t\right)\,dt - \boldsymbol{\Sigma}_t \boldsymbol{q}^*_t\,dt + \frac{1}{N}\boldsymbol{\beta}_t\boldsymbol{q}^*_t\,dt + \boldsymbol{\mu}_t\,dt - d\boldsymbol{M}^*_t,\\
        \boldsymbol{q}^*_0 = \boldsymbol{q}_0 \text{ and } -\boldsymbol{x}^*_T = \boldsymbol{D}_T\boldsymbol{q}^*_T,
        \end{cases}
    \end{equation}
    where $\boldsymbol{M}^* \in \mathbb{M}_{(N)}$ and the stochastic matrix coefficients are given by
    $$
    \begin{cases}
    \boldsymbol{K}_t := \diag\left(2\kappa^i_t\right),\\
    \boldsymbol{C}_t := \left( (1-\delta_{ij})\frac{\alpha^i_t}{2N\kappa^j_t} \right)_{ij},\\
    \boldsymbol{\Sigma}_t := \diag\left( 2\lambda^i_t \right),\\
    \boldsymbol{\beta}_t := \diag\left( \beta^i_t \right),\\
    \boldsymbol{\mu}_t := \left( \mu^1_t,...,\mu^N \right)^\intercal,\\ 
    \text{and } \boldsymbol{D}_t := \diag\left( 2A^i - \frac{\alpha^i_t}{N}\right).
    \end{cases}
    $$
\end{corollary}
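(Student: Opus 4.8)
The plan is to read off the statement from the per-player optimality condition already obtained in Corollary \ref{cor:PartialOptmFBSDE} and then to transcribe the resulting $N$ scalar systems into the single matrix-valued FBSDE \eqref{eq:FBSDEHeterog}. By Definition \ref{def:nasheq}, a profile $\boldsymbol{\nu}^* \in \mathbb{A}_{(N)}$ is a Nash equilibrium precisely when, for every $i \in \mathcal{N}$, the component $\nu^{*i}$ solves $\argmax_{\nu^i \in \mathbb{A}_i} J_i(\nu^i; \boldsymbol{\nu}^{*-i})$. Under the hypotheses of Lemma \ref{lem:concavity}, each of these $N$ individual problems falls under Corollary \ref{cor:PartialOptmFBSDE} with the frozen profile $\boldsymbol{\nu}^{-i} := \boldsymbol{\nu}^{*-i}$. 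Hence $\boldsymbol{\nu}^*$ is a Nash equilibrium if and only if, for each $i$, there is a martingale $M^{*i} \in \mathbb{M}_i$ such that the triple $(q^{*i}, x^{*i} = 2\kappa^i\nu^{*i}, M^{*i})$ solves \eqref{eq:IthFBSDE}, now with the interaction sum taken over the equilibrium controls $\nu^{*j}$, $j \neq i$.

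It then remains to stack these scalar systems and to identify the coefficient matrices. The forward relations $x^{*i}_t = 2\kappa^i_t\nu^{*i}_t$ and $dq^{*i}_t = x^{*i}_t/(2\kappa^i_t)\,dt$ assemble at once into $\boldsymbol{x}^*_t = \boldsymbol{K}_t\boldsymbol{\nu}^*_t$ and $d\boldsymbol{q}^*_t = \boldsymbol{K}_t^{-1}\boldsymbol{x}^*_t\,dt$ with $\boldsymbol{K}_t = \diag(2\kappa^i_t)$; likewise the running penalty, the drift, and the terminal data produce the diagonal matrices $\boldsymbol{\Sigma}_t$, $\boldsymbol{\beta}_t$, the vector $\boldsymbol{\mu}_t$, and $\boldsymbol{D}_T$ componentwise, while the scalar martingales assemble to $\boldsymbol{M}^* \in \mathbb{M}_{(N)}$. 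The one step that requires care is the coupling term: I would substitute the forward relation $\nu^{*j}_t = x^{*j}_t/(2\kappa^j_t)$ into the interaction $\frac{\alpha^i_t}{N}\sum_{j \neq i}\nu^{*j}_t$, which yields exactly the $i$-th entry of $\boldsymbol{C}_t\boldsymbol{x}^*_t$ for $\boldsymbol{C}_t = ((1-\delta_{ij})\alpha^i_t/(2N\kappa^j_t))_{ij}$, the factor $1-\delta_{ij}$ encoding the omission of $j=i$. Since $\alpha^i_t$ is $\mathcal{F}^i_t$-measurable, the conditional expectation $\mathbb{E}[\,\cdot\,|\mathcal{F}^i_t]$ appearing in \eqref{eq:IthFBSDE} acts only on $\sum_{j\neq i}\nu^{*j}_t$, so applying the projection $\boldsymbol{\mathcal{P}}_t$ componentwise to $\boldsymbol{C}_t\boldsymbol{x}^*_t$ reproduces precisely the $N$ conditional-expectation terms, giving the drift $\boldsymbol{\mathcal{P}}_t(\boldsymbol{C}_t\boldsymbol{x}^*_t)$ in \eqref{eq:FBSDEHeterog}.

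Finally I would record the membership $(\boldsymbol{q}^*, \boldsymbol{\nu}^*, \boldsymbol{M}^*) \in \mathbb{S}_{(N)}\times\mathbb{A}_{(N)}\times\mathbb{M}_{(N)}$: the controls lie in $\mathbb{A}_{(N)}$ by admissibility, the martingales in $\mathbb{M}_{(N)}$ by Corollary \ref{cor:PartialOptmFBSDE}, and each inventory satisfies $q^{*i}\in\mathbb{S}_i$ because $q^{*i}_t = q^i_0 + \int_0^t\nu^{*i}_s\,ds$ together with the Cauchy--Schwarz inequality gives $\mathbb{E}[\sup_{t}(q^{*i}_t)^2] \leqslant 2\mathbb{E}[(q^i_0)^2] + 2T\|\nu^{*i}\|^2 < \infty$. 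As every manipulation above is an equivalence, read in either direction, the two characterizations coincide. The only genuine obstacle is the bookkeeping in the coupling term --- ensuring that the off-diagonal structure of $\boldsymbol{C}_t$ and the componentwise projection $\boldsymbol{\mathcal{P}}_t$ match the per-player conditional interaction exactly --- and this is purely a matter of careful indexing rather than new analytic input, since the substantive work (concavity, well-definedness of the G\^ateaux derivative, and the first-order condition) has already been carried out in Lemmas \ref{lem:concavity}--\ref{lem:GateauxDerivWellDefnd} and Corollary \ref{cor:PartialOptmFBSDE}.
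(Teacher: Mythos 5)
Your proposal is correct and is precisely the paper's own argument: the paper gives no separate proof of this corollary, presenting it as an immediate consequence of Corollary \ref{cor:PartialOptmFBSDE} applied player by player, which is exactly the stacking you carry out (including the correct identification of the coupling term with $\boldsymbol{\mathcal{P}}_t\left(\boldsymbol{C}_t\boldsymbol{x}^*_t\right)$ via $\nu^{*j}_t = x^{*j}_t/\left(2\kappa^j_t\right)$, and the $\mathbb{S}_{(N)}$-membership of the inventories). One caveat you inherit silently: stacking \eqref{eq:IthFBSDE} verbatim produces the drift term $-\frac{1}{N}\boldsymbol{\beta}_t\boldsymbol{q}^*_t$, whereas \eqref{eq:FBSDEHeterog} displays $+\frac{1}{N}\boldsymbol{\beta}_t\boldsymbol{q}^*_t$; this sign discrepancy is internal to the paper (the It\^o computation of $d\left(\alpha^i_t q^{*i}_t/N\right)$ behind \eqref{eq:IthFBSDE} shows the minus sign is the correct one), so it is a typo in the stated vector system rather than an error in your reasoning, but a fully careful transcription should flag it instead of asserting that the componentwise and vector forms coincide as written.
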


\begin{theorem} \label{thm:NashEquilibriumGeneralSetting}
Let us assume that the model parameters satisfy
$$
\overline{\alpha}^2<16\underline{\kappa}\left(\underline{\lambda} - \frac{1}{2N}\overline{\beta}\right)
$$
and
$$
\underline{D} := \min_{i \in \mathcal{N} }\left(2A^i - \frac{\overline{\alpha}^i}{N} \right) > 0.
$$
Then, the FBSDE \eqref{eq:FBSDEHeterog} admits a unique solution $\left(\boldsymbol{q}^*,\boldsymbol{\nu}^*,\boldsymbol{M}^*\right) \in \mathbb{S}_{(N)} \times \mathbb{S}_{(N)} \times  \mathbb{M}_{(N)}$ (or, equivalently, $\left(\boldsymbol{q}^*,\boldsymbol{x}^*,\boldsymbol{M}^*\right) \in \mathbb{S}_{(N)} \times \mathbb{S}_{(N)} \times  \mathbb{M}_{(N)}).$
\end{theorem}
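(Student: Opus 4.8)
The plan is to establish both existence and uniqueness simultaneously via the method of continuation in a homotopy parameter, in the spirit of \cite{peng1999fully}, whose engine is a monotonicity (a priori) estimate for the coupled system \eqref{eq:FBSDEHeterog}. I would derive this estimate first, since it does triple duty: it yields uniqueness, it provides the uniform bound promoting the solution to $\mathbb{S}_{(N)}$, and it drives the continuation step for existence. The delicate point throughout is that \eqref{eq:FBSDEHeterog} is of McKean--Vlasov type (through the projection $\boldsymbol{\mathcal{P}}_t$) and lives on general, possibly non-Brownian, filtrations $\mathbb{F}^i$; both features are tamed by the tower property of conditional expectation.

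For the monotonicity estimate I would take two solutions $(\boldsymbol{q},\boldsymbol{x},\boldsymbol{M})$ and $(\widetilde{\boldsymbol{q}},\widetilde{\boldsymbol{x}},\widetilde{\boldsymbol{M}})$, set $\widehat{\boldsymbol{q}}:=\boldsymbol{q}-\widetilde{\boldsymbol{q}}$, $\widehat{\boldsymbol{x}}:=\boldsymbol{x}-\widetilde{\boldsymbol{x}}$, $\widehat{\boldsymbol{M}}:=\boldsymbol{M}-\widetilde{\boldsymbol{M}}$, and apply It\^o's formula to $t\mapsto\widehat{\boldsymbol{x}}_t\cdot\widehat{\boldsymbol{q}}_t$. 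Since $\widehat{\boldsymbol{q}}$ has continuous paths of finite variation and each $\widehat{q}^i$ is $\mathbb{F}^i$-adapted while $\widehat{M}^i\in\mathbb{M}_i$, there is no covariation term and $\mathbb{E}\int_0^T\widehat{\boldsymbol{q}}_t\cdot d\widehat{\boldsymbol{M}}_t=0$. Taking the same initial data ($\widehat{\boldsymbol{q}}_0=0$) and using the terminal relation $-\widehat{\boldsymbol{x}}_T=\boldsymbol{D}_T\widehat{\boldsymbol{q}}_T$ together with $\underline{D}>0$, I obtain
\begin{equation*}
\mathbb{E}\!\int_0^T\!\Big(\widehat{\boldsymbol{x}}_t^\intercal\boldsymbol{K}_t^{-1}\widehat{\boldsymbol{x}}_t+\widehat{\boldsymbol{q}}_t^\intercal\boldsymbol{\Sigma}_t\widehat{\boldsymbol{q}}_t-\tfrac1N\widehat{\boldsymbol{q}}_t^\intercal\boldsymbol{\beta}_t\widehat{\boldsymbol{q}}_t\Big)dt+\underline{D}\,\mathbb{E}\big|\widehat{\boldsymbol{q}}_T\big|^2\leqslant\mathbb{E}\!\int_0^T\!\widehat{\boldsymbol{q}}_t\cdot\boldsymbol{\mathcal{P}}_t\big(\boldsymbol{C}_t\widehat{\boldsymbol{x}}_t\big)\,dt.
\end{equation*}
Because each $\widehat{q}^i_t$ is $\mathcal{F}^i_t$-measurable, the tower property lets me drop the projection on the right-hand side, reducing the cross term to $\mathbb{E}\int_0^T\sum_{i\neq j}\frac{\alpha^i_t}{2N\kappa^j_t}\widehat{q}^i_t\widehat{x}^j_t\,dt$. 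I would then bound each summand by Young's inequality with a single free parameter, assigning an $(\alpha^i_t)^2$-weighted penalty to $\widehat{q}^i$ and a $(\kappa^j_t)^{-2}$-weighted penalty to $\widehat{x}^j$ and summing the $N-1$ partners; the surviving coercivity coefficients on $(\widehat{x}^i)^2$ and $(\widehat{q}^i)^2$ are exactly of the form appearing in $c_1$ and $c_2$ of \eqref{eq:KeyConstants}. A short optimization over the Young parameter shows that the hypothesis $\overline{\alpha}^2<16\underline{\kappa}\big(\underline{\lambda}-\tfrac1{2N}\overline{\beta}\big)$ is a clean sufficient condition making the resulting quadratic form in $(\|\widehat{\boldsymbol{x}}\|,\|\widehat{\boldsymbol{q}}\|)$ strictly coercive, forcing $\widehat{\boldsymbol{q}}\equiv\widehat{\boldsymbol{x}}\equiv0$; this is uniqueness, and with inhomogeneous data it is the stability estimate I need below.

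For existence I would run the continuation on a homotopy $(E_\rho)_{\rho\in[0,1]}$ that interpolates between \eqref{eq:FBSDEHeterog} (at $\rho=1$) and a decoupled reference system (at $\rho=0$), chosen so that the monotone structure above is preserved with a constant uniform in $\rho$ and so that the reference is explicitly solvable: its forward equation is integrated directly, and its backward part is a linear BSDE solved by $x_t=\mathbb{E}[\,\cdot\mid\mathcal{F}^i_t]$ plus the martingale representation in each $\mathbb{F}^i$, which is where the generality of the filtration is accommodated. The induction step is the standard one: assuming $(E_{\rho_0})$ is solvable for all data $(\phi,\psi,\xi)$ in the relevant $\mathbb{L}^2\times\mathbb{L}^2\times L^2(\Omega,\mathcal{F}_T)$ spaces, I solve $(E_{\rho_0+\delta})$ by a Picard map whose contractivity for $\delta$ below a threshold $\delta_0$ — uniform in $\rho_0$ — follows from the very estimate above applied to differences of iterates. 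Starting from $\rho=0$ and iterating reaches $\rho=1$ with $\phi=\psi=\xi=0$, producing a solution of \eqref{eq:FBSDEHeterog}. Finally, the $\mathbb{S}_{(N)}$-regularity of $\boldsymbol{x}$ comes from standard BSDE a priori bounds, that of $\boldsymbol{q}$ from $d\boldsymbol{q}=\boldsymbol{K}^{-1}\boldsymbol{x}\,dt$ with $\boldsymbol{x}\in\mathbb{S}_{(N)}$, and the equivalence $\boldsymbol{x}=\boldsymbol{K}\boldsymbol{\nu}$ with $\boldsymbol{K}=\diag(2\kappa^i)$ bounded above and below transfers everything to $\boldsymbol{\nu}$.

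The main obstacle is the a priori estimate, not the continuation machinery: one must simultaneously (i) absorb the indefinite cross term carrying the permanent-impact coupling into the dissipative $\kappa$- and $\lambda$-terms with the sharp weak-interaction threshold, and (ii) do so in the presence of the conditional-expectation projection and an arbitrary continuous filtration. Both are resolved by the single observation that $\widehat{q}^i$ is $\mathcal{F}^i_t$-measurable, which at once nullifies the stochastic-integral term and removes $\boldsymbol{\mathcal{P}}_t$ from the coupling; after that, the analysis is a coercivity computation governed by the constants in \eqref{eq:KeyConstants}.
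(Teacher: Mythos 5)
Your proposal is correct and follows essentially the same route as the paper: a Peng--Wu continuation whose engine is the monotonicity estimate obtained by applying It\^o's product formula to $\widehat{\boldsymbol{x}}_t\cdot\widehat{\boldsymbol{q}}_t$, with the projection $\boldsymbol{\mathcal{P}}_t$ removed via the tower property (since $\widehat{q}^i_t$ is $\mathcal{F}^i_t$-measurable), the cross term absorbed by Young's inequality, and the optimization over the Young parameter yielding positivity of $c_1,c_2$ from \eqref{eq:KeyConstants} exactly under the stated weak-interaction hypothesis (cf. Remark \ref{rem:KeyRemark}), followed by a contraction step with size uniform in $\rho$. The only cosmetic differences are that the paper runs the contraction directly in $\mathbb{S}_{(N)}\times\mathbb{S}_{(N)}$ via BSDE stability and ODE estimates, and that no martingale representation theorem is (or can be) invoked on these general filtrations --- the martingale term of the reference system is simply the Doob martingale of the data, consistent with the abstract $\boldsymbol{M}\in\mathbb{M}_{(N)}$ in the solution concept.
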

\begin{remark} \label{rem:KeyRemark}
If a constant $\theta$ satisfies
$$
\frac{\overline{\alpha}^2}{4\left(\underline{\lambda} - \frac{1}{2N}\overline{\beta}\right)} < \theta < 4\underline{\kappa},
$$
then taking $a := \sqrt{\theta}$ yields $c_1:=c_1(a)>0$ and $c_2:= c_2(1/a) > 0$ (c.f. \eqref{eq:KeyConstants}).
\end{remark}
\begin{proof}
We will demonstrate this Theorem with a continuation method developed in \cite{peng1999fully}. Our approach is similar to that of \cite[Theorem 4.2]{fujii2020mean}.

Let us consider the set $I$ of all $\rho \in \left[0,1\right[$ for which the FBSDE
    \begin{equation} \label{eq:FBSDEConnectivity}
        \begin{cases}
        d\boldsymbol{q}^\rho_t = \rho\boldsymbol{K}_t^{-1}\boldsymbol{x}^\rho_t\,dt + \boldsymbol{f}_t\,dt,\\
        -d\boldsymbol{x}^\rho_t = -\left(1-\rho\right) \boldsymbol{q}^\rho_t + \rho\left(\boldsymbol{\mathcal{P}}_t\left( \boldsymbol{C}_t\boldsymbol{x}^\rho_t \right) - \boldsymbol{\Sigma}_t \boldsymbol{q}^\rho_t + \frac{1}{N}\boldsymbol{\beta}_t\boldsymbol{q}^\rho_t \right)\,dt  +\, \rho\boldsymbol{\mu}_t + \boldsymbol{g}_t\,dt - d\boldsymbol{M}^\rho_t,\\
        \boldsymbol{q}^\rho_0 = \boldsymbol{q}_0 \text{ and } -\boldsymbol{x}^\rho_T = \left( 1-\rho \right) \boldsymbol{q}^\rho_T + \rho\boldsymbol{D}_T\boldsymbol{q}^\rho_T + \boldsymbol{\eta},
        \end{cases}
    \end{equation}
has a unique solution $\left( \boldsymbol{q}^\rho, \boldsymbol{x}^\rho, \boldsymbol{M}^\rho\right)$ with continuous paths, for any given $\boldsymbol{f},\boldsymbol{g} \in \mathbb{A}_{(N)}$ and $\boldsymbol{\eta} \in \Pi_{i=1}L^2(\Omega,\mathcal{F}^i_T).$ It is immediate to verify that $0\in I.$

We assume $\rho \in I$ and prove that $\rho+\zeta$ will still belong to $I$ for sufficiently small $\zeta> 0.$ Indeed, for each $\left(\boldsymbol{q},\boldsymbol{x}\right),$ the current assumptions guarantee the existence of the solution $\left( \boldsymbol{X},\boldsymbol{Q},\boldsymbol{M}\right)$ of the FBSDE: 
    \begin{equation} \label{eq:FBSDEPfStep1}
        \begin{cases}
        d\boldsymbol{Q}_t = \rho\boldsymbol{K}_t^{-1}\boldsymbol{X}_t\,dt + \zeta\boldsymbol{K}_t^{-1}\boldsymbol{x}_t\,dt + \boldsymbol{f}_t\,dt,\\
        -d\boldsymbol{X}_t = -\left(1-\rho\right) \boldsymbol{Q}_t\,dt + \rho\left(\boldsymbol{\mathcal{P}}_t\left( \boldsymbol{C}_t\boldsymbol{X}_t\right) - \boldsymbol{\Sigma}_t \boldsymbol{Q}_t + \frac{1}{N}\boldsymbol{\beta}_t\boldsymbol{Q}_t \right)\,dt\\
        \hspace{1.4cm} +\, \zeta \boldsymbol{q}_t + \zeta\left(\boldsymbol{\mathcal{P}}_t\left(\boldsymbol{C}_t\boldsymbol{x}_t\right) - \boldsymbol{\Sigma}_t \boldsymbol{q}_t + \frac{1}{N}\boldsymbol{\beta}_t\boldsymbol{q}_t \right)\,dt \\
        \hspace{1.4cm} +\, \left(\rho + \zeta\right)\boldsymbol{\mu}_t + \boldsymbol{g}_t\,dt - d\boldsymbol{M}_t,\\
        \boldsymbol{Q} = \boldsymbol{q}_0 \text{ and } -\boldsymbol{X}_T = \left( 1-\rho \right) \boldsymbol{Q}_T + \rho\boldsymbol{D}_T\boldsymbol{Q}_T + \zeta\boldsymbol{q}_T + \zeta\boldsymbol{D}_T\boldsymbol{q}_T + \boldsymbol{\eta}.
        \end{cases}
    \end{equation}
We will prove that the mapping $\left( \boldsymbol{q},\boldsymbol{x} \right) \mapsto \left( \boldsymbol{Q}, \boldsymbol{X} \right)$ is a contraction, as long as $\zeta > 0$ is sufficiently small. In effect, let $\left(\boldsymbol{q},\boldsymbol{x}\right)$ and $\left(\boldsymbol{q}',\boldsymbol{x}'\right)$ correspond to solutions $\left( \boldsymbol{Q},\boldsymbol{X},\boldsymbol{M} \right)$ and $\left( \boldsymbol{Q}^\prime,\boldsymbol{X}^\prime,\boldsymbol{M}^\prime \right),$ respectively. We write
$$
\begin{cases}
\Delta\boldsymbol{Q} := \boldsymbol{Q} - \boldsymbol{Q}^\prime,\\
\Delta\boldsymbol{X} := \boldsymbol{X} - \boldsymbol{X}^\prime,\\
\text{ and } \Delta\boldsymbol{M} := \boldsymbol{M} - \boldsymbol{M}^\prime.
\end{cases}
$$
On the one hand, using the It\^o's product formula, we infer
\begin{equation} \label{eq:CrossExpec}
    \mathbb{E}\left[ \Delta\boldsymbol{Q}_T \cdot \Delta\boldsymbol{X}_T \right] = I_1 + I_2,
\end{equation}
where
\begin{align}
    \begin{split}
        I_1 := \mathbb{E}\Bigg[\int_0^T\bigg\{& \rho \Delta \boldsymbol{X}_t \cdot \boldsymbol{K}_t^{-1} \Delta\boldsymbol{X}_t - \rho \Delta \boldsymbol{Q}_t \cdot \boldsymbol{\mathcal{P}}_t\left( \boldsymbol{C}_t\Delta \boldsymbol{X}_t \right) \\
        & +\Delta\boldsymbol{Q}_t \cdot \left[ \left(1 -\rho\right)\boldsymbol{I} + \rho\left(\boldsymbol{\Sigma}_t - \frac{1}{N}\boldsymbol{\beta}_t\right) \right]\Delta \boldsymbol{Q}_t \bigg\}\,dt \Bigg],
    \end{split}
\end{align}
and
\begin{align}
    \begin{split}
        I_2 := \zeta\mathbb{E}\Bigg[\int_0^T \bigg\{\Delta\boldsymbol{X}_t\cdot \boldsymbol{K}_t^{-1} \Delta \boldsymbol{x}_t  + \Delta\boldsymbol{Q}_t \cdot \left( \boldsymbol{I} + \boldsymbol{\Sigma}_t - \frac{1}{N}\boldsymbol{\beta}_t \right)\Delta\boldsymbol{q}_t  + \Delta\boldsymbol{Q}_t \cdot \boldsymbol{\mathcal{P}}_t\left( \boldsymbol{C}_t\Delta\boldsymbol{x}_t\right) \bigg\}\,dt \Bigg].
    \end{split}
\end{align}

With the aid of Young's inequality, the conditional version Jensen's inequality, and the tower property of conditional expectations, we obtain
\begin{equation}
    |\mathbb{E} [ \Delta \boldsymbol{Q}_t \cdot \boldsymbol{\mathcal{P}}_t\left( \boldsymbol{C}_t\Delta \boldsymbol{X}_t \right) ] | \leqslant \mathbb{E}\left[ \frac{a^2}{2}\sum_{i=1}^N \frac{\left(\Delta X^i_t\right)^2}{\left(2\kappa^i_t\right)^2}  + \frac{1}{2 a^2}\sum_{i=1}^N\left(\alpha^i_t\right)^2 \left(\Delta Q^i_t\right)^2 \right],
\end{equation}
where we have written 
$$
\Delta\boldsymbol{Q}_t = \left(\Delta Q^1_t, ..., \Delta Q^N_t \right)^\intercal \text{ and } \Delta\boldsymbol{X}_t = \left(\Delta X^1_t, ..., \Delta X^N_t \right)^\intercal.
$$
Let us fix $a, c_1$ and $c_2$ as described in Remark \ref{rem:KeyRemark}. Therefore, we estimate
\begin{align} \label{eq:I1Estimate}
    \begin{split}
        I_1 &\geqslant \rho c_1\|\Delta \boldsymbol{X} \|^2 + \left[ \left(1-\rho\right) + \rho c_2 \right]\|\Delta \boldsymbol{Q}\|^2 \\
        &\geqslant \left( 1 \wedge c_2 \right) \left\|\Delta \boldsymbol{Q}\right\|^2,
    \end{split}
\end{align}
and also
\begin{align} \label{eq:I2Estimate}
    \left| I_2 \right| \leqslant C\zeta \mathbb{E}\left[\int_0^T \left\{ \left|\Delta\boldsymbol{X}_t\right| \left|\Delta \boldsymbol{x}_t\right| + \left|\Delta\boldsymbol{Q}_t\right|\left( \left|\Delta \boldsymbol{x}_t\right| + \left|\Delta \boldsymbol{q}_t\right| \right) \right\}\,dt\right].
\end{align}
Altogether, from \eqref{eq:CrossExpec}, \eqref{eq:I1Estimate} and \eqref{eq:I2Estimate} we deduce
\begin{align} \label{eq:ContinuationPart1}
    \begin{split}
        \mathbb{E} [ \Delta\boldsymbol{Q}_T \cdot \Delta\boldsymbol{X}_T ] \geqslant \gamma \left\|\Delta \boldsymbol{Q}\right\|^2 - C\zeta \mathbb{E}\left[\int_0^T \left\{ \left|\Delta\boldsymbol{X}_t\right| \left|\Delta \boldsymbol{x}_t\right| + \left|\Delta\boldsymbol{Q}_t\right|\left( \left|\Delta \boldsymbol{x}_t\right| + \left|\Delta \boldsymbol{q}_t\right| \right) \right\}\,dt\right].
    \end{split}
\end{align}

On the other hand, the terminal condition of $\Delta\boldsymbol{X}$ gives
\begin{align} \label{eq:ContinuationPart2}
    \begin{split}
        \mathbb{E}[ \Delta\boldsymbol{Q}_T \cdot \Delta\boldsymbol{X}_T ] &= -\mathbb{E}[\Delta\boldsymbol{Q}_T\cdot [\left( 1-\rho\right)\Delta\boldsymbol{Q}_T+\rho\boldsymbol{D}_T\Delta\boldsymbol{Q}_T +\zeta\Delta\boldsymbol{q}_T +\zeta\boldsymbol{D}_T\Delta\boldsymbol{q}_T ] ] \\
        &\leqslant - c_0\|\Delta\boldsymbol{Q}_T\|_{L^2\left(\Omega,\mathcal{F}\right)^N}^2 + C\zeta \mathbb{E}\left[\left|\Delta\boldsymbol{Q}_T\right|\left|\Delta\boldsymbol{q}_T\right|\right],
    \end{split}
\end{align}
with $c_0 := 1 \wedge \underline{D}.$ Using Young's inequality, and assuming $\zeta$ to be sufficiently small, \eqref{eq:ContinuationPart1} and \eqref{eq:ContinuationPart2} yield
\begin{align}  \label{eq:CoreEstimate}
    \begin{split}
        \|\Delta \boldsymbol{Q}_T\|_{L^2\left(\Omega,\mathcal{F}\right)^N}^2 + \|\Delta\boldsymbol{Q}\|^2 \leqslant\, C\zeta\|\Delta \boldsymbol{X}\|^2 + C\zeta\mathbb{E}\left[\left|\Delta \boldsymbol{q}_T\right|^2 + \int_0^T \left\{\left|\Delta \boldsymbol{q}_t \right|^2 + \left| \Delta\boldsymbol{x}_t \right|^2 \right\}\,dt \right].
    \end{split}
\end{align}
Regarding $\Delta \boldsymbol{Q}, \Delta\boldsymbol{q}$ and $\Delta \boldsymbol{x}$ as inputs in the BSDE solved by $\Delta\boldsymbol{X},$ standard stability techniques (such as, e.g., those developed in \cite[Theorem 10.5]{touzi2012optimal} or \cite[Proposition 2.2]{bouchard:hal-01158912}), together with basic properties of conditional expectations, allow us to infer
\begin{align*}
    \begin{split}
        \|\Delta\boldsymbol{X}\|_{\mathbb{S}_{(N)}}^2 + \| \Delta \boldsymbol{M} \|_{\mathbb{M}_{(N)}}^2 \leqslant&\, C\left(\|\Delta \boldsymbol{Q}_T\|_{L^2\left(\Omega,\mathcal{F}\right)^N}^2 + \|\Delta\boldsymbol{Q}\|^2\right) \\
        &+ C\zeta\left(\|\Delta \boldsymbol{q}_T\|_{L^2\left(\Omega,\mathcal{F}\right)^N}^2 + \|\Delta\boldsymbol{q}\|^2 + \|\Delta\boldsymbol{x}\|^2 \right) \\
        \leqslant&\, C\zeta\|\Delta \boldsymbol{X}\|^2+  C\zeta\left(\|\Delta \boldsymbol{q}_T\|_{L^2\left(\Omega,\mathcal{F}\right)^N}^2 + \|\Delta\boldsymbol{q}\|^2 + \|\Delta\boldsymbol{x}\|^2 \right) \\
        \leqslant&\, C\zeta\|\Delta \boldsymbol{X}\|_{\mathbb{S}_{(N)}}^2+  C\zeta\left(\|\Delta \boldsymbol{q}_T\|_{L^2\left(\Omega,\mathcal{F}\right)^N}^2 + \|\Delta\boldsymbol{q}\|^2 + \|\Delta\boldsymbol{x}\|^2 \right),
    \end{split}
\end{align*}
where we utilized \eqref{eq:CoreEstimate} in the last inequality above. Assuming $C\zeta < 1/2,$ it follows that
\begin{align} \label{eq:FinalEstX}
    \begin{split}
        \|\Delta\boldsymbol{X}\|_{\mathbb{S}_{(N)}}^2 + \| \Delta \boldsymbol{M} \|_{\mathbb{M}_{(N)}}^2 &\leqslant C\zeta\left(\|\Delta \boldsymbol{q}_T\|_{L^2\left(\Omega,\mathcal{F}\right)^N}^2 + \|\Delta\boldsymbol{q}\|^2 + \|\Delta\boldsymbol{x}\|^2 \right).
    \end{split}
\end{align}
Employing in \eqref{eq:FinalEstX} standard ODE estimates for $\Delta\boldsymbol{Q},$ we likewise obtain
\begin{equation} \label{eq:FinalEstQ}
    \|\Delta\boldsymbol{Q}\|_{\mathbb{S}_{(N)}}^2 \leqslant C\zeta\left(\|\Delta \boldsymbol{q}_T\|_{L^2\left(\Omega,\mathcal{F}\right)^N}^2 + \|\Delta\boldsymbol{q}\|^2 + \|\Delta\boldsymbol{x}\|^2 \right).
\end{equation}
Therefore, we conclude from \eqref{eq:FinalEstX} and \eqref{eq:FinalEstQ} that the mapping $\left(\boldsymbol{q},\boldsymbol{x}\right) \mapsto \left(\boldsymbol{Q},\boldsymbol{X}\right)$ is a contraction on the space $\mathbb{S}_{(N)} \times \mathbb{S}_{(N)},$ as long as $C\zeta < 1.$

We remark that, in the argument above, for a given $\rho \in I,$ we only needed $\zeta > 0$ to satisfy $\rho + \zeta \leqslant 1$ and $C\zeta < 1,$ for a certain constant $C$ depending solely on model parameters (and not on $\rho$). Consequently, $\sup I = 1.$ Let us take $\rho^* \in I$ with $1-C^{-1} < \rho^* < 1.$ Applying the above argument to $\rho=\rho^*,$ we can take $\zeta := 1-\rho^* < C^{-1}$ and infer that there exists a solution for $\rho = \rho^* + \zeta = 1$ in place of $\rho,$ finishing the proof.  
\end{proof}

\subsection{Constant model parameters} \label{subsec:constModelParams}

Throughout this subsection, let us assume that all model parameters are constant. In particular, $\boldsymbol{\beta} \equiv \boldsymbol{0}.$ We introduce the matrix $\boldsymbol{B}$ as follows:
$$
\boldsymbol{B} := \begin{bmatrix}
\boldsymbol{0}^{N\times N} & \boldsymbol{K}^{-1} \\
\boldsymbol{\Sigma} & - \boldsymbol{C}
\end{bmatrix}.
$$
\begin{lemma}
Under the assumptions of Theorem \ref{thm:NashEquilibriumGeneralSetting}, the Riccati ODE
\begin{equation} \label{eq:RiccatiODE}
    \begin{cases}
        \dot{\boldsymbol{G}} = \boldsymbol{\Sigma} -\boldsymbol{C}\boldsymbol{G} - \boldsymbol{G}\boldsymbol{K}^{-1}\boldsymbol{G},\\
        \boldsymbol{G}(T) = -\boldsymbol{D},
    \end{cases}
\end{equation}
admits a solution $\boldsymbol{G} : \left[0,T\right] \rightarrow \mathbb{R}^{N\times N},$ which is given by 
\begin{equation} \label{eq:FormulaIacopoRiccati}
    \boldsymbol{G}(t) = \boldsymbol{Y}_2(t)\boldsymbol{Y}_1(t)^{-1},
\end{equation}
where $\boldsymbol{Y}(t) = \left( \boldsymbol{Y}_1(t), \boldsymbol{Y}_2(t) \right)^\intercal \in \mathbb{R}^{2N\times N}$ is defined as 
$$
\boldsymbol{Y}(t) = e^{-(T-t)\boldsymbol{B}}\begin{bmatrix}
\boldsymbol{I}^{N\times N} \\
-\boldsymbol{D}
\end{bmatrix} \hspace{1.0cm} \left( 0\leqslant t \leqslant T\right).
$$
\end{lemma}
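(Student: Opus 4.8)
The plan is to exploit the classical linearization that relates the matrix Riccati equation \eqref{eq:RiccatiODE} to the linear system generated by $\boldsymbol{B}.$ First I would differentiate the defining expression for $\boldsymbol{Y}.$ Since $e^{-(T-t)\boldsymbol{B}} = e^{(t-T)\boldsymbol{B}},$ one has $\dot{\boldsymbol{Y}}(t) = \boldsymbol{B}\boldsymbol{Y}(t),$ which in block form reads
\[
\dot{\boldsymbol{Y}}_1 = \boldsymbol{K}^{-1}\boldsymbol{Y}_2, \qquad \dot{\boldsymbol{Y}}_2 = \boldsymbol{\Sigma}\boldsymbol{Y}_1 - \boldsymbol{C}\boldsymbol{Y}_2.
\]
Evaluating at $t=T$ gives $\boldsymbol{Y}_1(T) = \boldsymbol{I}^{N\times N}$ and $\boldsymbol{Y}_2(T) = -\boldsymbol{D},$ so the candidate \eqref{eq:FormulaIacopoRiccati} immediately satisfies $\boldsymbol{G}(T) = -\boldsymbol{D},$ matching the terminal datum of \eqref{eq:RiccatiODE}.

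Next, on any subinterval of $[0,T]$ on which $\boldsymbol{Y}_1(t)$ is invertible, I would differentiate $\boldsymbol{G} = \boldsymbol{Y}_2\boldsymbol{Y}_1^{-1}$ using $\frac{d}{dt}\boldsymbol{Y}_1^{-1} = -\boldsymbol{Y}_1^{-1}\dot{\boldsymbol{Y}}_1\boldsymbol{Y}_1^{-1}$ and then substitute the two block equations above, obtaining
\[
\dot{\boldsymbol{G}} = (\boldsymbol{\Sigma}\boldsymbol{Y}_1 - \boldsymbol{C}\boldsymbol{Y}_2)\boldsymbol{Y}_1^{-1} - \boldsymbol{Y}_2\boldsymbol{Y}_1^{-1}\boldsymbol{K}^{-1}\boldsymbol{Y}_2\boldsymbol{Y}_1^{-1} = \boldsymbol{\Sigma} - \boldsymbol{C}\boldsymbol{G} - \boldsymbol{G}\boldsymbol{K}^{-1}\boldsymbol{G}.
\]
Hence $\boldsymbol{G}$ solves \eqref{eq:RiccatiODE} wherever it is defined. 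This computation is routine; the only genuine content is establishing that $\boldsymbol{Y}_1$ is invertible throughout $[0,T],$ so that \eqref{eq:FormulaIacopoRiccati} makes sense on the whole interval.

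The main obstacle is therefore the global invertibility of $\boldsymbol{Y}_1$ on $[0,T].$ Because $\boldsymbol{Y}_1(T) = \boldsymbol{I}^{N\times N},$ invertibility holds on a left-neighborhood of $T,$ and the task is to propagate it down to $t=0$ without the Riccati solution blowing up. I foresee two routes. The first is a self-contained no-blow-up argument: letting $(\tau,T]$ be the maximal interval on which $\boldsymbol{Y}_1$ is invertible, one derives an a priori bound on $\boldsymbol{G}$ there from the coercivity built into the standing hypotheses — namely that $\boldsymbol{K},\,\boldsymbol{\Sigma}$ and $\boldsymbol{D}$ are diagonal and positive definite while the off-diagonal coupling $\boldsymbol{C}$ is controlled by the weak-interaction condition $\overline{\alpha}^2 < 16\underline{\kappa}\,\underline{\lambda}$ (here $\boldsymbol{\beta}\equiv\boldsymbol{0}$) of Theorem \ref{thm:NashEquilibriumGeneralSetting} — and such a bound forces $\tau=0.$ The second, more economical route is to invoke Theorem \ref{thm:NashEquilibriumGeneralSetting} directly: in the present constant-parameter regime the system \eqref{eq:FBSDEHeterog} is a linear FBSDE whose decoupling field is precisely $\boldsymbol{G},$ in the sense that the ansatz $\boldsymbol{x}^*_t = \boldsymbol{G}(t)\boldsymbol{q}^*_t$ solves it iff $\boldsymbol{G}$ satisfies \eqref{eq:RiccatiODE}. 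Since that theorem grants a unique solution for every initial inventory $\boldsymbol{q}_0,$ the decoupling field must be well-defined and bounded on all of $[0,T],$ which is equivalent to the invertibility of $\boldsymbol{Y}_1$ throughout $[0,T].$ Either route closes the proof.
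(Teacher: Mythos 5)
Your proposal is correct in substance, but it follows a genuinely different route from the paper's proof. The paper never performs the linearization computation at all: it verifies the hypotheses of a cited global-existence theorem for non-symmetric Riccati equations, namely \cite[Theorem 2.3]{freiling2000non}, by exhibiting the pair $\boldsymbol{Z}_1 = \boldsymbol{0}^{N\times N},$ $\boldsymbol{Z}_2 = -\boldsymbol{I}^{N\times N},$ for which $\boldsymbol{Z}_1 - \boldsymbol{Z}_2\boldsymbol{D} - \boldsymbol{D}^\intercal\boldsymbol{Z}_2 = 2\boldsymbol{D} > \boldsymbol{0}^{N\times N}$ and, for the matrix $\boldsymbol{L}$ built from the blocks of $\boldsymbol{B},$ $\boldsymbol{L} + \boldsymbol{L}^\intercal \leqslant -\left(c_1\wedge c_2\right)\boldsymbol{I}^{2N\times 2N} \leqslant \boldsymbol{0}^{2N\times 2N};$ this is exactly where the weak-interaction condition of Theorem \ref{thm:NashEquilibriumGeneralSetting} enters, and the cited theorem then delivers global existence together with the formula \eqref{eq:FormulaIacopoRiccati} in one stroke. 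You instead derive the formula by hand --- your block computation and the differentiation of $\boldsymbol{Y}_2\boldsymbol{Y}_1^{-1}$ are correct --- and correctly isolate the crux as the invertibility of $\boldsymbol{Y}_1$ on all of $\left[0,T\right].$ Of your two routes, only the second truly closes the argument: the first (an a priori bound on $\boldsymbol{G}$ forcing $\tau = 0$) is a sketch whose missing content, a no-blow-up estimate for a \emph{non-symmetric} Riccati equation, is precisely what the paper's citation packages, so as written it is not yet a proof. The FBSDE route does work, with two small tightenings: (i) Theorem \ref{thm:NashEquilibriumGeneralSetting} must be applied on every subinterval $\left[\tau,T\right],$ which is legitimate because the weak-interaction constants do not involve the horizon; (ii) rather than asserting that the decoupling field is ``well-defined and bounded,'' the clean conclusion is a kernel argument: if $\boldsymbol{Y}_1(\tau)v = 0$ for some vector $v,$ then $t \mapsto \boldsymbol{Y}(t)v$ is a deterministic solution of the homogeneous version of \eqref{eq:FBSDEHeterog} on $\left[\tau,T\right]$ (it satisfies the dynamics and the terminal condition $\boldsymbol{x}_T = -\boldsymbol{D}\boldsymbol{q}_T$ by construction) with vanishing initial inventory, so uniqueness forces it to vanish identically, whence $v = \boldsymbol{Y}_1(T)v = 0.$ As for what each approach buys: your route keeps the lemma internal to the paper, replacing the external Riccati reference by results already proved, and it explains structurally why the formula holds (the matrix $\boldsymbol{B}$ generates the characteristic system of the FBSDE); the paper's route is shorter on the page and keeps the ODE lemma logically independent of the stochastic analysis, at the price of resting on an outside theorem.
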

\begin{proof}
We will apply \cite[Theorem 2.3]{freiling2000non}. We state this result here for convenience. It assures that, if we can find two matrices $\boldsymbol{Z}_1,\boldsymbol{Z}_2 \in \mathbb{R}^{N\times N},$ with $\boldsymbol{Z}_1$ symmetric, such that 
$$
\boldsymbol{Z}_1 - \boldsymbol{Z}_2 \boldsymbol{D} - \boldsymbol{D}^\intercal \boldsymbol{Z}_2 > \boldsymbol{0}^{N \times N},
$$
and for which the matrix
$$
\boldsymbol{L} := \begin{bmatrix}
\boldsymbol{Z}_1\boldsymbol{B}_{11} + \boldsymbol{Z}_2\boldsymbol{B}_{21} & \boldsymbol{Z}_1\boldsymbol{B}_{12} + \boldsymbol{B}_{11}^\intercal \boldsymbol{Z}_2 + \boldsymbol{Z}_2\boldsymbol{B}_{22} \\
\boldsymbol{0}^{N\times N} & \boldsymbol{B}_{12}\boldsymbol{Z}_2
\end{bmatrix},
$$
satisfies
$$
\boldsymbol{L} + \boldsymbol{L}^\intercal \leqslant \boldsymbol{0}^{2N \times 2N},
$$
then the Riccati ODE \eqref{eq:RiccatiODE} has a continuous solution $\boldsymbol{G} : \left[0,T\right] \rightarrow \mathbb{R}^{N\times N},$ which is differentiable for $t<T,$ and the formula \eqref{eq:FormulaIacopoRiccati} holds.

We take $\boldsymbol{Z}_1 = \boldsymbol{0}^{N\times N}$ and $\boldsymbol{Z}_2 = -\boldsymbol{I}^{N\times N}.$ It is clear that
$$
\boldsymbol{Z}_1 - \boldsymbol{Z}_2 \boldsymbol{D} - \boldsymbol{D}^\intercal \boldsymbol{Z}_2 = 2\boldsymbol{D} > \boldsymbol{0}^{N \times N}.
$$
Moreover, since
$$
\boldsymbol{L} = \begin{bmatrix}
-\boldsymbol{\Sigma} & \boldsymbol{C} \\
\boldsymbol{0}^{N\times N} & - \boldsymbol{K}^{-1}
\end{bmatrix},
$$
we have
$$
\boldsymbol{L} + \boldsymbol{L}^\intercal \leqslant - \left( c_1\wedge c_2 \right) \boldsymbol{I}^{2N \times 2N} \leqslant 0,
$$
where $c_1$ and $c_2$ are as in Remark \ref{rem:KeyRemark}.  
\end{proof}

Let us consider $\boldsymbol{\Pi}$ and $\boldsymbol{\Psi}$ as the solutions of the ODEs
$$
\begin{cases}
\dot{\boldsymbol{\Pi}} = - \boldsymbol{\Pi} \boldsymbol{K}^{-1}\boldsymbol{G},\\
\boldsymbol{\Pi}(0) = \boldsymbol{I}^{N\times N}
\end{cases}
$$
and
$$
\begin{cases}
\dot{\boldsymbol{\Psi}} = \boldsymbol{\Psi}\left( \boldsymbol{K}^{-1}\boldsymbol{G} + \boldsymbol{C} \right),\\
\boldsymbol{\Psi}(0) = \boldsymbol{I}^{N\times N}.
\end{cases}
$$

\begin{theorem}
Let us suppose that the assumptions of Theorem \ref{thm:NashEquilibriumGeneralSetting} are valid. Then, the Nash equilibrium $\boldsymbol{\nu}^*$ and the corresponding inventory $\boldsymbol{q}^*$ are both deterministic and admit the representation
$$
\begin{cases}
\boldsymbol{\nu}^*(t) = \boldsymbol{K}^{-1}\boldsymbol{g}_0(t) + \boldsymbol{K}^{-1}\boldsymbol{G}(t)\boldsymbol{q}^*(t),\\
\boldsymbol{q}^*(t) = \boldsymbol{\Pi}(t)^{-1}\left(\boldsymbol{q}_0 + \int_0^t \boldsymbol{\Pi}(u)\boldsymbol{K}^{-1}\boldsymbol{g}_0(u)\,du \right),
\end{cases}
$$
where
$$
\boldsymbol{g}_0(t) = \boldsymbol{\Psi}(t)^{-1}\int_t^T \boldsymbol{\Psi}(u)\boldsymbol{\mu}\,du.
$$
\end{theorem}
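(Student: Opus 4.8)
The plan is to exploit the special structure of the constant--parameter regime, in which (since $\boldsymbol{\beta}\equiv\boldsymbol{0}$) the McKean--Vlasov FBSDE \eqref{eq:FBSDEHeterog} admits a purely deterministic solution and the information asymmetry encoded in the projections $\boldsymbol{\mathcal{P}}_t$ becomes inactive. I would first search for a deterministic pair $(\boldsymbol{q}^*,\boldsymbol{x}^*)$ with vanishing martingale component $\boldsymbol{M}^*\equiv\boldsymbol{0}$. Any deterministic process is adapted to every $\mathbb{F}^i$, so $\boldsymbol{\mathcal{P}}_t(\boldsymbol{C}\boldsymbol{x}^*_t)=\boldsymbol{C}\boldsymbol{x}^*_t$, and since $\boldsymbol{\mu}$ is constant the backward equation carries no genuine stochastic integrand. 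Thus \eqref{eq:FBSDEHeterog} collapses to the deterministic two--point boundary value problem $\dot{\boldsymbol{q}}^*=\boldsymbol{K}^{-1}\boldsymbol{x}^*$, $\dot{\boldsymbol{x}}^*=\boldsymbol{\Sigma}\boldsymbol{q}^*-\boldsymbol{C}\boldsymbol{x}^*-\boldsymbol{\mu}$, with $\boldsymbol{q}^*(0)=\boldsymbol{q}_0$ and $\boldsymbol{x}^*(T)=-\boldsymbol{D}\boldsymbol{q}^*(T)$. Once such a solution is produced it is continuous and bounded, hence an admissible element of $\mathbb{S}_{(N)}\times\mathbb{S}_{(N)}\times\mathbb{M}_{(N)}$ solving \eqref{eq:FBSDEHeterog} with $\boldsymbol{M}^*=\boldsymbol{0}$; the uniqueness from Theorem \ref{thm:NashEquilibriumGeneralSetting} then forces it to be \emph{the} solution, which Corollary \ref{cor:CharactNashHeterogReg} identifies with the Nash equilibrium. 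In particular $\boldsymbol{q}^*$ and $\boldsymbol{\nu}^*=\boldsymbol{K}^{-1}\boldsymbol{x}^*$ are deterministic.

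To solve the boundary value problem I would decouple it via the affine ansatz $\boldsymbol{x}^*(t)=\boldsymbol{G}(t)\boldsymbol{q}^*(t)+\boldsymbol{g}_0(t)$, which is consistent with the terminal data precisely when $\boldsymbol{G}(T)=-\boldsymbol{D}$ and $\boldsymbol{g}_0(T)=\boldsymbol{0}$. Differentiating and substituting the forward and backward ODEs, the terms proportional to $\boldsymbol{q}^*$ must cancel, reproducing exactly the Riccati equation \eqref{eq:RiccatiODE}, whose solvability and representation \eqref{eq:FormulaIacopoRiccati} have already been secured. The surviving inhomogeneous terms then yield a linear ODE for the correction $\boldsymbol{g}_0$ with zero terminal condition; integrating it backward by variation of constants through the fundamental matrix $\boldsymbol{\Psi}$ gives the asserted closed form $\boldsymbol{g}_0(t)=\boldsymbol{\Psi}(t)^{-1}\int_t^T\boldsymbol{\Psi}(u)\boldsymbol{\mu}\,du$.

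With $\boldsymbol{G}$ and $\boldsymbol{g}_0$ in hand, inserting the ansatz into the forward equation produces the single linear inhomogeneous ODE $\dot{\boldsymbol{q}}^*=\boldsymbol{K}^{-1}\boldsymbol{G}\,\boldsymbol{q}^*+\boldsymbol{K}^{-1}\boldsymbol{g}_0$ with $\boldsymbol{q}^*(0)=\boldsymbol{q}_0$. Taking $\boldsymbol{\Pi}$ as an integrating factor, its defining equation $\dot{\boldsymbol{\Pi}}=-\boldsymbol{\Pi}\boldsymbol{K}^{-1}\boldsymbol{G}$ gives $\frac{d}{dt}\big(\boldsymbol{\Pi}\boldsymbol{q}^*\big)=\boldsymbol{\Pi}\boldsymbol{K}^{-1}\boldsymbol{g}_0$, and integrating from $0$ to $t$ yields exactly $\boldsymbol{q}^*(t)=\boldsymbol{\Pi}(t)^{-1}\big(\boldsymbol{q}_0+\int_0^t\boldsymbol{\Pi}(u)\boldsymbol{K}^{-1}\boldsymbol{g}_0(u)\,du\big)$. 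Feeding this back, $\boldsymbol{\nu}^*=\boldsymbol{K}^{-1}\boldsymbol{x}^*=\boldsymbol{K}^{-1}\boldsymbol{g}_0+\boldsymbol{K}^{-1}\boldsymbol{G}\,\boldsymbol{q}^*$, which is the claimed representation.

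I expect the main obstacle to be conceptual rather than computational: one must rigorously justify that the deterministic object built above is a bona fide solution of the \emph{stochastic} McKean--Vlasov system \eqref{eq:FBSDEHeterog}, i.e.\ that replacing $\boldsymbol{\mathcal{P}}_t(\boldsymbol{C}\boldsymbol{x}^*_t)$ by $\boldsymbol{C}\boldsymbol{x}^*_t$ and setting $\boldsymbol{M}^*\equiv\boldsymbol{0}$ is legitimate, which rests on deterministic processes being adapted to every $\mathbb{F}^i$. After that, determinism and uniqueness of the equilibrium are not argued directly but \emph{inherited} from Theorem \ref{thm:NashEquilibriumGeneralSetting}. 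The remaining checks are routine and already available: $\boldsymbol{G}$ is continuous on $[0,T]$ by the preceding lemma, and $\boldsymbol{\Pi},\boldsymbol{\Psi}$ remain invertible throughout $[0,T]$ by Liouville's formula, so that every inverse in the formulas is well defined.
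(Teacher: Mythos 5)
Your proposal is correct and takes essentially the same route as the paper: both rest on the affine ansatz $\boldsymbol{x}^* = \boldsymbol{G}\boldsymbol{q}^* + \boldsymbol{g}_0$, the Riccati lemma, the integrating-factor matrices $\boldsymbol{\Pi}$ and $\boldsymbol{\Psi}$, and the appeal to uniqueness from Theorem \ref{thm:NashEquilibriumGeneralSetting} combined with the characterization of Corollary \ref{cor:CharactNashHeterogReg}. The only differences are presentational --- the paper verifies the stated formulas by direct differentiation whereas you derive them from the ansatz, and you make explicit (via adaptedness of deterministic processes to every $\mathbb{F}^i$, so that $\boldsymbol{\mathcal{P}}_t$ acts as the identity and $\boldsymbol{M}^*\equiv\boldsymbol{0}$ is admissible) a step the paper leaves implicit.
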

\begin{proof}
We write $\boldsymbol{x}^* := \boldsymbol{K}\boldsymbol{\nu}^*.$ It is straightforward to derive that $\boldsymbol{q}^*$ defined as in the statement satisfies
$$
\dot{\boldsymbol{q}}^* = \boldsymbol{K}^{-1}\boldsymbol{x}^*.
$$

Moreover,
\begin{align*}
    \dot{\boldsymbol{x}}^* &= \dot{\boldsymbol{g}}_0 + \dot{\boldsymbol{G}}\boldsymbol{q}^* + \boldsymbol{G}\dot{\boldsymbol{q}}^* \\
    &=\dot{\boldsymbol{g}}_0 + \boldsymbol{G}\boldsymbol{K}^{-1}\boldsymbol{g}_0 + \left( \dot{\boldsymbol{G}} + \boldsymbol{G}\boldsymbol{K}^{-1}\boldsymbol{G}\right)\boldsymbol{q}^* \\
    &= -\boldsymbol{C}\boldsymbol{g}_0 -\boldsymbol{\mu} + \left( -\boldsymbol{C}\boldsymbol{G} + \boldsymbol{\Sigma}\right)\boldsymbol{q}^* \\
    &= -\boldsymbol{C}\boldsymbol{x}^* + \boldsymbol{\Sigma} \boldsymbol{q}^* -\boldsymbol{\mu}.
\end{align*}

We conclude that $\left( \boldsymbol{q}^*,\boldsymbol{x}^*,\boldsymbol{0}\right) \in \mathbb{S}_{(N)}\times \mathbb{S}_{(N)}\times \mathbb{M}_{(N)}$ does indeed solve \eqref{eq:FBSDEHeterog}.  
\end{proof}
We provide in Figures \ref{fig:HeterogeneousIllustrations} and \ref{fig:Comparison_Heterog} some illustrations of the optimal trading game dynamics. In Tables \ref{tab:params2PlayerGame} and \ref{tab:params5PlayerGame}, we describe the parameters we used. In the former, the first player has an initial inventory equal to one, and the second one begins with no holdings. Alternatively, we can interpret that the agent $2$ as an arbitrageur, i.e., she has no initial target to execute, and is present in the market only to seize arbitrage opportunities. We see that trader $2$ takes advantage of the pressure exerted on the price by agent $1.$ An analogous situation occurs with the player $5$ in the game with five traders. In Figure \ref{fig:Comparison_Heterog}, we showcase the difference in the behavior of the agent $1.$ Although this trader has the same parameters and initial data in the two settings, her strategy is not the same due to the interactions with the other ones.

\begin{table}[!htp]
\centering
\begin{tabular}{@{}cccccc@{}}
\toprule
$i$ & $\alpha^i$         & $\kappa^i$           & $\lambda^i$        & $A^i$             & $q^i_0$  \\ \midrule
$1$ & $5 \times 10^{-5}$ & $2.5 \times 10^{-5}$ & $5\times 10^{-6}$  & $5\times 10^{-1}$ & $1$ \\
$2$ & $5 \times 10^{-6}$ & $2.5 \times 10^{-6}$ & $5 \times 10^{-7}$ & $2\times 10^{-2}$ & $0$ \\ \bottomrule
\end{tabular}
\caption{Parameters used in the $2-$player game. We fix $T=10$ and $\boldsymbol{\mu} = 0.$}
\label{tab:params2PlayerGame}
\end{table}

\begin{table}[!htp]
\centering
\begin{tabular}{@{}cccccc@{}}
\toprule
$i$ & $\alpha^i$         & $\kappa^i$           & $\lambda^i$        & $A^i$              & $q^i_0$ \\ \midrule
$1$ & $5 \times 10^{-5}$ & $2.5 \times 10^{-5}$ & $5\times 10^{-6}$  & $5\times 10^{-1}$  & $1$ \\
$2$ & $3 \times 10^{-5}$ & $1.5 \times 10^{-5}$ & $7 \times 10^{-7}$ & $3 \times 10^{-1}$ & $7 \times 10^{-1}$\\
$3$ & $2 \times 10^{-5}$ & $10^{-5}$            & $5 \times 10^{-7}$ & $2 \times 10^{-1}$ & $5 \times 10^{-1}$\\
$4$ & $5 \times 10^{-6}$ & $2.5 \times 10^{-6}$ & $2 \times 10^{-8}$ & $10^{-2}$          & $0$\\
$5$ & $10^{-5}$          & $5 \times 10^{-6}$   & $5 \times 10^{-7}$ & $10^{-1}$          & $-2 \times 10^{-1}$\\ \bottomrule
\end{tabular}
\caption{Parameters used in the $5-$player game. We fix $T=10.$}
\label{tab:params5PlayerGame}
\end{table}

\begin{figure}[!htp]
    \centering
    \includegraphics[scale = 0.35]{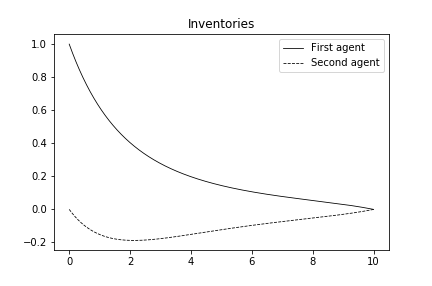}
    \includegraphics[scale = 0.35]{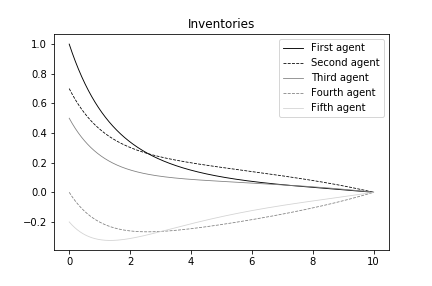}
    \caption{In the left panel, we present the inventories in the $2-$player game. In the right one, the corresponding objects in a population of $5$ agents.}
    \label{fig:HeterogeneousIllustrations}
\end{figure}

\begin{figure}[!htp]
    \centering
    \includegraphics[scale = 0.4]{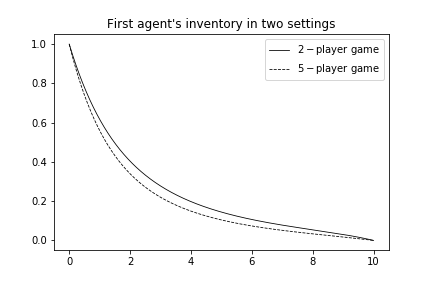}
    \caption{Comparison of the inventory of the first trader in the $2-$ and $5-$player game.}
    \label{fig:Comparison_Heterog}
\end{figure}

Now, let us suppose further that the parameters are homogeneous throughout the population. Under this assumption, we drop the superscripts: $\alpha^i \equiv \alpha, \kappa^i \equiv \kappa, \lambda^i\equiv \lambda$ and $A^i\equiv A.$ We also assume $\mu \equiv 0.$

We can find the analysis of the MFG model in this setting in \cite{cardaliaguet2017mean}. From their results, we know that the mean-field inventory $E$ solves the ODE system
\begin{equation} \label{eq:ODEsystemCL}
    \begin{cases}   
        2\kappa \ddot{E} + \alpha \dot{E} - 2\lambda E = 0,\\
        E(0) = E_0,\, \kappa \dot{E}(T) + A E(T) = 0,
    \end{cases}
\end{equation}
for some given initial data $E_0.$  This system has a closed-form solution, see \cite[Proposition 3.1]{cardaliaguet2017mean}, which we write below in an alternative form:
\begin{proposition} \label{prop:BenchmarkCL}
The solution $E$ of \eqref{eq:ODEsystemCL} is given by
$$
E(t) = E_0 \frac{y(t)}{y(0)},
$$
where
$$
y(t) := -\left( r^{-} + \frac{A}{\kappa} \right) \frac{e^{-r^{+}(T-t)}}{2\theta} + \left( r^{+} + \frac{A}{\kappa} \right)\frac{e^{-r^{-}(T-t)}}{2\theta},
$$
for the parameters
$$
\theta = \frac 1 \kappa \sqrt{\kappa \lambda +\frac{\alpha^2}{16}}  \text{ and }  r_{\pm}= -\frac{\alpha }{4 \kappa } \pm \theta.
$$
\end{proposition}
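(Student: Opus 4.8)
The plan is to treat \eqref{eq:ODEsystemCL} for what it is --- a linear, constant-coefficient, second-order boundary value problem --- and to verify directly that the proposed $E$ solves it, relegating well-posedness to an elementary determinant computation. First I would record the characteristic polynomial of $2\kappa\ddot{E}+\alpha\dot{E}-2\lambda E=0$, namely $2\kappa r^2+\alpha r-2\lambda=0$, whose roots are $r=-\frac{\alpha}{4\kappa}\pm\frac{1}{4\kappa}\sqrt{\alpha^2+16\kappa\lambda}$. A one-line manipulation gives $\frac{1}{4\kappa}\sqrt{\alpha^2+16\kappa\lambda}=\theta$, so these roots are exactly the $r^{\pm}$ of the statement. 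Since $\alpha>0$ one has $\theta>\frac{\alpha}{4\kappa}$, hence $r^{+}>0>r^{-}$; I would keep this sign information at hand. The general solution is $E(t)=c_{+}e^{r^{+}t}+c_{-}e^{r^{-}t}$, and because $y(t)$ is, up to the constant factors $e^{-r^{\pm}T}$, a linear combination of $e^{r^{+}t}$ and $e^{r^{-}t}$, it automatically solves the ODE; all the content lies in the boundary data.

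Next I would verify the two boundary conditions for $y$. Evaluating at $t=T$ gives $y(T)=\frac{1}{2\theta}(r^{+}-r^{-})=1$, since $r^{+}-r^{-}=2\theta$. Differentiating $e^{-r^{\pm}(T-t)}$ brings down a factor $r^{\pm}$, so $\dot{y}(T)=\frac{1}{2\theta}\bigl[-(r^{-}+\frac{A}{\kappa})r^{+}+(r^{+}+\frac{A}{\kappa})r^{-}\bigr]$; the $r^{+}r^{-}$ terms cancel and the remainder collapses to $\frac{1}{2\theta}\cdot\frac{A}{\kappa}(r^{-}-r^{+})=-\frac{A}{\kappa}$. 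Hence $\kappa\dot{y}(T)+Ay(T)=-A+A=0$, so $y$ satisfies the terminal condition. Consequently $E:=E_0\,y/y(0)$ solves the ODE by linearity, satisfies $E(0)=E_0$ by construction, and inherits the homogeneous, scale-invariant terminal condition from $y$ --- provided $y(0)\neq 0$.

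Finally I would settle well-posedness, the only place requiring a genuine argument. Imposing $E(0)=E_0$ and $\kappa\dot{E}(T)+AE(T)=0$ on the general solution yields a $2\times 2$ linear system for $(c_{+},c_{-})$ whose determinant is $e^{r^{-}T}(\kappa r^{-}+A)-e^{r^{+}T}(\kappa r^{+}+A)$. Using $r^{+}>r^{-}$ and $T>0$ one has $e^{r^{+}T}>e^{r^{-}T}>0$, while $\kappa r^{+}+A>\kappa r^{-}+A$ with $\kappa r^{+}+A>A>0$; a short case split on the sign of $\kappa r^{-}+A$ shows $e^{r^{+}T}(\kappa r^{+}+A)>e^{r^{-}T}(\kappa r^{-}+A)$, so the determinant is strictly negative. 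This gives both existence and uniqueness of the BVP solution. Moreover, since the homogeneous problem ($E_0=0$) then admits only the trivial solution while $y$ solves the ODE and terminal condition with $y(T)=1\neq 0$, we cannot have $y(0)=0$; otherwise $y$ would be a nontrivial solution of the homogeneous BVP. Thus the normalization $E_0\,y/y(0)$ is legitimate, and by uniqueness it must coincide with the solution produced by the linear system.

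The main obstacle is therefore not conceptual but purely a matter of bookkeeping: matching the constants $c_{\pm}$ delivered by the linear system to the coefficients $\mp(r^{\mp}+A/\kappa)/(2\theta)$ appearing in $y$, and checking the determinant inequality. Once the characteristic roots are identified with $r^{\pm}$ and the two evaluations $y(T)=1$, $\dot{y}(T)=-A/\kappa$ are carried out, everything else is immediate.
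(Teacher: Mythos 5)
Your proof is correct, and it is genuinely different from what the paper does --- because the paper does not prove this proposition at all: it invokes \cite[Proposition 3.1]{cardaliaguet2017mean} and merely restates that closed form ``in an alternative form.'' Your argument supplies the missing verification in a self-contained way: you identify $r^{\pm}$ as the roots of the characteristic polynomial $2\kappa r^2+\alpha r-2\lambda$ (via $\frac{1}{4\kappa}\sqrt{\alpha^2+16\kappa\lambda}=\theta$), observe that $y$ is a constant-coefficient combination of $e^{r^{+}t}$ and $e^{r^{-}t}$ and hence solves the ODE, compute $y(T)=1$ and $\dot{y}(T)=-A/\kappa$ so that $y$ satisfies the Robin condition $\kappa\dot{y}(T)+Ay(T)=0$, and then close everything with the determinant computation, which simultaneously yields unique solvability of the boundary value problem, the fact that $y(0)\neq 0$ (otherwise $y$ would be a nontrivial solution of the homogeneous problem, contradicting $y(T)=1$), and the identification of $E_0\,y(t)/y(0)$ as \emph{the} solution without any coefficient matching. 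This buys what the bare citation leaves implicit: existence, uniqueness, and the legitimacy of dividing by $y(0)$. One minor slip: the strict inequality $\theta>\frac{\alpha}{4\kappa}$ (hence $r^{+}>0>r^{-}$) follows from $\kappa\lambda>0$, which is guaranteed by the standing assumption \textbf{A1}, not from $\alpha>0$; the misattribution is harmless, since your determinant argument only needs $r^{+}>r^{-}$, $\kappa r^{+}+A>0$, and $T>0$, all of which hold under the paper's hypotheses.
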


In what regards the $N-$player game, we can obtain a characterization of 
\begin{equation} \label{eq:AvgInventoryNplayer}
    E_N := \frac{1}{N}\sum_{i=1}^N q^{*i}
\end{equation}
as a solution of an ODE system very close to \eqref{eq:ODEsystemCL}. Furthermore, we can derive a closed-form formula for it. These two facts comprise the content of the next result.
\begin{theorem} \label{thm:ConstParamsHomog_Unconstrained}
When we assume that all model parameters are constant, as well as homogeneous throughout the population, and that model parameters satisfy the condition exposed in Lemma \ref{lem:concavity}, the average inventory holdings \eqref{eq:AvgInventoryNplayer} in the $N-$player game solves the ODE system
\begin{equation} \label{eq:ODEsystemNplayer}
    \begin{cases}   
        2\kappa \ddot{E}_N + \alpha\left(1-\frac{1}{N}\right) \dot{E}_N -2\lambda E_N = 0,\\
        E_N(0) = \frac{1}{N}\sum_{i=1}^N q^i_0,\, \kappa E_N^\prime(T) + A E_N(T) = \frac{\alpha}{2N}E_N(T).
    \end{cases}
\end{equation}
Furthermore, it is explicitly given by
$$
E_N(t) = \left(\frac{1}{N}\sum_{i=1}^{N} q^i_0 \right)\frac{y_N(t)}{y_N(0)},
$$
where 
\begin{align} \label{eq:yNforRepFinHomog}
  \begin{split}
    y_N(t) := -\left[ r^{-}_N + \frac{1}{\kappa}\left(A - \frac{\alpha}{2N} \right) \right] \frac{e^{-r^{+}_N(T-t)}}{2\theta_N} + \left[ r^{+}_N + \frac{1}{\kappa}\left(A - \frac{\alpha}{2N} \right) \right]\frac{e^{-r^{-}_N(T-t)}}{2\theta_N},
  \end{split}
\end{align}
for the parameters
$$
\theta_N := \frac{1}{4\kappa} \sqrt{ \alpha^2\left(1 - \frac{1}{N} \right)^2 + 16\lambda \kappa}
$$
and
$$
r^{\pm}_N := -\frac{\alpha}{4\kappa}\left(1 - \frac{1}{N}\right) \pm \theta_N.
$$
\end{theorem}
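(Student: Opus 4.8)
The plan is to derive the scalar boundary value problem \eqref{eq:ODEsystemNplayer} by averaging the equilibrium dynamics of Corollary \ref{cor:CharactNashHeterogReg} over the population, and then to verify the closed-form candidate directly. Under the present homogeneity we have $\boldsymbol{\beta}\equiv\boldsymbol{0}$ and $\boldsymbol{\mu}=\boldsymbol{0}$, and (as established above for constant parameters) the equilibrium processes are deterministic, so the martingale parts vanish and \eqref{eq:FBSDEHeterog} reduces to the linear system with components $\dot q^{*i}=x^{*i}/(2\kappa)$ and $-\dot x^{*i}=\frac{\alpha}{2N\kappa}\sum_{j\neq i}x^{*j}-2\lambda q^{*i}$, subject to $q^{*i}(0)=q^i_0$ and $-x^{*i}(T)=\left(2A-\frac{\alpha}{N}\right)q^{*i}(T)$, where $x^{*i}=2\kappa\nu^{*i}$. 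First I would set $\bar x:=\frac1N\sum_{i=1}^N x^{*i}$ and read off from the forward equation that $\bar x=2\kappa\dot E_N$.

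The crux is the averaging of the backward equation. Summing over $i$ and using the combinatorial identity $\sum_{i}\sum_{j\neq i}x^{*j}=(N-1)\sum_j x^{*j}=N(N-1)\bar x$, the coupling term collapses to $\frac{\alpha}{2\kappa}\left(1-\frac1N\right)\bar x$, whence $-\dot{\bar x}=\frac{\alpha}{2\kappa}\left(1-\frac1N\right)\bar x-2\lambda E_N$. Substituting $\bar x=2\kappa\dot E_N$ gives the second-order equation in \eqref{eq:ODEsystemNplayer}, while averaging the boundary data (and using $\bar x(T)=2\kappa\dot E_N(T)$) yields $E_N(0)=\frac1N\sum_i q^i_0$ and $\kappa\dot E_N(T)+\left(A-\frac{\alpha}{2N}\right)E_N(T)=0$. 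This shows that $E_N$ solves \eqref{eq:ODEsystemNplayer}. Note that the factor $\left(1-\frac1N\right)$, which distinguishes the finite-population equation from its mean-field counterpart \eqref{eq:ODEsystemCL}, arises precisely because each player excludes her own rate from the impact she perceives.

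For the explicit representation I would note that the characteristic polynomial $2\kappa r^2+\alpha(1-\frac1N)r-2\lambda$ has roots $r^{\pm}_N$ as in the statement, which are real with product $-\lambda/\kappa<0$, so that $r^+_N>0>r^-_N$. Rather than solving the linear system for the two coefficients, it is cleaner to verify the given $y_N$ directly: being a combination of $e^{-r^{\pm}_N(T-t)}$ it solves the homogeneous ODE, and evaluating at $t=T$ gives $y_N(T)=\frac{r^+_N-r^-_N}{2\theta_N}=1$ and $\dot y_N(T)=-\frac1\kappa\left(A-\frac{\alpha}{2N}\right)$, so the terminal condition holds. Consequently $E_N(0)\,y_N(\cdot)/y_N(0)$ satisfies the full boundary value problem.

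The remaining point, and the only one requiring care, is uniqueness of the solution to \eqref{eq:ODEsystemNplayer}, which then forces $E_N$ to coincide with the candidate. I would establish this by showing the associated homogeneous problem admits only the trivial solution: any such solution has the form $c\,w$ with $w(t)=e^{r^+_Nt}-e^{r^-_Nt}$, and the sign conditions $r^+_N>0>r^-_N$ together with $A-\frac{\alpha}{2N}>0$ (guaranteed by the concavity hypothesis of Lemma \ref{lem:concavity}) give $w(T)>0$ and $\dot w(T)>0$, so that $\kappa\dot w(T)+\left(A-\frac{\alpha}{2N}\right)w(T)>0$ and hence $c=0$; positivity of $y_N(0)$ follows along the same lines, so the quotient is well defined. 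Alternatively, uniqueness is inherited from the uniqueness of the Nash equilibrium. Either way, $E_N$ equals the stated formula, completing the proof.
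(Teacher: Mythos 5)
Your proposal is correct, and its first half coincides with the paper's own argument: the paper likewise averages the equilibrium system \eqref{eq:FBSDEHeterog} (using that, for constant parameters, the equilibrium is deterministic so the martingale terms vanish) to obtain a first-order system for $E_N$ and the average of the $x^{*i}$, which is then recast as the scalar boundary value problem \eqref{eq:ODEsystemNplayer}; your combinatorial collapse of the coupling term producing the $\left(1-\frac{1}{N}\right)$ factor is exactly what is implicit there. Where you genuinely diverge is the second half: the paper disposes of the closed form \eqref{eq:yNforRepFinHomog} by simply asserting it ``can be done as in'' \cite[Proposition 3.1]{cardaliaguet2017mean}, whereas you verify the candidate directly (computing $y_N(T)=1$ and $\dot{y}_N(T)=-\frac{1}{\kappa}\left(A-\frac{\alpha}{2N}\right)$) and, crucially, prove uniqueness of the solution to the boundary value problem via the sign conditions $r^+_N>0>r^-_N$ and $A>\frac{\alpha}{2N}$ (the latter being exactly the hypothesis of Lemma \ref{lem:concavity}). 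This uniqueness step is not a luxury: without it, showing that the stated formula solves \eqref{eq:ODEsystemNplayer} does not by itself identify it with $E_N$, and it also yields $y_N(0)\neq 0$, so the quotient in the representation is well defined. Your treatment is therefore more self-contained and fills in a step the paper delegates to the reference; the paper's version buys brevity at the cost of leaving both the uniqueness of the two-point problem and the nondegeneracy of $y_N(0)$ to the reader.
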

\begin{proof}
Let us write
$$
F_N := \frac{1}{N}\sum_{i=1}^N\nu^{*i}.
$$
We recall that, under the current assumptions, both $E_N$ and $F_N$ are deterministic. Moreover, system \eqref{eq:FBSDEHeterog} implies
\begin{equation} \label{eq:ODEsystemNplayerProof}
    \begin{cases}
    \dot{E}_N = \frac{1}{2\kappa}F_N,\\
    \dot{F}_N = \frac{\alpha}{2\kappa}\left(1-\frac{1}{N}\right)F_N + \lambda E_N,\\
    E_N(0) = \frac{1}{N}\sum_{i=1}^N q^i_0,\,
    -F_N(T) = \left( 2A - \frac{\alpha}{N} \right) E_N(T).
    \end{cases}
\end{equation}
We can easily see that system \eqref{eq:ODEsystemNplayerProof} is equivalent to \eqref{eq:ODEsystemNplayer}. The proof of the explicit formula \eqref{eq:yNforRepFinHomog} can be done as in \cite[Proposition 3.1]{cardaliaguet2017mean}.  
\end{proof}

In Figure \ref{fig:HomogeneousIllustration}, with the parameters of Table \ref{tab:Homogeneous_paramsNdata}, we illustrate the convergence of the $N-$player average inventory holdings, under the NE dynamics, to the corresponding MFG studied in \cite{cardaliaguet2017mean}.

\begin{table}[!htp]
\centering
\begin{tabular}{@{}ccccc@{}}
\toprule
$\alpha$  & $\kappa$  & $\lambda$ & $A$       & $E_0$ \\ \midrule
$10^{-5}$ & $10^{-5}$ & $10^{-7}$ & $10^{-1}$ & $1$   \\ \bottomrule
\end{tabular}
\caption{Parameters we used to compute $E_N$ and $E.$ We took $T=10$ and $E_N(0) = E(0) = E_0.$}
\label{tab:Homogeneous_paramsNdata}
\end{table}

\begin{figure}[!htp]
    \centering
    \includegraphics[scale = 0.4]{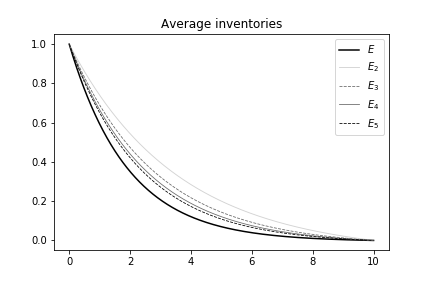}
    \caption{Average inventories $E_N,\, N\in\left\{1,2,3,4,5\right\},$ and the limiting mean-field one $E.$}
    \label{fig:HomogeneousIllustration}
\end{figure}

\section{Analysis of the \texorpdfstring{$N-$}{N-}player game: the constrained setting} \label{sec:AnalysisNplayerConstrained}

\subsection{Solving the constrained problem as an asymptotic limit}

In what follows, we state an auxiliary result which we will use many times.

\begin{lemma} \label{lem:TestingAgainstA0}
A stochastic process $\boldsymbol{M} = \left( M^1,...,M^N\right)^\intercal \in \mathbb{A}_{(N)}$ satisfies
\begin{equation}\label{eq:Martingality}
    \mathbb{E}\left[\int_0^T \boldsymbol{M}_t\cdot \boldsymbol{w}_t\,dt \right] = 0, \text{ for every } \boldsymbol{w} \in \mathbb{A}_{c,(N)}^0,
\end{equation}
if, and only if, $ M^i \in \mathcal{M}_i,$ for each $i\in \mathcal{N}.$
\end{lemma}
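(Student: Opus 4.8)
The plan is to exploit the product structure $\mathbb{A}_{c,(N)}^0 = \Pi_{j=1}^N \mathbb{A}_{c,j}^0$ to reduce the vector statement to $N$ independent scalar ones. Testing \eqref{eq:Martingality} against a $\boldsymbol{w}$ whose only nonzero entry is an arbitrary $w^i \in \mathbb{A}_{c,i}^0$ shows that \eqref{eq:Martingality} is equivalent to requiring, for every $i \in \mathcal{N}$, that
\begin{equation*}
\mathbb{E}\left[\int_0^T M^i_t w^i_t\,dt\right] = 0 \quad \text{for all } w^i \in \mathbb{A}_{c,i}^0.
\end{equation*}
Hence it suffices to fix $i$ and prove the scalar equivalence: this display holds if and only if $M^i \in \mathcal{M}_i$.

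For the forward implication, assume $M^i \in \mathcal{M}_i$, so that $M^i_t = \mathbb{E}[M^i_T \mid \mathcal{F}^i_t]$. Given $w^i \in \mathbb{A}_{c,i}^0$, the product $M^i_t w^i_t$ is $dt\times d\mathbb{P}$-integrable by Cauchy--Schwarz (both factors lie in $\mathbb{L}^2$), so Fubini applies; since $w^i_t$ is $\mathcal{F}^i_t$-measurable, the tower property gives $\mathbb{E}[M^i_t w^i_t] = \mathbb{E}[\mathbb{E}[M^i_T\mid\mathcal{F}^i_t]\,w^i_t] = \mathbb{E}[M^i_T w^i_t]$, whence
\begin{equation*}
\mathbb{E}\left[\int_0^T M^i_t w^i_t\,dt\right] = \mathbb{E}\left[M^i_T \int_0^T w^i_t\,dt\right] = 0,
\end{equation*}
using the defining constraint $\int_0^T w^i_t\,dt = 0$. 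If $M^i_T \notin L^2$ one first runs this for bounded $w^i$, where $M^i_T w^i_t$ is integrable because $M^i_T \in L^1$, or localizes $M^i$ along stopping times; I expect this integrability bookkeeping to be a minor nuisance rather than a genuine obstacle.

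The reverse implication carries the content, and I would recover the martingale property by testing against bump directions. Fix $0 \le s < t \le T$ and $A \in \mathcal{F}^i_s$, and for small $\epsilon > 0$ set
\begin{equation*}
w^i_u := \mathbb{I}_A\left( \tfrac{1}{\epsilon}\mathbb{I}_{[t,t+\epsilon)}(u) - \tfrac{1}{\epsilon}\mathbb{I}_{[s,s+\epsilon)}(u) \right).
\end{equation*}
Since $A \in \mathcal{F}^i_s \subseteq \mathcal{F}^i_u$ for every $u \ge s$, this $w^i$ is $\mathbb{F}^i$-progressively measurable; it lies in $\mathbb{L}^2$ and satisfies $\int_0^T w^i_u\,du = 0$, so $w^i \in \mathbb{A}_{c,i}^0$. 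Inserting it into the hypothesis yields
\begin{equation*}
\mathbb{E}\left[\mathbb{I}_A\left(\tfrac{1}{\epsilon}\int_t^{t+\epsilon}M^i_u\,du - \tfrac{1}{\epsilon}\int_s^{s+\epsilon}M^i_u\,du\right)\right] = 0.
\end{equation*}
Letting $\epsilon \downarrow 0$ and invoking the Lebesgue differentiation theorem for the map $r \mapsto \mathbb{E}[\mathbb{I}_A M^i_r]$, which lies in $L^1([0,T])$ because $\int_0^T \mathbb{E}[(M^i_r)^2]^{1/2}\,dr < \infty$, we get $\mathbb{E}[\mathbb{I}_A(M^i_t - M^i_s)] = 0$ for a.e. pair $s < t$ and every $A \in \mathcal{F}^i_s$, that is $\mathbb{E}[M^i_t\mid\mathcal{F}^i_s] = M^i_s$. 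As $M^i$ is $\mathbb{F}^i$-progressively measurable with $M^i_t \in L^2(\Omega,\mathcal{F}^i_t)$ for a.e. $t$, this is precisely $M^i \in \mathcal{M}_i$, once the a.e. relation is upgraded to all $0 \le s < t \le T$. The main obstacle is exactly this upgrade: I would pass from the a.e. martingale identity to the full one using the continuity (right-continuity) of $\mathbb{F}^i$ together with an $L^1$-limit along $t_n \downarrow t$, $s_n \downarrow s$ within the full-measure set, and then fix the progressively measurable path-regular version, which completes the identification.
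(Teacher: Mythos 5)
Your proposal takes the same route as the paper's proof: the componentwise reduction is the paper's first step, and the bump-test-plus-Lebesgue-differentiation argument is precisely what the paper invokes by citing Lemma 5.3 of Bank--Soner--Vo{\ss} ``employing the Lebesgue Differentiation Theorem instead of right-continuity.'' Your forward direction is fine when $M^i_T \in L^2,$ and your instinct that the terminal integrability issue is minor is correct, although neither of your suggested fixes closes it as stated: proving the identity only for bounded $w^i$ does not suffice (it is needed for all $w^i$), and approximating a general $w^i$ by bounded elements of $\mathbb{A}_{c,i}^0$ is obstructed because truncation breaks the constraint $\int_0^T w^i_t\,dt = 0$ while the natural additive correction is only $\mathcal{F}^i_T-$measurable, hence not adapted. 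A clean repair: for $t_0 < T$ with $M^i_{t_0} \in L^2,$ the tower property and the constraint give $\mathbb{E}\big[\int_0^T M^i_t w^i_t\,dt\big] = \mathbb{E}\big[\int_{t_0}^T (M^i_t - M^i_{t_0}) w^i_t\,dt\big],$ which tends to $0$ as $t_0 \uparrow T$ because $(T-t_0)\,\mathbb{E}[(M^i_{t_0})^2] \leqslant \int_{t_0}^T \mathbb{E}[(M^i_t)^2]\,dt \rightarrow 0$ (monotonicity and integrability of $t \mapsto \mathbb{E}[(M^i_t)^2]$).

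The genuine gap is at the end of the reverse direction, and it is a quantifier inversion, not bookkeeping. The set of Lebesgue points of $r \mapsto \mathbb{E}[\mathbb{I}_A M^i_r]$ depends on $A,$ so what your argument actually yields is: for every $A,$ for a.e.\ pair $(s,t)$ with null set depending on $A.$ Your displayed conclusion ``for a.e.\ pair $s<t$ and every $A \in \mathcal{F}^i_s$'' reverses these quantifiers; since $\mathcal{F}^i_s$ is uncountable, the null sets cannot be intersected, so $\mathbb{E}[M^i_t|\mathcal{F}^i_s] = M^i_s$ is not available for any fixed pair $(s,t)$ at this stage, and your proposed upgrade (limits along $t_n \downarrow t$ ``within the full-measure set'') presupposes exactly the common $A-$independent set of good times whose existence is the issue. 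A correct repair is measure-theoretic rather than limiting: for fixed $s$ and $A \in \mathcal{F}^i_s,$ the bump identity holds for every $t \in (s,T)$ and every small $\epsilon,$ so $r \mapsto \mathbb{E}[\mathbb{I}_A M^i_r]$ equals a constant $g_A$ a.e.\ on $[s,T];$ one then checks that $A \mapsto g_A$ is a finite, countably additive signed measure on $\mathcal{F}^i_s,$ absolutely continuous with respect to $\mathbb{P}$ (countable additivity follows by intersecting the countably many relevant full-measure time sets), defines $\widetilde{M}^i_s$ as its Radon--Nikodym derivative, obtains the martingale property from the fact that $g_A$ does not depend on $s,$ and finally identifies $\widetilde{M}^i$ with $M^i$ up to $dt \times d\mathbb{P}-$null sets via an $L^1(\Omega)-$valued Lebesgue point argument for $u \mapsto M^i_u$ (note the conclusion can only ever hold modulo such identification --- the orthogonality condition is blind to the single time $T,$ consistently with the paper's convention of not distinguishing processes equal a.e.). To be fair, the paper buries exactly this step inside its citation, so your proposal matches the paper's proof in structure and even in where it is least explicit; but as written, the inference from the bump test to the conditional-expectation identity fails.
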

\begin{proof}
It is immediate that \eqref{eq:Martingality} is equivalent to the identities
\begin{equation} \label{eq:Martingality2}
    \mathbb{E}\left[\int_0^T M^i_t w^i_t\,dt \right] = 0 \hspace{1.0cm} \left(i\in\mathcal{N},\, w^i \in \mathbb{A}_{c,i}^0\right)
\end{equation}
holding simultaneously. We can now proceed as in \cite[Lemma 5.3]{bank2017hedging}, employing the Lebesgue Differentiation Theorem \cite[Theorem 3.21]{folland1999real} instead of using right-continuity, to conclude that \eqref{eq:Martingality2} is in turn equivalent to $ M^i \in \mathcal{M}_i,$ for each $i\in\mathcal{N}.$  
\end{proof}

In our next result, we obtain a characterization of Nash equilibria for the constrained problem. Together with the corresponding inventories, the NE solves an FBSDE system close to \eqref{eq:FBSDEHeterog}. 
\begin{proposition} \label{prop:CharacterizationNashConstrained}
Under the assumptions of Lemma \ref{lem:concavity}, a strategy $\boldsymbol{\nu}^* \in \mathbb{A}_{c,(N)}$ is a Nash equilibrium of the constrained problem if, and only if, there are $\boldsymbol{q}^* \in \mathbb{S}_{(N)}$ and $\boldsymbol{M}^* \in \mathcal{M}_{(N)}$ such that the processes $(\boldsymbol{q}^*, \boldsymbol{\nu}^*,\boldsymbol{M}^*)$ solve the FBSDE
    \begin{equation} \label{eq:FBSDEHeterogConstrained}
        \begin{cases}
        \boldsymbol{x}^*_t = \boldsymbol{K}_t\boldsymbol{\nu}_t^*,\\
        d\boldsymbol{q}^*_t = \boldsymbol{K}_t^{-1}\boldsymbol{x}^*_t\,dt,\\
        -d\boldsymbol{x}^*_t = \boldsymbol{\mathcal{P}}_t\left( \boldsymbol{C}_t\boldsymbol{x}^*_t \right)\,dt - \boldsymbol{\Sigma}_t \boldsymbol{q}^*_t\,dt + \frac{1}{N}\boldsymbol{\beta}_t\boldsymbol{q}^*_t\,dt + \boldsymbol{\mu}_t\,dt - d\boldsymbol{M}^*_t,\\
        \boldsymbol{q}^*_0 = \boldsymbol{q}_0 \text{ and } \boldsymbol{q}^*_T = \boldsymbol{0}.
        \end{cases}
    \end{equation}
\end{proposition}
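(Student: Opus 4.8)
The plan is to mirror the variational characterization established for the unconstrained game in Corollary~\ref{cor:PartialOptmFBSDE} and Corollary~\ref{cor:CharactNashHeterogReg}, the only structural novelty being that the feasible set $\mathbb{A}_{c,i}$ is now a closed \emph{affine} subspace of $\mathbb{A}_i$ whose direction space is the linear subspace $\mathbb{A}_{c,i}^0$. First I would record that, by Lemma~\ref{lem:concavity}, for fixed $\boldsymbol{\nu}^{*-i}$ the map $\nu^i \mapsto J_i(\nu^i;\boldsymbol{\nu}^{*-i})$ is strictly concave and, by Lemma~\ref{lem:GateauxDerivWellDefnd}, G\^ateaux differentiable. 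A standard convex-analytic argument (the same invoked for Corollary~\ref{cor:PartialOptmFBSDE}) then yields that $\nu^{*i}\in\mathbb{A}_{c,i}$ solves \eqref{eq:nasheq_Constrained} if and only if the directional derivative vanishes along \emph{every admissible direction}, i.e. $\langle D_iJ_i(\nu^{*i};\boldsymbol{\nu}^{*-i}),w^i\rangle = 0$ for all $w^i\in\mathbb{A}_{c,i}^0$; and $\boldsymbol{\nu}^*$ is a Nash equilibrium precisely when this holds for every $i\in\mathcal{N}$, equivalently when $\mathbb{E}\big[\int_0^T \boldsymbol{\Theta}_t\cdot\boldsymbol{w}_t\,dt\big]=0$ for all $\boldsymbol{w}\in\mathbb{A}_{c,(N)}^0$, with $\boldsymbol{\Theta}$ defined below.

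Next I would insert the explicit formula \eqref{eq:GateauxDerivativeFormula}. Two simplifications occur on the constrained set: since any admissible $\nu^i$ satisfies $q^i_T = q^i_0 + \int_0^T\nu^i_t\,dt = 0$, the terminal relation $-\boldsymbol{x}^*_T=\boldsymbol{D}_T\boldsymbol{q}^*_T$ of \eqref{eq:FBSDEHeterog} is superseded by the hard constraint $\boldsymbol{q}^*_T=\boldsymbol{0}$; and since every test direction obeys $\int_0^T w^i_t\,dt = 0$, the boundary term $-2A^i q^i_T\int_0^T w^i_t\,dt$ vanishes identically. Carrying out the same tower-property rewriting that led to \eqref{eq:GateauxDerivativeFormula2}, I would phrase the first-order condition through the $\mathbb{F}^i$-progressively measurable process
\begin{equation*}
\Theta^i_t := -x^i_t + \tfrac{\alpha^i_t}{N}q^i_t + 2\int_0^t\lambda^i_u q^i_u\,du - \tfrac{1}{N}\int_0^t \tfrac{\alpha^i_u}{2\kappa^i_u}x^i_u\,du - \int_0^t \mathbb{E}\Big[\tfrac{\alpha^i_u}{N}\textstyle\sum_{j\neq i}\nu^j_u\,\Big|\,\mathcal{F}^i_u\Big]\,du - \int_0^t\mu^i_u\,du + M^i_t,
\end{equation*}
where $x^i = 2\kappa^i\nu^i$ and $M^i\in\mathbb{M}_i$ is the martingale defined in the proof of Corollary~\ref{cor:PartialOptmFBSDE}.

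The decisive step, and the one that sets the constrained case apart, is the interpretation of this orthogonality. Whereas in Corollary~\ref{cor:PartialOptmFBSDE} the test functions ranged over all of $\mathbb{A}_i$ and forced $\Theta^i\equiv 0$, here they range only over $\mathbb{A}_{c,i}^0$; applying Lemma~\ref{lem:TestingAgainstA0} to the vector process $\boldsymbol{\Theta}=(\Theta^1,\dots,\Theta^N)^\intercal$, the first-order condition is equivalent to $\Theta^i\in\mathcal{M}_i$ for each $i$, i.e. $\Theta^i$ is merely an $\mathbb{F}^i$-martingale $\Xi^i$ rather than the null process. Solving $\Theta^i=\Xi^i$ for $x^i$, the process $\boldsymbol{M}^* := \boldsymbol{M} - \boldsymbol{\Xi}$ (a difference of $\mathbb{F}^i$-martingales, hence in $\mathcal{M}_i$) plays the role of the martingale in \eqref{eq:FBSDEHeterogConstrained}; differentiating the resulting identity and using Assumption~\textbf{A3} to expand $d(\alpha^i_t q^i_t/N)$ recovers exactly the backward dynamics of \eqref{eq:FBSDEHeterogConstrained}, the forward equation being the definition of $\nu^*$, the initial datum being given, and $\boldsymbol{q}^*_T=\boldsymbol{0}$ being the liquidation constraint itself. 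Reading the chain of equivalences backwards --- FBSDE $\Rightarrow$ $\Theta^i$ a martingale $\Rightarrow$ first-order condition $\Rightarrow$ (by strict concavity and the affine structure) optimality --- delivers the converse implication and hence the stated ``if and only if''.

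I expect the main obstacle to be integrability near the terminal time. Forcing $q^i_T = 0$ typically drives the trading rate, and thus the adjoint $x^i$ and its martingale part, to blow up as $t\uparrow T$, so that the square-integrability of the quadratic variation controlling membership in $\mathbb{M}_i$ can fail; this is precisely why the statement asserts only $\boldsymbol{M}^*\in\mathcal{M}_{(N)}$ and not $\boldsymbol{M}^*\in\mathbb{M}_{(N)}$. Consequently, care is needed to guarantee $\boldsymbol{\Theta}\in\mathbb{A}_{(N)}$, so that Lemma~\ref{lem:TestingAgainstA0} applies in both directions; I would secure this from the a priori regularity $\boldsymbol{q}^*\in\mathbb{S}_{(N)}$ and $\boldsymbol{\nu}^*\in\mathbb{A}_{c,(N)}\subseteq(\mathbb{L}^2)^N$, the uniform bounds afforded by \textbf{A1} and \textbf{A3}, and the contractivity of the projections $\boldsymbol{\mathcal{P}}_t$, leaving the genuinely delicate behaviour of the martingale at $T$ to be absorbed into the weaker space $\mathcal{M}_{(N)}$.
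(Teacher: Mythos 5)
Your proposal follows essentially the same route as the paper's own proof: restrict perturbations to the direction space $\mathbb{A}_{c,i}^0$ (so that $\nu^i+\epsilon w^i$ stays admissible), use strict concavity and the explicit G\^ateaux derivative --- with the terminal-penalty term now absent since $q^i_T=0$ and $\int_0^T w^i_t\,dt=0$ --- to reduce the Nash property to the first-order condition \eqref{eq:CondnGateuxDerivConstrained}, and then invoke Lemma \ref{lem:TestingAgainstA0} to convert that orthogonality into the martingale property defining the FBSDE \eqref{eq:FBSDEHeterogConstrained}. Your extra bookkeeping (the process $\Theta^i$, the decomposition $\boldsymbol{M}^*=\boldsymbol{M}-\boldsymbol{\Xi}$, and the remark on why only $\mathcal{M}_{(N)}$ rather than $\mathbb{M}_{(N)}$ can be asserted) is correct and merely makes explicit what the paper leaves implicit.
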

\begin{proof}
When we restrict the admissible strategies of player $i$ to those in $\mathbb{A}_{c,i},$ we still have $\nu^i + \epsilon w^i \in \mathbb{A}_{c,i},$ for $\nu^i \in \mathbb{A}_{c,i}$ and $\epsilon \in \mathbb{R},$ as long as $w^i \in \mathbb{A}_{c,i}^0.$ For each of these $\nu^i$ and $w^i,$ the corresponding G\^ateaux derivative $\left\langle D_iJ_i\left(\nu^i;\boldsymbol{\nu}^{-i}\right), w^i\right\rangle$ exists. Proceeding as in the proof of Lemma \ref{lem:GateauxDerivWellDefnd}, we deduce that it is given by
\begin{align}
    \begin{split}
        \left\langle D_i J_i\left( \nu^i;\boldsymbol{\nu}^{-i} \right), w^i \right\rangle = \mathbb{E}\Bigg[\int_0^T w^i_t\Bigg\{&-2\kappa^i_t \nu^i_t -2 \int_t^T \lambda^i_u q^i_u\,du + \frac{\alpha^i_t}{N}q^i_t\,du \\
        & + \int_t^T \frac{\alpha^i_u}{N}\sum_{j= 1}^N  \nu^j_u\,du + \int_t^T \mu_u\,du \Bigg\}\,dt \Bigg].
    \end{split}
\end{align}
Since $J_i$ restricted to $\Pi_{j=1}^N \mathbb{A}_{c,j}$ remains strictly concave in the $i-$th direction, we can deduce, in the same way as we proved Corollaries \ref{cor:PartialOptmFBSDE} and \ref{cor:CharactNashHeterogReg}, that $\boldsymbol{\nu}^* \in \mathbb{A}_{c,i}$ is a Nash equilibrium for the constrained problem if, and only if, 
\begin{equation} \label{eq:CondnGateuxDerivConstrained}
    \left\langle D_i J_i\left( \nu^{*i};\boldsymbol{\nu}^{*-i} \right), w^i \right\rangle = 0 \hspace{1.0cm} \left(w^i \in \mathbb{A}_{c,i}^0,\, i \in \mathcal{N} \right).
\end{equation}
With the aid of Lemma \ref{lem:TestingAgainstA0}, we conclude that \eqref{eq:CondnGateuxDerivConstrained} is in turn equivalent to $\left(\boldsymbol{q}^*,\boldsymbol{x}^*:= \boldsymbol{K}\boldsymbol{\nu}^*,\boldsymbol{M}^*\right)$ solving \eqref{eq:FBSDEHeterogConstrained}, for some $ \boldsymbol{M}^* \in \mathcal{M}_{(N)}.$  
\end{proof}

We aim to obtain a solution of the constrained problem as an asymptotic limit of the processes solving the unconstrained one, relative to terminal penalty parameters tending to infinity. This is the content of the next Theorem.


\begin{theorem} \label{thm:NEStrictLiquidation}
    Let us suppose that
    $$
    \overline{\alpha}^2 < 4\underline{\lambda}\, \underline{\kappa} \text{ and } \underline{D} > 0.
    $$
    For each $\boldsymbol{A} = (A^1,...,A^N)^\intercal,$ we denote by $\boldsymbol{\nu}^{\boldsymbol{A}}$ the Nash equilibrium corresponding to these terminal penalty coefficients, and by $\boldsymbol{q}^{\boldsymbol{A}}$ the corresponding inventories. Then, there exist a sequence $\boldsymbol{A}^k \rightarrow \left(\infty,\ldots,\infty\right)^\intercal,$ as $k \rightarrow \infty,$ and processes $\boldsymbol{\nu}^\infty \in \mathbb{A}_{c,(N)}$ and $\boldsymbol{q}^\infty \in \mathbb{S}_{(N)}$ for which
    $$
    \boldsymbol{\nu}^{\boldsymbol{A}^k} \rightharpoonup \boldsymbol{\nu}^\infty \text{ and } \boldsymbol{q}^{\boldsymbol{A}^k} \rightharpoonup \boldsymbol{q}^\infty,
    $$
    weakly in the topology of $\mathbb{A}_{(N)},$ as $k \rightarrow \infty.$ Furthermore, $\left( \boldsymbol{\nu}^\infty,\,\boldsymbol{q}^\infty,\,\boldsymbol{M}^\infty\right) $ solves \eqref{eq:FBSDEHeterogConstrained}, for a suitable $\boldsymbol{M}^\infty \in \mathcal{M}_{(N)};$ hence, $\boldsymbol{\nu}^\infty$ is a Nash equilibrium for the constrained problem. 
\end{theorem}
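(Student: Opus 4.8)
The plan is to realise the constrained equilibrium as a weak limit of unconstrained equilibria along a penalisation sequence, the crux being a priori bounds that are \emph{uniform} in the terminal coefficients. For each large $k$ take $\boldsymbol{A}^k=(k,\dots,k)^\intercal$; since $\underline{D}>0$ holds for $k$ large (which in particular gives $A^i>\overline{\alpha}^i/(2N)$) and the weak-interaction condition is in force, Lemma~\ref{lem:concavity} and Theorem~\ref{thm:NashEquilibriumGeneralSetting} apply, so the unconstrained Nash equilibrium $\boldsymbol{\nu}^{\boldsymbol{A}^k}$ with inventories $\boldsymbol{q}^{\boldsymbol{A}^k}$ exists and is unique. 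Writing $\nu^{k,i}:=\nu^{\boldsymbol{A}^k,i}$ and $q^{k,i}:=q^{\boldsymbol{A}^k,i}$, I would use only the defining optimality of the equilibrium, not the FBSDE, to derive the bounds.

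The main obstacle, on which everything rests, is this uniform estimate. For each $i$ the linear liquidation control $\ell^i_t:=-q^i_0/T$ lies in $\mathbb{A}_{i}$ and drives the inventory to $0$ at time $T$, so the Nash property yields $J_i(\nu^{k,i};\boldsymbol{\nu}^{k,-i})\geqslant J_i(\ell^i;\boldsymbol{\nu}^{k,-i})$. Expanding and rearranging, the left contributes $\mathbb{E}\big[\int_0^T\kappa^i_t(\nu^{k,i}_t)^2\,dt\big]+\mathbb{E}\big[\int_0^T\lambda^i_t(q^{k,i}_t)^2\,dt\big]+A^i_k\,\mathbb{E}[(q^{k,i}_T)^2]$, while every coupling term has the form $\tfrac{\alpha^i_t}{N}q\,\nu$. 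Summing over $i\in\mathcal{N}$ and applying Young's inequality with a parameter $\epsilon$, the double sum of cross terms is dominated by $\tfrac{\overline{\alpha}}{2\epsilon}\sum_i\|q^{k,i}\|^2+\tfrac{\epsilon\overline{\alpha}}{2}\sum_i\|\nu^{k,i}\|^2$, whereas the comparison terms built from $\ell^i$ and $\hat q^i_t=q^i_0(1-t/T)$ contribute an arbitrarily small further multiple of $\sum_i\|\nu^{k,i}\|^2$ plus a constant depending only on the data $\sum_i(q^i_0)^2$ and $\sum_i\|\mu^i\|^2$. The quadratic gains $\underline{\kappa}\sum_i\|\nu^{k,i}\|^2$ and $\underline{\lambda}\sum_i\|q^{k,i}\|^2$ absorb these exactly when $\tfrac{\overline{\alpha}}{2\underline{\lambda}}<\epsilon<\tfrac{2\underline{\kappa}}{\overline{\alpha}}$, and such an $\epsilon$ exists \emph{precisely} because $\overline{\alpha}^2<4\underline{\lambda}\,\underline{\kappa}$. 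Choosing $\epsilon$ strictly inside this interval leaves slack for the remaining small terms, so that
\[
\sum_{i}\|\nu^{k,i}\|^2+\sum_i\|q^{k,i}\|^2+\sum_i A^i_k\,\mathbb{E}\big[(q^{k,i}_T)^2\big]\leqslant C,
\]
with $C$ depending only on model parameters and initial data, uniformly in $k$. The $\mathbb{S}_{(N)}$-bound for $\boldsymbol{q}^{k}$ then follows from $q^{k,i}_t=q^i_0+\int_0^t\nu^{k,i}$ and Cauchy--Schwarz.

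With these bounds the rest is more routine. By reflexivity of $\mathbb{A}_{(N)}$ I would extract a subsequence (not relabelled) along which $\boldsymbol{\nu}^{k}\rightharpoonup\boldsymbol{\nu}^\infty$; since the affine map $\nu\mapsto q_0+\int_0^\cdot\nu$ is weakly continuous (apply Fubini to rewrite the pairing against any test process as a pairing against $\nu$), the inventories converge weakly to $\boldsymbol{q}^\infty$ with $q^{\infty,i}_t=q^i_0+\int_0^t\nu^{\infty,i}$. To verify $\boldsymbol{\nu}^\infty\in\mathbb{A}_{c,(N)}$, the uniform bound forces $\mathbb{E}[(q^{k,i}_T)^2]\leqslant C/A^i_k\to0$, so $q^{k,i}_T\to0$ strongly in $L^2(\Omega)$; on the other hand, testing against any $Z\in L^2(\Omega,\mathcal{F})$ and inserting $\mathbb{E}[Z\mid\mathcal{F}^i_s]$ by the tower property shows $q^{k,i}_T\rightharpoonup q^{\infty,i}_T$ weakly, whence $q^{\infty,i}_T=0$.

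Finally, for the equilibrium property I would pass to the limit in the first-order condition, not in the FBSDE directly. For $w^i\in\mathbb{A}_{c,i}^0$ one has $\int_0^T w^i_t\,dt=0$, so the terminal term $-2A^i q^i_T\int_0^T w^i$ in \eqref{eq:GateauxDerivativeFormula} vanishes identically, and what remains is an expression that is affine and continuous — hence weakly continuous — in $(\boldsymbol{\nu},\boldsymbol{q})$. Therefore $0=\langle D_iJ_i(\nu^{k,i};\boldsymbol{\nu}^{k,-i}),w^i\rangle\to\langle D_iJ_i(\nu^{\infty,i};\boldsymbol{\nu}^{\infty,-i}),w^i\rangle$ for every $w^i\in\mathbb{A}_{c,i}^0$, which is exactly condition \eqref{eq:CondnGateuxDerivConstrained}. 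By the strict concavity of Lemma~\ref{lem:concavity} together with Proposition~\ref{prop:CharacterizationNashConstrained}, $\boldsymbol{\nu}^\infty$ is a Nash equilibrium for the constrained problem, and $(\boldsymbol{q}^\infty,\boldsymbol{\nu}^\infty,\boldsymbol{M}^\infty)$ solves \eqref{eq:FBSDEHeterogConstrained} for a suitable $\boldsymbol{M}^\infty\in\mathcal{M}_{(N)}$.
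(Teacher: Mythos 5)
Your proposal is correct, and its core --- the uniform a priori estimate --- is exactly the paper's: compare the equilibrium against the linear liquidation (TWAP) strategy, apply Young's inequality to the cross terms, sum over the population, and absorb using precisely the condition $\overline{\alpha}^2<4\underline{\lambda}\,\underline{\kappa}$, obtaining a bound on $\|\boldsymbol{\nu}^{k}\|^2+\|\boldsymbol{q}^{k}\|^2+\sum_i A^i_k\mathbb{E}[(q^{k,i}_T)^2]$ that is uniform in the penalty coefficients; the weak-compactness extraction and the strong $L^2$ convergence $q^{k,i}_T\to 0$ are also the same. Where you genuinely diverge is the identification of the limit. The paper tests the BSDE part of the unconstrained system \eqref{eq:FBSDEHeterog} against $\boldsymbol{w}\in\mathbb{A}^0_{c,(N)}$, passes to the limit in that integrated identity (which forces it to commute the projections $\boldsymbol{\mathcal{P}}_u$ past the weak limits via Fubini and self-adjointness of conditional expectation), and then reconstructs the martingale $\boldsymbol{M}^\infty$ explicitly through Lemma \ref{lem:TestingAgainstA0}. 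You instead pass to the limit one step earlier in the chain of equivalences: in the raw G\^ateaux first-order conditions \eqref{eq:GateauxDerivativeFormula} restricted to directions $w^i\in\mathbb{A}^0_{c,i}$, where the dangerous term $-2A^i_k q^{k,i}_T\int_0^T w^i_t\,dt$ vanishes identically (the same observation that makes the paper's terminal term drop out), and the remainder is an affine bounded --- hence weakly continuous --- functional of $(\nu^i,q^i,\boldsymbol{\nu}^{-i})$ with no projections to handle; Proposition \ref{prop:CharacterizationNashConstrained} then converts the limiting condition \eqref{eq:CondnGateuxDerivConstrained} into both the constrained NE property and the FBSDE with its martingale. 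Your route is a little more economical, since the martingale construction and the projection bookkeeping are delegated to the already-proven Proposition \ref{prop:CharacterizationNashConstrained} (inside which Lemma \ref{lem:TestingAgainstA0} does the same job); the paper's route is more self-contained in that it exhibits $\boldsymbol{M}^\infty$ concretely. One shared caveat, inherited from the paper rather than introduced by you: the existence of the unconstrained equilibria $\boldsymbol{\nu}^{\boldsymbol{A}^k}$ invoked at the start requires the hypotheses of Theorem \ref{thm:NashEquilibriumGeneralSetting} (which involve $\overline{\beta}$), not merely $\overline{\alpha}^2<4\underline{\lambda}\,\underline{\kappa}$; the theorem's statement presupposes this, so it is not a gap specific to your argument.
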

\begin{proof}
Let us fix $\nu^{\text{TWAP},i} \in \mathbb{A}_{c,i} \subseteq \mathbb{A}_{i},$ $i\in\mathcal{N},$ where
$$
\nu^{\text{TWAP},i}_t := -\frac{q^i_0}{T} \hspace{1.0cm} \left(0\leqslant t \leqslant T\right).
$$
Optimality of $\boldsymbol{\nu}^{\boldsymbol{A}}$ gives
\begin{equation} \label{eq:PassingLimit1}
    J_i\left( \nu^{\boldsymbol{A},i};\boldsymbol{\nu}^{\boldsymbol{A},-i}\right) \geqslant J_i\left( \nu^{\text{TWAP},i}; \boldsymbol{\nu}^{\boldsymbol{A},-i} \right).
\end{equation}

On the one hand, it is straightforward to estimate
\begin{align} \label{eq:PassingLimit2}
    \begin{split}
        J_i( \nu^{\text{TWAP},i}; \boldsymbol{\nu}^{\boldsymbol{A},-i} ) \geqslant -C_i\left(\epsilon\right) - \frac{\epsilon}{N}\left( 1 - \frac{1}{N} \right)\mathbb{E}\left[\int_0^T \sum_{j\neq i} \left(\nu^{\boldsymbol{A},j}_t\right)^2\,dt \right],
    \end{split}
\end{align}
for each $i \in \mathcal{N}$ and each $\epsilon > 0,$ where
\begin{align*}
    \begin{split}
        C_i\left(\epsilon\right):= \mathbb{E}\Bigg[\int_0^T \Bigg\{ \left(\frac{q^i_0}{T} \right)^2\left[\kappa^i_t + \left( \lambda^i_t + \frac{\left(\alpha^i_t\right)^2}{4\epsilon}\right)\left(T-t \right)^2 + \frac{\alpha^i_t}{N}(T-t) \right] - \frac{q^i_0}{T}\left(T-t\right)\mu^i_t \Bigg\}\,dt \Bigg].
    \end{split}
\end{align*}

On the other hand, for $a >0,$ we have
\begin{align} \label{eq:PassingLimit3}
    \begin{split}
        J_i ( \nu^{\boldsymbol{A},i};\boldsymbol{\nu}^{\boldsymbol{A},-i}) \leqslant-\mathbb{E}\Bigg[\int_0^T&\left\{\kappa^i_t\left( \nu_t^{\boldsymbol{A},i}\right)^2 + \left( \lambda^i_t - \frac{\left(\alpha^i_t\right)^2}{2a^2} - \epsilon\right)\left(q_t^{\boldsymbol{A},i}\right)^2 \right\}\,dt  \\
        &\hspace{0.3cm}+ A^i\left(q^{\boldsymbol{A},i}_T\right)^2 \Bigg] + \frac{a^2}{2N}\mathbb{E}\left[\int_0^T\sum_{j=1}^N \left(\nu^{\boldsymbol{A},j}_t \right)^2\,dt \right]+ \frac{1}{4\epsilon}\|\mu^i\|^2.
    \end{split}
\end{align}
Synthesizing \eqref{eq:PassingLimit1}-\eqref{eq:PassingLimit3} upon summing them over $i\in \mathcal{N},$ we deduce
\begin{align} \label{eq:SecondToLast_LimitInA}
    \begin{split}
        \mathbb{E}\Bigg[\int_0^T&\sum_{i=1}^N \Bigg\{ \kappa^i_t - \frac{a^2}{2} - \epsilon\left(1-\frac{1}{N} \right)^2 \Bigg\}\left(\nu^{\boldsymbol{A}, i}_t \right)^2\,dt\Bigg] \\
        &+ \mathbb{E}\Bigg[\int_0^T\sum_{i=1}^N \left(\lambda^i_t - \frac{\left(\alpha^i_t \right)^2}{2a^2} -\epsilon \right)\left( q^{\boldsymbol{A},i}_t \right)^2\,dt + \sum_{i=1}^N A^i\left(q^{\boldsymbol{A},i}_T\right)^2\Bigg] \\
        &\hspace{3.65cm}\leqslant \sum_{i=1}^N C_i\left(\epsilon \right) + \frac{1}{2\epsilon}\sum_{i=1}^N\|\mu^i\|^2 =: R_N(\epsilon).
    \end{split}
\end{align}

We take $a$ such that
$$
\frac{ \overline{\alpha}^2}{2\underline{\lambda} }< a^2 <2\underline{\kappa} , 
$$
we fix $\epsilon > 0$ sufficiently small, and we assume $A^i \geqslant A > 0,$ for all $i\in \mathcal{N},$ concluding from inequality \eqref{eq:SecondToLast_LimitInA} what follows
\begin{equation} \label{eq:MainEstimate_LimitInA}
    \left\|\boldsymbol{\nu}^{\boldsymbol{A}}\right\|^2 + \left\|\boldsymbol{q}^{\boldsymbol{A}}\right\|^2 + A\left\|\boldsymbol{q}^{\boldsymbol{A}}_T\right\|_{L^2\left(\Omega,\mathcal{F}\right)^N}^2 \leqslant C R_N(\epsilon),
\end{equation}
where $C > 0$ can be taken to be independent of $\boldsymbol{A}$ (and $N$).

Let us pass to a subsequence $\boldsymbol{A}^k = \left( A^{1,k},...,A^{N,k} \right)^\intercal \rightarrow \left(\infty,...,\infty\right)^\intercal$ such that
\begin{equation} \label{eq:Convergences_LimitInA}
    \begin{cases}
        \boldsymbol{\nu}^{\boldsymbol{A}^k} \rightharpoonup \boldsymbol{\nu}^\infty, \text{ weakly in } \mathbb{A}_{(N)},\\
        \boldsymbol{q}^{\boldsymbol{A}^k} \rightharpoonup \boldsymbol{q}^\infty, \text{ weakly in } \mathbb{A}_{(N)},\\
        \boldsymbol{q}^{\boldsymbol{A}^k}_T \rightarrow 0, \text{ strongly in } L^2\left(\Omega,\mathcal{F}\right)^N.
    \end{cases}
\end{equation}
We claim $\left(\boldsymbol{q}^\infty, \boldsymbol{x}^\infty := \boldsymbol{K}\boldsymbol{\nu}^\infty, \boldsymbol{M}^\infty\right)$ solves \eqref{eq:FBSDEHeterogConstrained}, for some $\boldsymbol{M}^\infty \in \mathcal{M}_{(N)}.$ In effect, the relation
$$
\frac{d}{dt}\boldsymbol{q}^{\boldsymbol{A}^k} = \boldsymbol{\nu}^{\boldsymbol{A}^k}, \text{ distributionally, for } k\geqslant 1,
$$
implies
\begin{equation} \label{eq:ODEpartOfBSDE_LimitInA}
    \frac{d}{dt}\boldsymbol{q}^{\infty} = \boldsymbol{\nu}^{\infty} = \boldsymbol{K}^{-1}\boldsymbol{x}^\infty, \text{ distributionally.}
\end{equation}
Consequently, the paths $t \mapsto \boldsymbol{q}^\infty_t$ are absolutely continuous $\mathbb{P}-$a.s., whence \eqref{eq:ODEpartOfBSDE_LimitInA} holds $\mathbb{P}-$a.s. for almost every $t \in \left[0,T \right],$ as well as $\boldsymbol{q}_T^\infty = 0,$ $\mathbb{P}-$a.s. (by the last convergence in \eqref{eq:Convergences_LimitInA}). Thus, the membership $\boldsymbol{\nu}^\infty \in \mathbb{A}_{c,(N)}$ holds.

Let us take the dot product of the BSDE part of \eqref{eq:FBSDEHeterog} with a given $\boldsymbol{w} \in \mathbb{A}_{c,(N)}^0,$ and then integrate the result over $\left[0,T\right]\times \Omega$ against $dt\times d\mathbb{P},$ from where it follows that
\begin{align} \label{eq:TestedFBSDEinA_LimitInA}
    \begin{split}
        \mathbb{E} \Bigg[ \int_0^T \boldsymbol{x}^{\boldsymbol{A}^k}_t\cdot \boldsymbol{w}_t\,dt\Bigg] = \mathbb{E}\left[\int_0^T\int_t^T\left[ \boldsymbol{C}_u\boldsymbol{x}^{\boldsymbol{A}^k}_u - \left(\boldsymbol{\Sigma}_u - \frac{1}{N}\boldsymbol{\beta}_u\right)\boldsymbol{q}^{\boldsymbol{A}^k}_u + \boldsymbol{\mu}_u\right]\,du \cdot \boldsymbol{w}_t\,dt \right].
    \end{split}
\end{align}
Above, we employed Lemma \ref{lem:TestingAgainstA0} to ensure that
$$
\mathbb{E}\left[\int_0^T \boldsymbol{w}_t \cdot \boldsymbol{M}^{\boldsymbol{A}^k}_t\,dt\right] = 0,
$$
for all $k\geqslant 1,$ and we also used the fact that
$$
\mathbb{E}\left[\int_0^T \boldsymbol{x}^{\boldsymbol{A}^k}_T\cdot \boldsymbol{w}_t\,dt \right] = \mathbb{E}\left[\boldsymbol{x}^{\boldsymbol{A}^k}_T\cdot \int_0^T \boldsymbol{w}_t\,dt \right] = 0.
$$
From the convergences in \eqref{eq:Convergences_LimitInA}, we deduce
$$
\mathbb{E}\Bigg[ \int_0^T \boldsymbol{x}^{\boldsymbol{A}^k}_t\cdot \boldsymbol{w}_t\,dt\Bigg] \rightarrow \mathbb{E}\Bigg[ \int_0^T \boldsymbol{x}^{\infty}_t\cdot \boldsymbol{w}_t\,dt\Bigg],
$$
as $k\rightarrow \infty.$ Also, applying Fubini's Theorem and basic properties of the projections $\left\{ \mathcal{P}^i\right\}_{i\in\mathcal{N}},$
\begin{align*}
    \mathbb{E}\Bigg[\int_0^T &\int_t^T \boldsymbol{\mathcal{P}}_u\left(\boldsymbol{C}_u\boldsymbol{x}^{\boldsymbol{A}^k}_u\right) \,du \cdot \boldsymbol{w}_t\,dt \Bigg] \\
    &= \mathbb{E}\left[\int_0^T \boldsymbol{\nu}^{\boldsymbol{A}^k}_u \cdot \boldsymbol{\mathcal{P}}_u\left( \left(\boldsymbol{C}_u\boldsymbol{K}_u^{-1} \right)^\intercal \int_0^u \boldsymbol{w}_t\,dt \right)\,du \right] \\
    &\longrightarrow \mathbb{E}\left[\int_0^T \boldsymbol{\nu}^{\infty}_u \cdot \boldsymbol{\mathcal{P}}_u\left( \left(\boldsymbol{C}_u\boldsymbol{K}_u^{-1} \right)^\intercal \int_0^u \boldsymbol{w}_t\,dt \right)\,du \right] \\
    &= \mathbb{E}\left[\int_0^T \int_t^T \boldsymbol{\mathcal{P}}_u\left( \boldsymbol{C}_u\boldsymbol{x}^{\infty}_u \right) \,du \cdot \boldsymbol{w}_t\,dt \right],
\end{align*}
since $\left\{ \boldsymbol{\mathcal{P}}_u\left( \left(\boldsymbol{C}_u\boldsymbol{K}_u^{-1} \right)^\intercal \int_0^u \boldsymbol{w}_t\,dt \right) \right\}_{0\leqslant u \leqslant T} \in \mathbb{A}_{(N)},$ and likewise,
\begin{align*}
    \mathbb{E}\Bigg[\int_0^T \int_t^T \left(\boldsymbol{\Sigma}_u - \frac{1}{N}\boldsymbol{\beta}_u\right)\boldsymbol{q}^{\boldsymbol{A}^k}_u\,du \cdot \boldsymbol{w}_t\,dt \Bigg] \longrightarrow \mathbb{E}\left[\int_0^T \int_t^T \left(\boldsymbol{\Sigma}_u - \frac{1}{N}\boldsymbol{\beta}_u\right)\boldsymbol{q}^{\infty}_u\,du \cdot \boldsymbol{w}_t\,dt \right]
\end{align*}
whence, passing \eqref{eq:TestedFBSDEinA_LimitInA} to the limit as $k \rightarrow \infty,$ we obtain
\begin{align} \label{eq:DesiredBSDE_LimitInA}
    \begin{split}
        \mathbb{E}\Bigg[ \int_0^T \widetilde{\boldsymbol{M}}_t\cdot \boldsymbol{w}_t\,dt\Bigg] = 0,
    \end{split}
\end{align}
where
\begin{align*}
    \widetilde{\boldsymbol{M}}_t &:= \boldsymbol{x}^{\infty}_t - \boldsymbol{\mathcal{P}}_t\left( \int_t^T\left[ \boldsymbol{\mathcal{P}}_u\left( \boldsymbol{C}_u\boldsymbol{x}^{\infty}_u\right) - \left(\boldsymbol{\Sigma}_u - \frac{1}{N}\boldsymbol{\beta}_u\right)\boldsymbol{q}^{\infty}_u + \boldsymbol{\mu}_u \right]\,du \right) \\
    &= \boldsymbol{x}^{\infty}_t +  \int_0^t\left[ \boldsymbol{\mathcal{P}}_u\left( \boldsymbol{C}_u\boldsymbol{x}^{\infty}_u\right) - \left(\boldsymbol{\Sigma}_u - \frac{1}{N}\boldsymbol{\beta}_u\right)\boldsymbol{q}^{\infty}_u + \boldsymbol{\mu}_u \right]\,du - \boldsymbol{M}^\prime_t,
\end{align*}
where we have written
$$
\boldsymbol{M}^\prime_t := \boldsymbol{\mathcal{P}}_t\left( \int_0^T\left[ \boldsymbol{\mathcal{P}}_u\left( \boldsymbol{C}_u\boldsymbol{x}^{\infty}_u\right) - \left(\boldsymbol{\Sigma}_u - \frac{1}{N}\boldsymbol{\beta}_u\right)\boldsymbol{q}^{\infty}_u + \boldsymbol{\mu}_u \right]\,du \right).
$$
Applying Lemma \ref{lem:TestingAgainstA0}, we conclude that $\left(\boldsymbol{q}^\infty,\boldsymbol{x}^\infty,\boldsymbol{M}^\infty := \widetilde{\boldsymbol{M}} + \boldsymbol{M}^\prime \right)$ solves the FBSDE \eqref{eq:FBSDEHeterogConstrained}. 

\end{proof}

\begin{lemma}
    There exists at most one Nash equilibrium $\boldsymbol{\nu}$ for the constrained game satisfying $\boldsymbol{\nu} \in \mathbb{S}_{c,(N)}.$ 
\end{lemma}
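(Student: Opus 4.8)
The plan is a monotonicity argument on the difference of two solutions, in the spirit of the contraction estimate in the proof of Theorem \ref{thm:NashEquilibriumGeneralSetting}, but exploiting the fact that in the constrained regime \emph{both} endpoints of the inventory are pinned. Suppose $\boldsymbol{\nu},\boldsymbol{\nu}'\in\mathbb{S}_{c,(N)}$ are two constrained Nash equilibria. By Proposition \ref{prop:CharacterizationNashConstrained}, each one, together with its inventory and an adjoint martingale in $\mathcal{M}_{(N)}$, solves \eqref{eq:FBSDEHeterogConstrained}; write $(\boldsymbol{q},\boldsymbol{x},\boldsymbol{M})$ and $(\boldsymbol{q}',\boldsymbol{x}',\boldsymbol{M}')$ for the two triples, with $\boldsymbol{x}=\boldsymbol{K}\boldsymbol{\nu}$ and $\boldsymbol{x}'=\boldsymbol{K}\boldsymbol{\nu}'$, and set $\Delta\boldsymbol{q}:=\boldsymbol{q}-\boldsymbol{q}'$, $\Delta\boldsymbol{x}:=\boldsymbol{x}-\boldsymbol{x}'$, $\Delta\boldsymbol{M}:=\boldsymbol{M}-\boldsymbol{M}'$. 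Subtracting the two copies of \eqref{eq:FBSDEHeterogConstrained} (the inhomogeneous $\boldsymbol{\mu}_t$ terms cancelling) yields a linear homogeneous FBSDE for the differences whose boundary data are $\Delta\boldsymbol{q}_0=\boldsymbol{0}$ and $\Delta\boldsymbol{q}_T=\boldsymbol{0}$, since both equilibria share the same initial inventory and both liquidate fully.

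Next I would apply It\^o's product formula to $t\mapsto\Delta\boldsymbol{q}_t\cdot\Delta\boldsymbol{x}_t$. As $\Delta\boldsymbol{q}$ has absolutely continuous paths there is no bracket term, and integrating over $[0,T]$ and taking expectations the boundary contribution $\mathbb{E}[\Delta\boldsymbol{q}_T\cdot\Delta\boldsymbol{x}_T]-\mathbb{E}[\Delta\boldsymbol{q}_0\cdot\Delta\boldsymbol{x}_0]$ vanishes identically; this is exactly where strict liquidation replaces the terminal-penalty positivity $\underline{D}>0$ used in the unconstrained case. One is left with
\begin{align*}
0 = \mathbb{E}\Bigg[\int_0^T \Delta\boldsymbol{x}_t\cdot\boldsymbol{K}_t^{-1}\Delta\boldsymbol{x}_t\,dt\Bigg] + \mathbb{E}\Bigg[\int_0^T \Delta\boldsymbol{q}_t\cdot\Big(\boldsymbol{\Sigma}_t-\tfrac{1}{N}\boldsymbol{\beta}_t\Big)\Delta\boldsymbol{q}_t\,dt\Bigg] - \mathbb{E}\Bigg[\int_0^T \Delta\boldsymbol{q}_t\cdot\boldsymbol{\mathcal{P}}_t(\boldsymbol{C}_t\Delta\boldsymbol{x}_t)\,dt\Bigg].
\end{align*}
The martingale integral $\int_0^T\Delta\boldsymbol{q}_t\cdot d\Delta\boldsymbol{M}_t$ must be shown to have zero expectation; since $\boldsymbol{M},\boldsymbol{M}'$ lie a priori only in $\mathcal{M}_{(N)}$, I would justify this by localizing along a sequence of stopping times $\tau_n$ reducing $\Delta\boldsymbol{M}$ and passing to the limit using $\Delta\boldsymbol{q}\in\mathbb{S}_{(N)}$ together with dominated convergence.

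For the cross term I would invoke verbatim the Young/Jensen/tower estimate from the proof of Theorem \ref{thm:NashEquilibriumGeneralSetting}, namely $\big|\mathbb{E}[\Delta\boldsymbol{q}_t\cdot\boldsymbol{\mathcal{P}}_t(\boldsymbol{C}_t\Delta\boldsymbol{x}_t)]\big|\leqslant\mathbb{E}\big[\frac{a^2}{2}\sum_{i=1}^N (\Delta x^i_t)^2/(2\kappa^i_t)^2 + \frac{1}{2a^2}\sum_{i=1}^N(\alpha^i_t)^2(\Delta q^i_t)^2\big]$, valid for every $a>0$. Choosing $a$ as in Remark \ref{rem:KeyRemark}, so that $c_1:=c_1(a)>0$ and $c_2:=c_2(1/a)>0$ in the notation of \eqref{eq:KeyConstants}, the displayed identity collapses to
\[
0 \geqslant c_1\|\Delta\boldsymbol{x}\|^2 + c_2\|\Delta\boldsymbol{q}\|^2,
\]
which forces $\Delta\boldsymbol{x}=\boldsymbol{0}$ and $\Delta\boldsymbol{q}=\boldsymbol{0}$ in $\mathbb{A}_{(N)}$, whence $\boldsymbol{\nu}=\boldsymbol{K}^{-1}\boldsymbol{x}=\boldsymbol{K}^{-1}\boldsymbol{x}'=\boldsymbol{\nu}'$ and uniqueness follows. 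The main obstacle is twofold: the integrability bookkeeping needed to annihilate the martingale term (which is precisely what the restriction to $\mathbb{S}_{c,(N)}$ in the statement buys us, via localization), and securing the strict positivity of $c_1$ and $c_2$, which holds under the weak-interaction hypothesis governing this section as in Theorem \ref{thm:NEStrictLiquidation} and Remark \ref{rem:KeyRemark}.
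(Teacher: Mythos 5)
Your proposal is correct and follows essentially the same route as the paper: both reduce uniqueness to uniqueness of solutions of the FBSDE \eqref{eq:FBSDEHeterogConstrained} via Proposition \ref{prop:CharacterizationNashConstrained}, apply It\^o's product formula to $\Delta\boldsymbol{q}\cdot\Delta\boldsymbol{x}$, reuse the Young/Jensen/tower estimates and the constants $c_1, c_2$ of Remark \ref{rem:KeyRemark} from the proof of Theorem \ref{thm:NashEquilibriumGeneralSetting}, and conclude from the vanishing boundary terms $\Delta\boldsymbol{q}_0=\Delta\boldsymbol{q}_T=\boldsymbol{0}$ that $c_1\|\Delta\boldsymbol{x}\|^2+c_2\|\Delta\boldsymbol{q}\|^2\leqslant 0$. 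Your explicit localization argument for annihilating the term $\int_0^T\Delta\boldsymbol{q}_t\cdot d\Delta\boldsymbol{M}_t$ is a detail the paper leaves implicit, and it is a welcome addition rather than a deviation.
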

\begin{proof}
    Let us show that the system \eqref{eq:FBSDEHeterogConstrained} admits at most one solution $\left(\boldsymbol{\nu},\boldsymbol{q},\boldsymbol{M}\right) \in \mathbb{S}_{c, (N)} \times \mathbb{S}_{(N)} \times \mathcal{M}_{(N)}.$ Once we establish this, we can promptly conclude that the unique solution to \eqref{eq:FBSDEHeterogConstrained} is the only Nash equilibrium for the constrained problem. In effect, let us assume $\left(\boldsymbol{q},\boldsymbol{x},\boldsymbol{M}\right)$ and $\left(\boldsymbol{q}^\prime,\boldsymbol{x}^\prime,\boldsymbol{M}^\prime\right)$ are two solutions of \eqref{eq:FBSDEHeterogConstrained} having this regularity. We set $\left(\Delta\boldsymbol{q},\Delta\boldsymbol{x},\Delta\boldsymbol{M}\right) := \left(\boldsymbol{q} - \boldsymbol{q}^\prime,\boldsymbol{x} - \boldsymbol{x}^\prime,\boldsymbol{M} - \boldsymbol{M}^\prime\right).$ Using It\^o's formula and conducting estimates similar to those carried out in the first part of the proof of Theorem \ref{thm:NashEquilibriumGeneralSetting}, we obtain
    \begin{align*}
        \mathbb{E}\left[\Delta \boldsymbol{q}_T \cdot \Delta\boldsymbol{x}_T - \Delta \boldsymbol{q}_0 \cdot \Delta\boldsymbol{x}_0 \right] &\geqslant c_1\|\Delta\boldsymbol{x}\|^2 + c_2\|\Delta\boldsymbol{q}\|^2.
    \end{align*}
    Since $\Delta \boldsymbol{q}_T = 0$ and, likewise, $\Delta \boldsymbol{q}_0 = 0,$ $\mathbb{P}-$a.s., we conclude that $\Delta \boldsymbol{q} \equiv 0$ and $\Delta \boldsymbol{x} \equiv 0,$ whence $\Delta \boldsymbol{M} \equiv 0$ as well. 
\end{proof}
\begin{remark}
For $\boldsymbol{\nu}^\infty$ as in Theorem \ref{thm:NEStrictLiquidation}, it is straightforward to check that the condition $\boldsymbol{\nu}^\infty \in \mathbb{S}_{c,(N)}$ is true if we assume constant coefficients under weak interaction, cf. \cite[Theorem 2.2]{schied2017state}.
\end{remark}


It is pertinent to register a further consequence of estimate \eqref{eq:MainEstimate_LimitInA}:

\begin{corollary} \label{cor:EstimateInN_Constrained}
Under the assumptions and notations of Theorem \ref{thm:NEStrictLiquidation}, we have
$$
\|\boldsymbol{\nu}^\infty\|^2 + \|\boldsymbol{q}^\infty\|^2 \leqslant CR_N(\epsilon),
$$
where $R_N$ is defined in \eqref{eq:SecondToLast_LimitInA}, $C$ and $\epsilon$ are fixed positive constants depending only on $\overline{\alpha},\, \underline{\kappa},\, \underline{\lambda}$ and $\left\{ \left\|\mu^i\right\| \right\}_{i=1}^N.$
\end{corollary}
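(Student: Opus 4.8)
The plan is to pass the uniform bound \eqref{eq:MainEstimate_LimitInA} to the weak limit constructed in Theorem \ref{thm:NEStrictLiquidation}, exploiting the weak lower semicontinuity of the Hilbert-space norm on $\mathbb{A}_{(N)} = \left(\mathbb{L}^2\right)^N.$

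First I would recall that \eqref{eq:MainEstimate_LimitInA} holds for every $\boldsymbol{A}$ with $A^i \geqslant A > 0,$ with a constant $C$ independent of $\boldsymbol{A}$ (and of $N$) and an $\epsilon > 0$ fixed as in the proof of Theorem \ref{thm:NEStrictLiquidation}; both depend only on $\overline{\alpha},\, \underline{\kappa},\, \underline{\lambda}$ and $\left\{\|\mu^i\|\right\}_{i=1}^N,$ through the choice of $a$ and through $R_N(\epsilon)$ defined in \eqref{eq:SecondToLast_LimitInA}. Evaluating this estimate along the subsequence $\boldsymbol{A}^k$ supplied by Theorem \ref{thm:NEStrictLiquidation}, and discarding the nonnegative terminal-penalty term $A\left\|\boldsymbol{q}^{\boldsymbol{A}^k}_T\right\|_{L^2(\Omega,\mathcal{F})^N}^2,$ I obtain the $k$-uniform bound
$$
\left\|\boldsymbol{\nu}^{\boldsymbol{A}^k}\right\|^2 + \left\|\boldsymbol{q}^{\boldsymbol{A}^k}\right\|^2 \leqslant C R_N(\epsilon).
$$

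Next, since $\mathbb{A}_{(N)}$ is a Hilbert space, its norm is weakly sequentially lower semicontinuous; combining this with the weak convergences $\boldsymbol{\nu}^{\boldsymbol{A}^k} \rightharpoonup \boldsymbol{\nu}^\infty$ and $\boldsymbol{q}^{\boldsymbol{A}^k} \rightharpoonup \boldsymbol{q}^\infty$ recorded in \eqref{eq:Convergences_LimitInA} gives $\|\boldsymbol{\nu}^\infty\|^2 \leqslant \liminf_k \|\boldsymbol{\nu}^{\boldsymbol{A}^k}\|^2$ and $\|\boldsymbol{q}^\infty\|^2 \leqslant \liminf_k \|\boldsymbol{q}^{\boldsymbol{A}^k}\|^2.$ Adding these and using the superadditivity of $\liminf,$ I conclude
$$
\|\boldsymbol{\nu}^\infty\|^2 + \|\boldsymbol{q}^\infty\|^2 \leqslant \liminf_{k\to\infty}\left( \left\|\boldsymbol{\nu}^{\boldsymbol{A}^k}\right\|^2 + \left\|\boldsymbol{q}^{\boldsymbol{A}^k}\right\|^2 \right) \leqslant C R_N(\epsilon),
$$
which is exactly the asserted bound, with $C$ and $\epsilon$ inheriting the stated dependence.

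I expect essentially no genuine obstacle here: the whole argument rests on weak lower semicontinuity of the norm, and the right-hand side transfers verbatim precisely because $R_N(\epsilon)$ does not depend on $k.$ The only point meriting a word of care is to discard the terminal term \emph{before} passing to the limit and with the correct sign --- one cannot retain $A^{k}\|\boldsymbol{q}^{\boldsymbol{A}^k}_T\|^2$ in the limit since $A^k \to \infty,$ but this is harmless because the term is nonnegative (and, in any case, $\boldsymbol{q}^{\boldsymbol{A}^k}_T \to 0$ strongly by \eqref{eq:Convergences_LimitInA}).
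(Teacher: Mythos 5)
Your proposal is correct and takes essentially the same route as the paper: both drop the nonnegative terminal-penalty term in \eqref{eq:MainEstimate_LimitInA} and pass the resulting $k$-uniform bound to the weak limit via weak lower semicontinuity of the norm, which the paper invokes through \cite[Proposition 3.5 (iii)]{brezis2010functional}. No gaps.
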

\begin{proof}
From the uniform boundedness principle (see \cite[Proposition 3.5 (iii)]{brezis2010functional}), alongside estimate \eqref{eq:MainEstimate_LimitInA} mentioned above, we have
$$
\|\boldsymbol{\nu}^\infty\|^2 + \|\boldsymbol{q}^\infty\|^2 \leqslant \liminf_{k \rightarrow \infty } \left(\|\boldsymbol{\nu}^{\boldsymbol{A}^k}\|^2 + \|\boldsymbol{q}^{\boldsymbol{A}^k}\|^2 \right) \leqslant CR_N(\epsilon). 
$$  
\end{proof}

\subsection{Relating the constrained $N-$player game with the corresponding MFG}

Let us assume hypotheses of Theorem \ref{thm:NEStrictLiquidation} hold. We denote by $\left( \boldsymbol{q}^N, \boldsymbol{x}^N = \boldsymbol{K}\boldsymbol{\nu}^N, \boldsymbol{M}^N \right)$ a Nash equilibrium, as obtained in this Theorem, corresponding to a population size $N$ and $\boldsymbol{\mu} \equiv 0.$ 
We devote this subsection to investigating the convergence properties of the Nash equilibrium $\boldsymbol{\nu}^N$ as $N$ tends to infinity. Our first remark follows from Corollary \ref{cor:EstimateInN_Constrained}.
\begin{corollary} \label{cor:AvgBddnessInN}
If $\left\{q^i_0\right\}_{i\geqslant 1}$ is $L^2-$bounded, and the assumptions of Theorem \ref{thm:NEStrictLiquidation} are valid, then the Nash equilibrium $\boldsymbol{\nu}^N = \left(\nu^{N,1},...,\nu^{N,N} \right)^\intercal$ and corresponding inventory $\boldsymbol{q}^N = \left(q^{N,1},...,q^{N,N}\right)^\intercal$ satisfy
$$
\sup_N \left\{ \frac{1}{N} \sum_{i=1}^N \left( \left\|\nu^{N,i} \right\|^2 + \left\|q^{N,i} \right\|^2 \right) \right\} < \infty.
$$
\end{corollary}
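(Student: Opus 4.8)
The plan is to unpack the constant $R_N(\epsilon)$ on the right-hand side of the estimate furnished by Corollary~\ref{cor:EstimateInN_Constrained} and to exhibit its linear growth in $N$, so that dividing by $N$ produces a Cesàro average of the initial inventories' $L^2$-norms. Since throughout this subsection $\boldsymbol{\mu}\equiv 0$, the contribution $\frac{1}{2\epsilon}\sum_{i=1}^N\|\mu^i\|^2$ in \eqref{eq:SecondToLast_LimitInA} drops out, leaving $R_N(\epsilon)=\sum_{i=1}^N C_i(\epsilon)$, with $C_i(\epsilon)$ as introduced right after \eqref{eq:PassingLimit2}. I would also use that the equilibrium $\boldsymbol{\nu}^\infty=\boldsymbol{\nu}^N$ of Theorem~\ref{thm:NEStrictLiquidation} satisfies $\|\boldsymbol{\nu}^N\|^2=\sum_{i=1}^N\|\nu^{N,i}\|^2$ and $\|\boldsymbol{q}^N\|^2=\sum_{i=1}^N\|q^{N,i}\|^2$, by definition of the product norm.

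The main step is an $N$-uniform bound on each $C_i(\epsilon)$ in terms of $\mathbb{E}[(q^i_0)^2]$. Invoking the bounds $\kappa^i_t\leqslant\overline{\kappa}$, $\lambda^i_t\leqslant\overline{\lambda}$ and $\alpha^i_t\leqslant\overline{\alpha}$ from \textbf{A1}, and computing the elementary integrals $\int_0^T(T-t)^2\,dt=T^3/3$ and $\int_0^T(T-t)\,dt=T^2/2$, one obtains
$$
C_i(\epsilon)\leqslant \frac{\mathbb{E}[(q^i_0)^2]}{T^2}\left[\overline{\kappa}\,T+\left(\overline{\lambda}+\frac{\overline{\alpha}^2}{4\epsilon}\right)\frac{T^3}{3}+\frac{\overline{\alpha}}{N}\frac{T^2}{2}\right]\leqslant \widetilde{C}\,\mathbb{E}[(q^i_0)^2],
$$
where $\widetilde{C}$ depends only on $T,\overline{\kappa},\overline{\lambda},\overline{\alpha}$ and $\epsilon$. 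Crucially, the only $N$-dependent term carries the harmless factor $1/N\leqslant 1$, so $\widetilde{C}$ can be chosen independently of $N$; this is the one place where care is needed, and it is precisely why the estimate survives the summation.

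Summing over $i$ then gives $R_N(\epsilon)\leqslant\widetilde{C}\sum_{i=1}^N\mathbb{E}[(q^i_0)^2]$, and feeding this back into Corollary~\ref{cor:EstimateInN_Constrained} yields
$$
\sum_{i=1}^N\left(\|\nu^{N,i}\|^2+\|q^{N,i}\|^2\right)\leqslant C\widetilde{C}\sum_{i=1}^N\mathbb{E}[(q^i_0)^2].
$$
Dividing by $N$ and using the hypothesized $L^2$-boundedness of $\{q^i_0\}_{i\geqslant 1}$ — say $\mathbb{E}[(q^i_0)^2]\leqslant M$ for all $i$ — the right-hand side is bounded by $C\widetilde{C}M$, a quantity free of $N$. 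Taking the supremum over $N$ closes the argument. No genuine obstacle arises here: the whole content is the observation that $R_N(\epsilon)$ accumulates $N$ uniformly controlled summands, so that its Cesàro mean stays bounded.
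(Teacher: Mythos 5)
Your proposal is correct and is exactly the argument the paper intends: the paper states this corollary as an immediate consequence of Corollary \ref{cor:EstimateInN_Constrained}, and your unpacking of $R_N(\epsilon)$ — dropping the $\mu^i$ terms since $\boldsymbol{\mu}\equiv 0$ in this subsection, bounding each $C_i(\epsilon)$ by $\widetilde{C}\,\mathbb{E}\left[(q^i_0)^2\right]$ via \textbf{A1} with the $1/N$ factor absorbed, and dividing the resulting linear-in-$N$ bound by $N$ — is precisely the omitted computation. The key points you flag (that $\widetilde{C}$, and the constants $C,\epsilon$ from Corollary \ref{cor:EstimateInN_Constrained}, are independent of $N$) are indeed the only places requiring care, and you handle them correctly.
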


For the remainder of the present section, we stipulate that the hypotheses of Theorem \ref{thm:NEStrictLiquidation} are in force. Moreover, we assume the following:
\begin{itemize}
    \item The probability space $\left(\Omega,\mathcal{F},\mathbb{P}\right)$ supports standard independent Brownian motions $\boldsymbol{W},\boldsymbol{W}^0,\boldsymbol{W}^1,\boldsymbol{W}^2,...,$ where $\boldsymbol{W}^0$ is $d^0-$dimensional, and all others are $d-$dimensional, for two given integers $d^0,d\geqslant 1;$
    \item Initial inventories $\left\{q^i_0\right\}_{i\geqslant 1}$ are independent and identically distributed;
    \item For each $i\in \mathcal{N},$ the filtration $\mathbb{F}^i$ is the $\mathbb{P}-$augmentation of the filtration generated by $q^i_0,$ $W^0$ and $W^i;$
    \item There exist measurable functions $\alpha,\kappa,\lambda$ and $\beta$ such that
    $$
    \alpha^i_t = \alpha\left( q^i_0, W^0_{\cdot\wedge t}, W^i_{\cdot\wedge t}\right),\, \kappa^i_t = \kappa\left( q^i_0, W^0_{\cdot\wedge t}, W^i_{\cdot\wedge t}\right),\, \lambda^i_t = \lambda\left( q^i_0, W^0_{\cdot\wedge t}, W^i_{\cdot\wedge t}\right)
    $$
    and 
    $$
    \beta^i_t = \beta\left( q^i_0, W^0_{\cdot\wedge t}, W^i_{\cdot\wedge t}\right);
    $$
    \item The functions $\alpha,\kappa,\lambda$ and $\beta$ satisfy 
    $$
    \underline{\alpha} \leqslant \alpha \leqslant \overline{\alpha},\, \underline{\kappa} \leqslant \kappa \leqslant \overline{\kappa},\, \underline{\lambda} \leqslant \lambda \leqslant \overline{\lambda} \text{ and } \left|\beta\right|\leqslant \overline{\beta};
    $$
    \item The constants $\underline{\kappa},\, \underline{\lambda}$ and $\overline{\alpha}$ satisfy
    $$
    \overline{\alpha}^2 < 4 \underline{\lambda}\, \underline{\kappa}.
    $$
\end{itemize}
Let us also write $\mathbb{F}^0$ for the $\mathbb{P}-$augmentation of the filtration generated by $\boldsymbol{W}^0,$ and likewise, we denote by $\overline{\mathbb{F}}$ the one obtained through this process from $\boldsymbol{W}^0,\boldsymbol{W}$ and a square-integrable random variable $q_0$ independent of $\left\{ q^i_0 \right\}_{i\geqslant 1},$ having the same distribution of these.  

In this context, we put ourselves in the framework of \cite{fu2018mean}, where the subsequent MFG is studied: 
$$
\begin{cases}
1. \text{ fix an } \mathbb{F}^0-\text{progressively measurable process } \mu \in \mathbb{L}^2;\\
2. \text{ solve } \xi^* = \argmax_{\xi} \mathbb{E}\left[\int_0^T\left\{\alpha_t \mu_t q_t - \lambda_t \left(q_t\right)^2 - \kappa_t \left(\xi_t\right)^2 \right\}\,dt\right] ,\\
\hspace{0.4cm}\text{where } \xi \in \mathbb{L}^2 \text{ is } \overline{\mathbb{F}}-\text{progressively measurable, } dq_t = \xi_t\,dt,\, q_T = 0;\\
3. \text{ search for a fixed point } \mu^*_t = \mathbb{E}\left[\xi^*_t|\mathcal{F}^0_t\right], \text{ a.e.a.s.}
\end{cases}
$$
In this work, the authors prove that the above problem is equivalent to solving the FBSDE system 
\begin{equation} \label{eq:FBSDE_MFG}
    \begin{cases}
        dq^*_t = \frac{x^*_t}{2\kappa_t}\,dt,\\
        -dx^*_t = \alpha_t  \mathbb{E}\left[ \frac{x^*_t}{2\kappa_t} \Big|\mathcal{F}^0_t \right]\,dt - 2\lambda_t q^*_t\,dt - \boldsymbol{Z}^*_t\cdot d\widetilde{\boldsymbol{W}}_t,\\
        q^*_0 = q_0 \text{ and } q^*_T = 0,
    \end{cases}
\end{equation}
with corresponding optimal aggregation effect 
\begin{equation} \label{eq:MFG_rate}
    \mu^*_t = \mathbb{E}\left[ \frac{x^*_t}{2\kappa_t} \Big|\mathcal{F}^0_t \right],
\end{equation}
where we have written $\widetilde{W} = \left( \boldsymbol{W}^0, \boldsymbol{W}\right).$ We introduce the best response $\xi^{*,i}$ of player $i$ to the mean-field $\mu^*$ described in \eqref{eq:MFG_rate} as in \cite[Section 3]{fu2018mean}, i.e.,
$$
\xi^{*,i} = \argmax_{\xi^i \in \mathbb{A}^i} \mathbb{E}\left[\int_0^T\left\{\alpha^i_t \mu^*_t q^i_t - \lambda^i_t \left(q^i_t\right)^2 - \kappa^i_t \left(\xi^i_t\right)^2 \right\}\,dt\right] ,
$$
where $q^i$ is initiated at $q^i_0,$ and is constrained to $dq^i_t = \xi^i_t\,dt$ and $q^i_T = 0.$ We proceed to list the most important properties of the processes $\left\{ \xi^{*,i} \right\}_{i\geqslant 1}$ for our present purposes. 
\begin{lemma} \label{lem:BestResponsesProperties}
(a) For some $\boldsymbol{Z}^{*,i} \in \mathbb{A}_i,$ the process $\left(q^{*,i},x^{*,i} = 2\kappa^i\xi^{*,i},\boldsymbol{Z}^{*,i}\right)$ solves the FBSDE
$$
\begin{cases}
dq^{*,i}_t = \frac{x^{*,i}_t}{2\kappa^i_t}\,dt,\\
-dx^{*,i}_t = \alpha^i_t \mu^*_t\,dt - 2\lambda^i_t q^{*,i}_t\,dt - \boldsymbol{Z}^{*,i}_t\cdot d\widetilde{\boldsymbol{W}}^i_t,\\
q^{*,i}_0 = q_0 \text{ and } q^{*,i}_T= 0,
\end{cases}
$$
where $\widetilde{\boldsymbol{W}}^i = \left( \boldsymbol{W}^0, \boldsymbol{W}^i \right).$

(b) The relation 
$$
\mu^*_t = \mathbb{E}\left[\xi^{*,i}_t \Big|\mathcal{F}^0_t \right], \text{ a.e.a.s.,}
$$
holds.

(c) The estimate
$$
\left\| \frac{1}{N}\sum_{i=1}^N \xi^{*,i} - \mu^* \right\| = O\left(\frac{1}{\sqrt{N}}\right)
$$
is valid.
\end{lemma}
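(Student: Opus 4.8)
The plan is to establish (a) by the variational route used for Proposition \ref{prop:CharacterizationNashConstrained}, now for a single agent confronting the fixed, $\mathbb{F}^0$-adapted aggregate $\mu^*$. Since $\kappa^i,\lambda^i>0$, the reward $\xi^i\mapsto\mathbb{E}[\int_0^T(\alpha^i_t\mu^*_tq^i_t-\lambda^i_t(q^i_t)^2-\kappa^i_t(\xi^i_t)^2)\,dt]$ is strictly concave on $\mathbb{A}_{c,i}$, so $\xi^{*,i}$ is characterized by the vanishing of its G\^ateaux derivative along every $w^i\in\mathbb{A}_{c,i}^0$. Computing that derivative exactly as in Lemma \ref{lem:GateauxDerivWellDefnd}, and projecting onto $\mathbb{F}^i$ via the tower property, the single-player instance of Lemma \ref{lem:TestingAgainstA0} gives that $x^{*,i}_t+\int_0^t(\alpha^i_u\mu^*_u-2\lambda^i_uq^{*,i}_u)\,du$ is an $\mathbb{F}^i$-martingale. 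Because $\mathbb{F}^i$ is the Brownian filtration generated by $(q^i_0,W^0,W^i)$, the martingale representation theorem yields $\boldsymbol Z^{*,i}\in\mathbb{A}_i$ representing its differential as $\boldsymbol Z^{*,i}_t\cdot d\widetilde{\boldsymbol W}^i_t$; rearranging produces precisely the BSDE in (a), while the forward equation and the constraints $q^{*,i}_0=q^i_0$, $q^{*,i}_T=0$ are encoded in $\mathbb{A}_{c,i}$.

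For (b) the point is that the FBSDE in (a) is literally the representative-agent system \eqref{eq:FBSDE_MFG}, with the generic idiosyncratic noise $W$ replaced by $W^i$, the same coefficient functions $\alpha,\kappa,\lambda$, and the same $\mathbb{F}^0$-adapted input $\mu^*$. Under the weak-interaction bound $\overline\alpha^2<4\underline\lambda\,\underline\kappa$ this frozen-mean-field FBSDE is uniquely solvable, so there is a single measurable solution map $\Phi$ with $(q^{*,i},x^{*,i},\boldsymbol Z^{*,i})=\Phi(q^i_0,W^0,W^i)$ and $(q^*,x^*,\boldsymbol Z^*)=\Phi(q_0,W^0,W)$. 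Since $(q^i_0,W^i)\stackrel{d}{=}(q_0,W)$ and both are independent of $W^0$, conditioning on $\mathcal{F}^0_t$ integrates out the idiosyncratic data identically for the two, whence $\mathbb{E}[\xi^{*,i}_t|\mathcal{F}^0_t]=\mathbb{E}[\xi^*_t|\mathcal{F}^0_t]=\mu^*_t$ by the fixed-point relation \eqref{eq:MFG_rate}.

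For (c) I would invoke the conditional law of large numbers induced by the common noise. As $x^{*,i}_t$ is $\mathcal{F}^i_t$-measurable, the map $\Phi$ furnishes one functional $h_t$ with $\xi^{*,i}_t=h_t(q^i_0,W^0_{\cdot\wedge t},W^i_{\cdot\wedge t})$; conditioning on $\mathcal{F}^0_t$ freezes $W^0_{\cdot\wedge t}$, and the pairs $(q^i_0,W^i_{\cdot\wedge t})$ are i.i.d. and independent of $\mathcal{F}^0_t$, so $\{\xi^{*,i}_t\}_i$ are conditionally i.i.d. given $\mathcal{F}^0_t$ with conditional mean $\mu^*_t$ by (b). Setting $\zeta^i_t:=\xi^{*,i}_t-\mu^*_t$, the off-diagonal terms vanish under $\mathbb{E}[\,\cdot\,|\mathcal{F}^0_t]$, so
\begin{equation*}
\left\|\frac1N\sum_{i=1}^N\xi^{*,i}-\mu^*\right\|^2=\mathbb{E}\left[\int_0^T\Big(\frac1N\sum_{i=1}^N\zeta^i_t\Big)^2\,dt\right]=\frac1N\,\mathbb{E}\left[\int_0^T(\zeta^1_t)^2\,dt\right]=\frac1N\|\zeta^1\|^2.
\end{equation*}
Standard linear-quadratic estimates (comparison with the TWAP strategy, as in the proof of Theorem \ref{thm:NEStrictLiquidation}) bound $\|\xi^{*,i}\|$ and $\|\mu^*\|$ uniformly in $i$ and $N$ in terms of $\overline\alpha,\underline\kappa,\underline\lambda$ and $\|q_0\|_{L^2\left(\Omega,\mathcal{F}\right)}$, so $\|\zeta^1\|\leqslant C$ and the $O(N^{-1/2})$ rate follows.

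The main obstacle is the rigorous justification of the single-functional representation $h_t$, equivalently the conditional independence of $\{\xi^{*,i}_t\}_i$ given $\mathcal{F}^0_t$. This rests on strong uniqueness of the frozen-mean-field FBSDE together with the symmetric dependence of its data on the idiosyncratic inputs, which permits transporting the representative solution to player $i$ by the relabeling $W\mapsto W^i$ while keeping the common noise $W^0$ and the input $\mu^*$ fixed; the adaptedness of $x^{*,i}$ then confines $h_t$ to the truncated paths $W^0_{\cdot\wedge t},W^i_{\cdot\wedge t}$, which is exactly what makes the conditioning on $\mathcal{F}^0_t$ decouple the players.
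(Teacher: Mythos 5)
Your proposal is correct and takes essentially the same route as the paper: (a) is the decoupled single-player case of Proposition \ref{prop:CharacterizationNashConstrained} combined with martingale representation in the Brownian filtration $\mathbb{F}^i$, while (b) and (c) rest on exactly the strong-uniqueness/exchangeability transfer and the iterated-conditioning, conditional-independence computation that the paper delegates to \cite[Section 3, Theorem 3.3]{fu2018mean}. The only difference is that you spell out in full the arguments the paper merely cites; note also that the uniform bound in (c) is automatic because neither $\mu^*$ nor the law of $\xi^{*,1}$ depends on $N$, so the TWAP comparison, though harmless, is not needed.
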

\begin{proof}
Item ($a$) is a particular case of the characterization of constrained Nash equilibria made in Proposition \ref{prop:CharacterizationNashConstrained} (we remark that the lack of interaction in the objective functionals decouples the system). The result of item ($b$) is established in \cite[Section 3]{fu2018mean}, and we can prove the estimate of ($c$) with the same iterated conditioning and conditional independence arguments used in the proof of \cite[Theorem 3.3]{fu2018mean} (c.f. equation (3.5) therein).  
\end{proof}

With Lemma \ref{lem:BestResponsesProperties} at hand, we are ready to prove the main result of this subsection.
\begin{theorem} \label{thm:ConvergenceToTheMFG}
    As long as $\boldsymbol{\nu}^N = \left( \nu^{N,1},...,\nu^{N,N}\right)^\intercal \in \mathbb{S}_{c,(N)},$ it satisfies
    $$
    \left\| \frac{1}{N}\sum_{i=1}^N \nu^{*,i} - \mu^* \right\| = O\left(\frac{1}{\sqrt{N}}\right).
    $$
\end{theorem}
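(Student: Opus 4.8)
The plan is to compare the constrained $N$-player equilibrium rates $\nu^{N,i}$ with the individual best responses $\xi^{*,i}$ to the mean field and to lean on Lemma \ref{lem:BestResponsesProperties}. Writing $\bar\nu^N:=\frac1N\sum_{i=1}^N\nu^{N,i}$ and $\bar\xi^N:=\frac1N\sum_{i=1}^N\xi^{*,i}$, the triangle inequality gives $\|\bar\nu^N-\mu^*\|\leqslant\|\bar\nu^N-\bar\xi^N\|+\|\bar\xi^N-\mu^*\|$, and the second summand is $O(N^{-1/2})$ directly by item ($c$) of Lemma \ref{lem:BestResponsesProperties}. Setting $\Delta\nu^i:=\nu^{N,i}-\xi^{*,i}$ and $S:=\sum_{i=1}^N\|\Delta\nu^i\|^2$, the convexity bound $\|\bar\nu^N-\bar\xi^N\|^2\leqslant\frac1N S$ shows it suffices to prove $S=O(1)$ uniformly in $N$; note $S$ is finite for each fixed $N$ since $\boldsymbol{\nu}^N\in\mathbb{S}_{c,(N)}$ and each $\xi^{*,i}\in\mathbb{A}_i$.

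First I would set up, for each player, the difference FBSDE. Subtracting the system of Lemma \ref{lem:BestResponsesProperties}($a$) from the componentwise form of \eqref{eq:FBSDEHeterogConstrained}, the processes $\Delta q^i:=q^{N,i}-q^{*,i}$ and $\Delta x^i:=x^{N,i}-x^{*,i}=2\kappa^i\Delta\nu^i$ satisfy $d\Delta q^i_t=\Delta x^i_t/(2\kappa^i_t)\,dt$ with $\Delta q^i_0=\Delta q^i_T=0$, while the adjoint drift discrepancy relative to the mean field is $e^i_t:=\mathbb{E}[\frac{\alpha^i_t}{N}\sum_{j\neq i}\nu^{N,j}_t\mid\mathcal{F}^i_t]-\alpha^i_t\mu^*_t$, together with the extra term $\frac1N\beta^i_t q^{N,i}_t$ present in the finite-population adjoint but absent in the mean-field one. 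Applying It\^o's product formula to $\Delta q^i\Delta x^i$ and using both boundary conditions (so the initial endpoint and the martingale part vanish, exactly as in Theorem \ref{thm:NashEquilibriumGeneralSetting}) yields the unconditional identity
$$
\mathbb{E}\int_0^T\left\{2\kappa^i_t(\Delta\nu^i_t)^2+2\lambda^i_t(\Delta q^i_t)^2\right\}dt=\mathbb{E}\int_0^T\Delta q^i_t\Big(e^i_t+\tfrac1N\beta^i_t q^{N,i}_t\Big)\,dt.
$$
Since the left-hand side is coercive with no interaction, a weighted Young inequality absorbing the $\Delta q^i$ contributions produces, for each $i$, a bound $\|\Delta\nu^i\|^2\leqslant C_1\|e^i\|^2+CN^{-2}\|q^{N,i}\|^2$ with $C_1$ depending only on $\underline{\kappa},\underline{\lambda}$.

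The heart of the matter is the aggregate estimate of $\sum_i\|e^i\|^2$. Writing $\frac{\alpha^i}{N}\sum_{j\neq i}\nu^{N,j}=\alpha^i\bar\nu^N-\frac{\alpha^i}{N}\nu^{N,i}$ and using that $\mu^*$ is $\mathbb{F}^0$-adapted, hence $\mathbb{F}^i$-adapted, I decompose $e^i_t=\alpha^i_t\,\mathbb{E}[\frac1N\sum_j\Delta\nu^j_t\mid\mathcal{F}^i_t]+\alpha^i_t\,\mathbb{E}[\rho_{N,t}\mid\mathcal{F}^i_t]-\frac{\alpha^i_t}{N}\nu^{N,i}_t$, where $\rho_N:=\bar\xi^N-\mu^*$. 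The crucial observation is that the interaction piece contributes only $\overline{\alpha}^2 S$ in aggregate: by conditional Jensen each $\|\mathbb{E}[\frac1N\sum_j\Delta\nu^j\mid\mathcal{F}^i]\|^2\leqslant\|\frac1N\sum_j\Delta\nu^j\|^2\leqslant\frac1N S$, and summing the $N$ identical bounds gives $\overline{\alpha}^2 S$ with no spurious factor of $N$. For the remainder, $\sum_i\|\mathbb{E}[\rho_N\mid\mathcal{F}^i]\|^2\leqslant N\|\rho_N\|^2=O(1)$ by Lemma \ref{lem:BestResponsesProperties}($c$), while $\frac1{N^2}\sum_i\|\nu^{N,i}\|^2$ and $\frac1{N^2}\sum_i\|q^{N,i}\|^2$ are $O(N^{-1})$ by Corollary \ref{cor:AvgBddnessInN}. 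Summing the per-player bound thus gives a self-referential inequality $S\leqslant\Theta\,S+O(1)$ with $\Theta=(1+\epsilon)C_1\overline{\alpha}^2$; optimizing the Young weights and tracking constants under the standing hypothesis $\overline{\alpha}^2<4\underline{\lambda}\,\underline{\kappa}$ makes $\Theta<1$, whence $S=O(1)$ and the claim follows.

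The main obstacle is precisely this closure. One must split $e^i$ as \emph{interaction plus lower order} rather than bounding its three summands symmetrically, so that the interaction term carries a coefficient arbitrarily close to $\overline{\alpha}^2$ instead of $3\overline{\alpha}^2$; and one must choose the Young weight in the per-player step so that $C_1\overline{\alpha}^2<1$ holds under $\overline{\alpha}^2<4\underline{\lambda}\,\underline{\kappa}$. The delicate bookkeeping point is that the $N$ conditional expectations in the interaction term collapse to a single $\frac1N S$ upon summation, which is what prevents the natural-looking factor $N$ from destroying the estimate; the finiteness of $S$ for each fixed $N$, guaranteed by $\boldsymbol{\nu}^N\in\mathbb{S}_{c,(N)}$, is what legitimizes absorbing $\Theta\,S$ to the left.
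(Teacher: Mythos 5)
Your proof is correct, and it follows the same skeleton as the paper's: comparison of the constrained equilibrium with the best responses $\xi^{*,i}$ of Lemma \ref{lem:BestResponsesProperties}, It\^o's product formula applied to the difference processes with both endpoint terms vanishing (thanks to $\Delta q^i_0 = \Delta q^i_T = 0$), Corollary \ref{cor:AvgBddnessInN} to control the $O(1/N)$ lower-order terms, and the triangle inequality plus Lemma \ref{lem:BestResponsesProperties}($c$) to conclude. Where you genuinely deviate is in how the interaction term is absorbed. The paper keeps the vector structure and absorbs the coupling $\Delta\boldsymbol{q}^N\cdot\boldsymbol{\mathcal{P}}(\boldsymbol{C}\Delta\boldsymbol{x}^N)$ inside the It\^o identity, exactly as in the proof of Theorem \ref{thm:NashEquilibriumGeneralSetting}, via the coercivity constants $c_1,c_2$ of Remark \ref{rem:KeyRemark}; this yields in one stroke $c_1\|\Delta\boldsymbol{x}^N\|^2 + c_2\|\Delta\boldsymbol{q}^N\|^2 \leqslant \frac{C}{N^2}\sum_{i=1}^N\left(\|\nu^{N,i}\|^2+\|q^{N,i}\|^2\right) \leqslant \frac{C}{N}$, hence the sharper intermediate estimate $\left\|\frac{1}{N}\sum_i \nu^{N,i} - \frac{1}{N}\sum_i \xi^{*,i}\right\| = O(1/N)$. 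You instead work player by player, fold the entire coupling (including the $\Delta\nu^j$ contribution) into the error $e^i$, and close the self-referential inequality $S \leqslant \Theta S + O(1)$; your constant bookkeeping is right ($C_1$ can be taken near $1/(8\underline{\kappa}\,\underline{\lambda})$, so $\Theta < 1$ follows comfortably from $\overline{\alpha}^2 < 4\underline{\lambda}\,\underline{\kappa}$, and the conditional-Jensen collapse of the $N$ projected interaction terms to a single $\overline{\alpha}^2 S$ is exactly the point that prevents a stray factor of $N$). The price is that your bootstrap only delivers $S = O(1)$, i.e. an $O(1/\sqrt{N})$ intermediate rate rather than the paper's $O(1/N)$; both suffice for the statement, since the $O(1/\sqrt{N})$ of Lemma \ref{lem:BestResponsesProperties}($c$) dominates in the end, but the paper's one-shot coercivity additionally reveals that the error in the theorem comes entirely from the mean-field approximation of the best responses, not from the game-theoretic coupling.
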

\begin{proof}
We notice that, under the current assumptions, the system \eqref{eq:FBSDEHeterogConstrained} reads componentwise as
\begin{equation} \label{eq:ConvToTheMFG_eq1}
    \begin{cases}
    x^{N,i}_t = 2\kappa^i_t \nu^{N,i}_t,\\
    dq^{N,i}_t = \frac{x^{N,i}_t}{2\kappa^i_t}\,dt,\\
    -dx^{N,i}_t = \alpha^i_t \mathbb{E}\left[ \frac{1}{N}\sum_{j\neq i}\frac{x^{N,j}_t}{2\kappa^j_t} \Big| \mathcal{F}^0_t \right]\,dt - \left(2\lambda^i_t - \frac{1}{N}\beta^i_t\right)q^{N,i}_t\,dt - \boldsymbol{Z}^{N,i}\cdot d\widetilde{\boldsymbol{W}}^i_t,\\
    q^{N,i}_0 = q^i_0 \text{ and } q^{N,i}_T= 0.
    \end{cases}
\end{equation}
Using Lemma \ref{lem:BestResponsesProperties} ($b$), we see that it is licit to write
$$
\mu^*_t = \mathbb{E}\left[ \frac{1}{N}\sum_{j=1}^N \xi^{*,j}_t \Bigg| \mathcal{F}^0_t \right],
$$
in such a way that the optimality system that we have presented in Lemma \ref{lem:BestResponsesProperties} ($a$) becomes
\begin{equation} \label{eq:ConvToTheMFG_eq2}
    \begin{cases}
    x^{*,i}_t = 2\kappa^i_t \xi^{*,i}_t,\\
    dq^{*,i}_t = \frac{x^{*,i}_t}{2\kappa^i_t}\,dt,\\
    -dx^{*,i}_t = \alpha^i_t \mathbb{E}\left[ \frac{1}{N}\sum_{j=1}^N \frac{x^{*,j}_t}{2\kappa^j_t} \Big| \mathcal{F}^0_t \right]\,dt - 2\lambda^i_t q^{*,i}_t\,dt - \boldsymbol{Z}^{*,i}_t\cdot d\widetilde{\boldsymbol{W}}^i_t,\\
    q^{*,i}_0 = q_0 \text{ and } q^{*,i}_T= 0.
    \end{cases}
\end{equation}
Let us write 
$$
\Delta \boldsymbol{q}^N = \left( q^{N,1} - q^{*,1},...,q^{N,N} - q^{*,N}\right)^\intercal
$$
and 
$$
\Delta \boldsymbol{x}^N = \left( x^{N,1} - x^{*,1},...,x^{N,N} - x^{*,N} \right)^\intercal.
$$
Using It\^o's formula, together with \eqref{eq:ConvToTheMFG_eq1} and \eqref{eq:ConvToTheMFG_eq2}, and conducting estimates as in Theorem \ref{thm:NashEquilibriumGeneralSetting}, we obtain
\begin{align} \label{eq:MainEstimateForConvInN}
    \begin{split}
        0 &= \mathbb{E}\left[\Delta \boldsymbol{q}^N_T \cdot \Delta \boldsymbol{x}^N_T - \Delta \boldsymbol{q}^N_0 \cdot \Delta\boldsymbol{x}_0^N \right] \\
        &\geqslant c_1\|\Delta \boldsymbol{x}^N\|^2 + c_2 \|\Delta \boldsymbol{q}^N\|^2 - \frac{C}{N^2}\sum_{i=1}^N\left( \left\|x^{N,i}\right\|^2 + \left\|q^{N,i} \right\|^2\right)\\
        &\geqslant c_1\|\Delta \boldsymbol{x}^N\|^2 + c_2 \|\Delta \boldsymbol{q}^N\|^2 - \frac{C}{N^2}\sum_{i=1}^N\left( \left\|\nu^{N,i}\right\|^2 + \left\|q^{N,i} \right\|^2\right),
    \end{split}
\end{align}
where the positive constants $c_1,c_2$ and $C$ are independent of $N.$ Taking into account Corollary \ref{cor:AvgBddnessInN}, we infer from estimate \eqref{eq:MainEstimateForConvInN} that
\begin{equation} \label{eq:ConcludingEstForConvInN}
    \left\|\frac{1}{N}\sum_{i=1}^N \nu^{N,i} - \frac{1}{N}\sum_{i=1}^N \xi^{*,i}\right\|^2 \leqslant \frac{C}{N}\left\|\Delta \boldsymbol{x}^N \right\|^2 \leqslant \frac{C}{N^2},
\end{equation}
the constant $C>0$ being independent of $N.$ We conclude the proof of the Theorem by combining \eqref{eq:ConcludingEstForConvInN} with Lemma \ref{lem:BestResponsesProperties} ($c$).  
\end{proof}

\section{An extension: the hierarchical game} \label{sec:remarks_major_minor}

\subsection{Adding a major player to the game}

Here, we assume that, in addition to the population of $N$ individuals we described in Section \ref{sec:the_market_model}, there is a leading agent, which we call a leader, labeled by $0.$ The game we study is hierarchical in the sense that this leader has a first-mover advantage. The members of the formerly described population are now called followers. Thus, we look for a Stackelberg-Nash type strategy: it optimizes the leader's performance criteria subject to the condition that minor players are following a corresponding Nash equilibrium. Henceforth, we work under the constrained framework, assuming that all of the hypotheses of Theorem \ref{thm:NEStrictLiquidation} hold.

As the agent $0$ changes her strategy, the main technical challenge is that the population of followers will react to it.  Therefore, we take a two-step approach.  Firstly, we assume a leader's turnover rate is given and study the corresponding NE properties in which the population accommodates. Secondly, we insert this equilibrium as a function of the strategy of trader $0$ in the leader's objective criterion, rendering the problem into a single-player optimization.

We assume that the processes and parameters associated with the leader are all adapted to a given filtration $\mathbb{F}^0,$ which we assume complete and continuous. Thus, for a given $q^0_0 \in L^2\left(\Omega,\mathcal{F}^0_0\right),$ we let $\mathbb{A}_{c,0}$ be the set of processes $\nu^0 \in \mathbb{L}^2$ that are $\mathbb{F}^0-$progressively measurable processes, and that satisfy
$$
\int_0^T \nu^0_t\,dt = -q^0_0,\, \mathbb{P}-\text{a.s.}.
$$
Similarly, we define $\mathbb{A}_{c,0}^0$ just as $\mathbb{A}_{c,0}$ but with the identically vanishing random variable in place of $q^0_0.$ We consider processes $\left\{\alpha^0_t \right\}_{0\leqslant t \leqslant T}, \left\{ \kappa^0_t\right\}_{0\leqslant t \leqslant T}$ and $\left\{ \lambda^0_t\right\}_{0\leqslant t \leqslant T},$ alongside constants $\underline{\alpha}^0,\,\underline{\kappa}^0,\,\underline{\lambda}^0,\,\overline{\alpha}^0,\,\overline{\kappa}^0$ and $\overline{\lambda}^0,$ such that
$$
\underline{\alpha}^0 \leqslant \alpha^0 \leqslant \overline{\alpha}^0,\, \underline{\kappa}^0 \leqslant \kappa^0 \leqslant \overline{\kappa}^0 \text{ and } \underline{\lambda}^0 \leqslant \lambda^0 \leqslant \overline{\lambda}^0.
$$

The agent $0$ trades at a rate $\left\{ \nu^0_t \right\}_{0\leqslant t \leqslant T} \in \mathbb{A}_{c,0}.$ In the leader's viewpoint, the mid-price $s^0$ has the dynamics
$$
ds^0_t = \alpha^0_t \nu^0_t\,dt + \alpha_t\mathbb{E}\left[ \frac{1}{N}\sum_{j=1}^N \nu^j_t\,dt \Bigg| \mathcal{F}^0_t\right] + dP^0_t,
$$
with $\left\{ P^0_t,\mathcal{F}^0_t\right\}_{0\leqslant t \leqslant T}$ being a martingale. Her inventory and a cash process evolve as in \eqref{eq:InvDynamics} and \eqref{eq:CashDynamics}, namely,
$$
dq^0_t = \nu^0_t\,dt
$$
and
$$
dc^0_t = -\left( s^0_t + \kappa^0_t \nu^0_t\right)\nu^0_t\,dt.
$$
We particularize the model described in Section \ref{sec:the_market_model} by considering that the drift in \eqref{eq:MidPriceDynamics} takes a particular form, namely,
\begin{equation} \label{eq:DriftLeaderFollowers}
    \mu^i_t = \mathbb{E}\left[ \alpha^0_t \nu^0_t + o_t |\mathcal{F}^i_t \right] \hspace{1.0cm} (i \in \mathcal{N}),
\end{equation}
where $\left\{ o_t\right\}_{0\leqslant t \leqslant T} \in \mathbb{L}^2$ is the exogenous noise traders' order-flow, see \cite{huang2019mean}.

\subsection{Step 1: Given a leader strategy, minor players accommodate to a Nash equilibrium}

Throughout this section, we fix a process $\nu^0 \in \mathbb{L}^2,$ which we assume to be $\mathbb{F}^0-$progressively measurable, and also the order-flow of the noise traders $o \in \mathbb{L}^2.$ The corresponding Nash equilibrium $\boldsymbol{\nu}^* = \left( \nu^{*,1},...,\nu^{*,N} \right)^\intercal$ of the followers, which we assume henceforth to belong to $\mathbb{S}_{c,(N)},$ is the unique solution of the FBSDE \eqref{eq:FBSDEHeterogConstrained} with $\mu^i$ given by \eqref{eq:DriftLeaderFollowers}. Let us write
$$
\boldsymbol{K}\boldsymbol{\nu}^* = \boldsymbol{y} + \boldsymbol{b},
$$
where
\begin{equation} \label{eq:FBSDE_PureArbPart}
    \begin{cases}
        d\boldsymbol{p}_t = \boldsymbol{K}_t^{-1} \boldsymbol{y}_t\,dt,\\
        -d\boldsymbol{y}_t = \boldsymbol{\mathcal{P}}_t\left( \boldsymbol{C}_t \boldsymbol{y}_t\right)\,dt - \left(\boldsymbol{\Sigma}_t - \frac{1}{N}\boldsymbol{\beta}_t \right)\boldsymbol{p}_t\,dt + \boldsymbol{\mathcal{P}}_t\left(\alpha^0_t \nu^0_t \mathbbm{1}_N \right)\,dt - d\boldsymbol{M}^1_t,\\
        \boldsymbol{p}_0 = \boldsymbol{0} \text{ and } \boldsymbol{p}_T = \boldsymbol{0},
    \end{cases}
\end{equation}
and
\begin{equation} \label{eq:FBSDE_ExecPart}
    \begin{cases}
        d\boldsymbol{h}_t = \boldsymbol{K}_t^{-1} \boldsymbol{b}_t\,dt,\\
        -d\boldsymbol{b}_t = \boldsymbol{\mathcal{P}}_t\left( \boldsymbol{C}_t \boldsymbol{b}_t\right)\,dt - \left(\boldsymbol{\Sigma}_t - \frac{1}{N}\boldsymbol{\beta}_t \right)\boldsymbol{h}_t\,dt + \boldsymbol{\mathcal{P}}_t\left(o_t \mathbbm{1}_N \right)\,dt - d\boldsymbol{M}^2_t,\\
        \boldsymbol{h}_0 = \boldsymbol{q}_0 \text{ and } \boldsymbol{h}_T = \boldsymbol{0},
    \end{cases}
\end{equation}
for suitable $\boldsymbol{M}^1,\boldsymbol{M}^2 \in \mathcal{M}_{(N)},$ where $\mathbbm{1}_N$ denotes the $N-$dimensional vector having all entries equal to one. 

Intuitively, we break $\boldsymbol{\nu}$ in two pieces. The first one, $\boldsymbol{K}^{-1}\boldsymbol{y},$ is a pure arbitrage component, whose all entries are round-trip trades. It is through this that followers seek to seize price movements stemming from the leader program. The second part, $\boldsymbol{K}^{-1}\boldsymbol{b},$ concerns the complete execution of the initial portfolio by time $T$ while facing an exogenous excess order-flow $\left\{ o_t \right\}_t$ stemming from noise traders.

We notice that the mapping $\boldsymbol{T} : \nu^0 \mapsto \boldsymbol{y}$ is linear. It is also bounded, as we show next.

\begin{proposition} \label{prop:OperatorTisBdd}
The linear mapping $\boldsymbol{T}$ is bounded, with
$$
\|\boldsymbol{T}\|_{op} := \sup_{\|\nu^0\| = 1} \|\boldsymbol{T}\nu^0\| \leqslant N^{1/2}\frac{2\overline{\alpha}^0\,\overline{\kappa}}{\left(4\underline{\lambda} - \frac{2}{N}\overline{\beta} \right)^{1/2}\epsilon_g},
$$
where $\epsilon_g$ is the gap
$$
\epsilon_g := 2\sqrt{\underline{\kappa}} - \overline{\alpha}\left( 4\underline{\lambda} - \frac{2}{N}\overline{\beta}\right)^{-1/2} > 0.
$$
\end{proposition}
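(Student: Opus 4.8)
The plan is to reduce the operator estimate to a coercivity bound for the pure-arbitrage system \eqref{eq:FBSDE_PureArbPart}, extracted from an It\^o energy identity. The key preliminary observation is that it is cleaner to work with the arbitrage \emph{rate} $\boldsymbol{r} := \boldsymbol{K}^{-1}\boldsymbol{y}$ (so that $d\boldsymbol{p}_t = \boldsymbol{r}_t\,dt$ and $\boldsymbol{y} = \boldsymbol{K}\boldsymbol{r}$) rather than with $\boldsymbol{y}$ itself. Indeed, since $\boldsymbol{K}_t = \diag(2\kappa^i_t) \leqslant 2\overline{\kappa}\,\boldsymbol{I}^{N\times N}$ pointwise, one has $\|\boldsymbol{T}\nu^0\| = \|\boldsymbol{y}\| = \|\boldsymbol{K}\boldsymbol{r}\| \leqslant 2\overline{\kappa}\|\boldsymbol{r}\|$; this already produces the factor $2\overline{\kappa}$ in the numerator, so it suffices to show $\|\boldsymbol{r}\| \leqslant \sqrt{N}\,\overline{\alpha}^0 \big(4\underline{\lambda}-\tfrac{2}{N}\overline{\beta}\big)^{-1/2}\epsilon_g^{-1}\,\|\nu^0\|$.

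First I would apply It\^o's product formula to $\boldsymbol{p}_t\cdot\boldsymbol{y}_t$. As $\boldsymbol{p}$ has finite variation there is no bracket contribution, the round-trip conditions $\boldsymbol{p}_0 = \boldsymbol{p}_T = \boldsymbol{0}$ annihilate both endpoint terms, and the stochastic integral against $\boldsymbol{M}^1 \in \mathcal{M}_{(N)}$ has vanishing expectation (by a routine localization, using $\boldsymbol{p}\in\mathbb{S}_{(N)}$). Because each $p^i_t$ is $\mathcal{F}^i_t$-measurable, the tower property lets me discard the projections $\boldsymbol{\mathcal{P}}_t$ wherever they are paired against $\boldsymbol{p}$. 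Taking expectations then gives the identity
\begin{equation*}
\mathbb{E}\int_0^T\!\left\{\boldsymbol{r}_t\cdot\boldsymbol{K}_t\boldsymbol{r}_t + \boldsymbol{p}_t\cdot\Big(\boldsymbol{\Sigma}_t - \tfrac{1}{N}\boldsymbol{\beta}_t\Big)\boldsymbol{p}_t\right\}dt = \mathbb{E}\int_0^T\!\left\{\boldsymbol{p}_t\cdot\boldsymbol{C}_t\boldsymbol{K}_t\boldsymbol{r}_t + \alpha^0_t\nu^0_t\sum_{i=1}^N p^i_t\right\}dt,
\end{equation*}
whose left-hand side is coercive: $\boldsymbol{r}\cdot\boldsymbol{K}\boldsymbol{r} \geqslant 2\underline{\kappa}|\boldsymbol{r}|^2$ and $\boldsymbol{p}\cdot(\boldsymbol{\Sigma}-\tfrac1N\boldsymbol{\beta})\boldsymbol{p} \geqslant \tfrac12\big(4\underline{\lambda}-\tfrac2N\overline{\beta}\big)|\boldsymbol{p}|^2$.

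The decisive simplification happens on the right-hand side: the temporary-impact weights in $\boldsymbol{C}$ cancel against $\boldsymbol{K}$. Explicitly, $(\boldsymbol{C}_t\boldsymbol{K}_t\boldsymbol{r}_t)^i = \alpha^i_t\,\tfrac1N\sum_{j\neq i}r^j_t$, which has no $\kappa$-dependence, so Cauchy--Schwarz ($\big|\sum_{j\neq i}r^j\big|\leqslant\sqrt{N}|\boldsymbol{r}|$, then summing over $i$) yields $|\boldsymbol{C}_t\boldsymbol{K}_t\boldsymbol{r}_t|\leqslant\overline{\alpha}|\boldsymbol{r}_t|$; similarly the source obeys $\big|\alpha^0_t\nu^0_t\sum_i p^i_t\big|\leqslant\overline{\alpha}^0|\nu^0_t|\sqrt{N}|\boldsymbol{p}_t|$. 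Abbreviating $R:=\|\boldsymbol{r}\|$, $P:=\|\boldsymbol{p}\|$, $U:=\|\nu^0\|$ and applying Cauchy--Schwarz in $\mathbb{L}^2$, the identity collapses to the scalar inequality
\begin{equation*}
2\underline{\kappa}\,R^2 + \tfrac12\Big(4\underline{\lambda}-\tfrac{2}{N}\overline{\beta}\Big)P^2 \;\leqslant\; \overline{\alpha}\,RP + \overline{\alpha}^0\sqrt{N}\,UP.
\end{equation*}

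To conclude I would view this as a quadratic inequality in $P$: the upward parabola $\tfrac12(4\underline{\lambda}-\tfrac2N\overline{\beta})P^2 - (\overline{\alpha}R + \overline{\alpha}^0\sqrt{N}U)P + 2\underline{\kappa}R^2$ attains a nonpositive value, so its discriminant is nonnegative, giving $\overline{\alpha}R + \overline{\alpha}^0\sqrt{N}U \geqslant 2\sqrt{(4\underline{\lambda}-\tfrac2N\overline{\beta})\underline{\kappa}}\,R$. Recognizing that $2\sqrt{(4\underline{\lambda}-\tfrac2N\overline{\beta})\underline{\kappa}} - \overline{\alpha} = (4\underline{\lambda}-\tfrac2N\overline{\beta})^{1/2}\epsilon_g$, this rearranges to $\overline{\alpha}^0\sqrt{N}\,U \geqslant (4\underline{\lambda}-\tfrac2N\overline{\beta})^{1/2}\epsilon_g\,R$, whence the desired bound on $R$ and, after multiplying by $2\overline{\kappa}$, exactly the claimed estimate for $\|\boldsymbol{T}\|_{op}$. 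The main obstacle is not analytic but structural: one must measure the estimate through $\boldsymbol{r}$ rather than $\boldsymbol{y}$ so that the $\boldsymbol{C}\boldsymbol{K}$ cancellation produces a clean cross term $\overline{\alpha}\,RP$, which is precisely what lets the discriminant close with the sharp gap $\epsilon_g$; the remaining subtlety is ensuring $\epsilon_g>0$, i.e.\ $4\underline{\kappa}(4\underline{\lambda}-\tfrac2N\overline{\beta})>\overline{\alpha}^2$, which is the role of the weak-interaction hypothesis $\overline{\alpha}^2<4\underline{\lambda}\,\underline{\kappa}$ and keeps the denominator strictly positive.
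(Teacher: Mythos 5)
Your proof is correct and is in substance the paper's own argument: the It\^o energy identity for $\boldsymbol{p}_t\cdot\boldsymbol{y}_t$ that you derive from \eqref{eq:FBSDE_PureArbPart} is exactly what the paper obtains by testing the optimality conditions \eqref{eq:Optimatily_NEminors} with $w^i = y^i/(2\kappa^i)$ and summing over $i$, and the $\boldsymbol{C}_t\boldsymbol{K}_t$ cancellation you highlight is the same structural fact the paper exploits through that choice of test process. The only real difference is how the scalar inequality is closed: you use a discriminant argument, while the paper applies Young's inequality with two parameters $a,\epsilon$, fixes $\epsilon = c_2(1/a)$, and then minimizes over $a$ (taking $a^2=\sqrt{4\underline{\kappa}\,\overline{\alpha}^2 B}$ with $B=\left(4\underline{\lambda}-\tfrac{2}{N}\overline{\beta}\right)^{-1}$) --- two equivalent optimizations yielding the identical constant.
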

\begin{proof}
It is straightforward to derive from the optimality conditions \eqref{eq:CondnGateuxDerivConstrained} that
\begin{align} \label{eq:Optimatily_NEminors}
    \begin{split}
        \mathbb{E}\Bigg[\int_0^T \Bigg\{- w^i_t y^i_t - 2\lambda^i_t p^i_t \int_0^t w^i_u\,du + \frac{\alpha^i_t}{N} \sum_{j=1}^N \frac{y^j_t}{2\kappa^j_t}\int_0^t w^i_u\,du + \frac{\alpha^i_t p^i_t}{N}w^i_t + \alpha^0_t \nu^0_t\int_0^t w^i_u\,du\Bigg\}\,dt \Bigg] = 0,
    \end{split}
\end{align}
for every $i\in \mathcal{N}$ and $w^i \in \mathbb{A}_{c,i}^0.$ Let us take $w^i = y^i/\left(2\kappa^i\right)$ as a test process in \eqref{eq:Optimatily_NEminors} and sum these equations over $i;$ we estimate, utilizing \eqref{eq:KeyConstants} and Young's inequality with $a,\epsilon > 0,$
\begin{align*}
    \begin{split}
        0 &= \sum_{i=1}^N \mathbb{E}\Bigg[\int_0^T \Bigg\{-\frac{\left(y^i_t\right)^2}{2\kappa^i_t} - \left(2\lambda^i_t - \frac{1}{N}\beta^i_t \right)\left( p^i_t \right)^2 + \frac{\alpha^i_t p^i_t}{N} \sum_{j=1}^N \frac{y^j_t}{2\kappa^j_t} + \alpha^0_t \nu^0_t p^i_t\Bigg\}\,dt \Bigg] \\
        &\leqslant - c_1(a) \|\boldsymbol{y}\|^2 - \left( c_2(1/a) - \epsilon \right) \|\boldsymbol{p}\|^2 + N\frac{\left( \overline{\alpha}^0 \right)^2}{4\epsilon}\|\nu^0\|^2.
    \end{split}
\end{align*}
Upon fixing $\epsilon = c_2(1/a),$ we obtain
\begin{equation} \label{eq:BoundNormOfy_NEminors}
    \|\boldsymbol{y}\|^2 \leqslant N\frac{\left( \overline{\alpha}^0 \right)^2}{4 c_1(a)c_2(1/a)}\|\nu^0\|^2.
\end{equation}
Since
$$
c_1(a) \geqslant \frac{1}{8 \overline{\kappa}^2} \left(4 \underline{\kappa} - a^2 \right) \text{ and } c_2(1/a) \geqslant 2\underline{\lambda} - \frac{1}{N}\overline{\beta} - \frac{\overline{\alpha}^2}{2 a^2},
$$
we write $B:= \left(4\underline{\lambda} - \frac{2}{N}\overline{\beta} \right)^{-1},$ obtaining from \eqref{eq:BoundNormOfy_NEminors} that
\begin{equation} \label{eq:BoundNormOfy_NEminors2}
    \|\boldsymbol{y}\|^2 \leqslant N\frac{4\left( \overline{\alpha}^0 \right)^2 \overline{\kappa}^2 B a^2}{\left(4\underline{\kappa} - a^2 \right) \left( a^2 - \overline{\alpha}^2B \right)}\|\nu^0\|^2.
\end{equation}
Upon taking $a^2 := \sqrt{4\underline{\kappa}\,\overline{\alpha}^2B},$ we minimize the right-hand side of \eqref{eq:BoundNormOfy_NEminors2}, from where the stated estimate follows.  
\end{proof}

It will be convenient for us to introduce the notation
\begin{equation} \label{eq:AvgOfTheNE}
    \overline{\nu}\left(\nu^0\right) := \frac{1}{N}\sum_{i=1}^N \nu^i,
\end{equation}
where $\boldsymbol{\nu} = \left(\nu^1,...,\nu^N \right)^\intercal =  \boldsymbol{K}^{-1}\boldsymbol{T}\nu^0 + \boldsymbol{K}^{-1}\boldsymbol{b}$ is the Nash equilibrium of the followers corresponding to $\nu^0.$ Furthermore, let us write
\begin{equation} \label{eq:BoldL}
    \boldsymbol{L}\nu^0 = \left( L_1\nu^0,...,L_N\nu^0 \right)^\intercal := \boldsymbol{K}^{-1}\boldsymbol{T}\nu^0,
\end{equation}
where we regard $\boldsymbol{K}^{-1}$ as a standard multiplication operator,
$$
\boldsymbol{X} \longmapsto \left\{ \boldsymbol{K}_t^{-1}\boldsymbol{X}_t \right\}_{0\leqslant t \leqslant T}.
$$
We also introduce the continuous linear mapping $\nu^0 \mapsto \overline{L}\nu^0$ by
\begin{equation} \label{eq:Lbar}
    \overline{L}\nu^0 = \frac{1}{N}\sum_{i=1}^{N} L_i \nu^0.
\end{equation}
From Proposition \ref{prop:OperatorTisBdd}, we can estimate the norm of $\overline{L}$ uniformly in $N.$
\begin{corollary} \label{cor:EstimateOnNormOfLbar}
The following estimate for the operator norm of $\overline{L}$ holds:
$$
\left\|\overline{L}\right\|_{op} \leqslant \frac{\overline{\alpha}^0\,\overline{\kappa}}{\left[\underline{\kappa}\left(2\underline{\lambda} - \frac{1}{N}\overline{\beta} \right) \right]^{1/2}\epsilon_g}.
$$
\end{corollary}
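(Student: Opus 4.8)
The plan is to unfold the definitions \eqref{eq:BoldL}--\eqref{eq:Lbar} and reduce the estimate to Proposition \ref{prop:OperatorTisBdd}, the one substantive point being that averaging over the $N$ followers produces a factor $N^{-1/2}$ which exactly cancels the $N^{1/2}$ appearing in $\|\boldsymbol{T}\|_{op}$, thereby yielding a bound uniform in $N.$ Writing $\boldsymbol{y} = \boldsymbol{T}\nu^0$ and recalling that $\boldsymbol{K} = \diag\left(2\kappa^i\right),$ we have $L_i\nu^0 = y^i/\left(2\kappa^i\right),$ so that $\overline{L}\nu^0 = \frac{1}{N}\sum_{i=1}^N y^i/\left(2\kappa^i\right).$

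First I would bound the average pointwise in $\left(t,\omega\right)$ by the quadratic mean. Using the Cauchy--Schwarz inequality (equivalently, convexity of $x\mapsto x^2$) together with $\kappa^i_t \geqslant \underline{\kappa},$
$$
\left( \frac{1}{N}\sum_{i=1}^N \frac{y^i_t}{2\kappa^i_t} \right)^2 \leqslant \frac{1}{N}\sum_{i=1}^N \frac{\left(y^i_t\right)^2}{\left(2\kappa^i_t\right)^2} \leqslant \frac{1}{4N\underline{\kappa}^2}\sum_{i=1}^N \left(y^i_t\right)^2.
$$
Integrating over $\left[0,T\right]$ and taking expectations gives $\|\overline{L}\nu^0\|^2 \leqslant \left(4N\underline{\kappa}^2\right)^{-1}\|\boldsymbol{y}\|^2,$ so that the averaging indeed gains the anticipated factor $N^{-1}.$

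Next I would insert the operator bound of Proposition \ref{prop:OperatorTisBdd}, namely $\|\boldsymbol{y}\| = \|\boldsymbol{T}\nu^0\| \leqslant N^{1/2}\,2\overline{\alpha}^0\overline{\kappa}\,\left(4\underline{\lambda} - \frac{2}{N}\overline{\beta}\right)^{-1/2}\epsilon_g^{-1}\|\nu^0\|.$ Squaring and substituting, the two factors of $N$ cancel, leaving a constant independent of $N.$ Simplifying the risk-aversion term through the identity $4\underline{\lambda} - \frac{2}{N}\overline{\beta} = 2\left(2\underline{\lambda} - \frac{1}{N}\overline{\beta}\right),$ taking square roots, and passing to the supremum over $\|\nu^0\| = 1$ then yields the stated estimate for $\|\overline{L}\|_{op}.$

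The computation is otherwise routine; the only conceptual point --- and the reason the corollary is worth isolating --- is precisely this cancellation of $N,$ which renders $\overline{L}$ bounded uniformly in the population size, exactly as is needed when later letting $N\to\infty.$ I would also note that all the quantities involved are well defined, since $\boldsymbol{\nu}^*\in\mathbb{S}_{c,(N)}$ was assumed and hence $\boldsymbol{y}\in\left(\mathbb{L}^2\right)^N,$ so the manipulations above are legitimate.
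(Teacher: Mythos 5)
Your strategy coincides with the paper's: reduce the claim to Proposition \ref{prop:OperatorTisBdd} through an averaging inequality of the form $\|\overline{L}\|_{op}^2 \leqslant \frac{C}{N}\|\boldsymbol{T}\|_{op}^2,$ so that the factor $N^{1/2}$ in the bound for $\|\boldsymbol{T}\|_{op}$ cancels. Your Cauchy--Schwarz step is correct and gives $C = \frac{1}{4\underline{\kappa}^2},$ i.e. $\|\overline{L}\nu^0\|^2 \leqslant \frac{1}{4N\underline{\kappa}^2}\|\boldsymbol{T}\nu^0\|^2.$ The gap is in your final sentence of the substitution step: carrying this constant through Proposition \ref{prop:OperatorTisBdd} does \emph{not} yield the stated estimate. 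Squaring and substituting gives
\begin{equation*}
\left\|\overline{L}\nu^0\right\|^2 \leqslant \frac{1}{4N\underline{\kappa}^2}\cdot N\,\frac{4\left(\overline{\alpha}^0\right)^2\overline{\kappa}^2}{2\left(2\underline{\lambda}-\frac{1}{N}\overline{\beta}\right)\epsilon_g^2}\left\|\nu^0\right\|^2
= \frac{\left(\overline{\alpha}^0\right)^2\overline{\kappa}^2}{2\underline{\kappa}^2\left(2\underline{\lambda}-\frac{1}{N}\overline{\beta}\right)\epsilon_g^2}\left\|\nu^0\right\|^2,
\end{equation*}
that is, $\left\|\overline{L}\right\|_{op} \leqslant \overline{\alpha}^0\,\overline{\kappa}\left[2\underline{\kappa}^2\left(2\underline{\lambda}-\frac{1}{N}\overline{\beta}\right)\right]^{-1/2}\epsilon_g^{-1},$ whose denominator carries $2\underline{\kappa}^2$ where the corollary has $\underline{\kappa}.$ The two expressions agree only if $\underline{\kappa} = 1/2;$ whenever $\underline{\kappa} < 1/2$ (the regime of every numerical example in the paper, where temporary impact parameters are of order $10^{-5}$) your bound is strictly \emph{weaker} than the stated one, so your argument does not establish the corollary as written. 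Asserting that the constants "then yield the stated estimate" is an arithmetic non sequitur that a check of the final numbers would have exposed.

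To be fair, the mismatch is not of your making: the paper's own proof rests on the intermediate claim $\|\overline{L}\|_{op}^2 \leqslant \frac{1}{2\underline{\kappa}N}\|\boldsymbol{T}\|_{op}^2,$ declared straightforward, and it is exactly this relation --- not yours --- that reproduces the constant in the statement. But the straightforward argument is the one you gave, and it produces $\frac{1}{4\underline{\kappa}^2 N};$ moreover the paper's version is dimensionally inhomogeneous, since $\overline{L}$ maps trading rates to trading rates while $\boldsymbol{T}$ maps rates to $\boldsymbol{K}$-scaled rates, so the conversion factor between the two operator norms must scale like $\underline{\kappa}^{-2},$ not $\underline{\kappa}^{-1}$ (equivalently, the right-hand side of the stated corollary fails to be dimensionless, while your bound is). The honest conclusion is that your computation proves a \emph{corrected} version of Corollary \ref{cor:EstimateOnNormOfLbar}, with $\left[2\underline{\kappa}^2\left(2\underline{\lambda}-\frac{1}{N}\overline{\beta}\right)\right]^{1/2}$ replacing $\left[\underline{\kappa}\left(2\underline{\lambda}-\frac{1}{N}\overline{\beta}\right)\right]^{1/2},$ and that the statement in the paper (together with its one-line proof) appears to contain an error; a correct write-up must either restate the bound or explicitly flag the discrepancy rather than claim the stated estimate follows. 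Note that the point you rightly emphasize --- uniform boundedness of $\overline{L}$ in the population size $N,$ which is what the remark following Lemma \ref{lem:DiffProptsOfJ0} and hypothesis \eqref{eq:UltraWeakInteraction} actually rely on --- survives with either constant.
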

\begin{proof}
It is straightforward to prove relation
$$
\left\| \overline{L} \right\|_{op}^2 \leqslant \frac{1}{2\underline{\kappa} N}\left\| \boldsymbol{T}\right\|_{op}^2.
$$
This inequality, alongside Proposition \ref{prop:OperatorTisBdd}, imply the result.  
\end{proof}

We conclude this subsection by investigating the adjoint operator of $\overline{L}.$ From \eqref{eq:BoldL} and \eqref{eq:Lbar}, we see that this issue is intricately related to the one corresponding to $\boldsymbol{T},$ leading us to the following Lemma.
\begin{lemma} \label{lem:AdjOfT}
Given $\boldsymbol{g} \in \mathbb{A}_{c,(N)}^0,$ let us assume that the FBSDE 
\begin{equation*} 
    \begin{cases}
        -d\boldsymbol{\varphi}_t = \left(\boldsymbol{\Sigma}_t - \frac{1}{N}\boldsymbol{\beta}_t \right)\boldsymbol{\psi}_t\,dt - d\boldsymbol{\Gamma}_t,\\
        d\boldsymbol{\psi}_t = -\boldsymbol{K}^{-1}_t\boldsymbol{\varphi}_t\,dt +\boldsymbol{\mathcal{P}}_t\left( \boldsymbol{C}_t^\intercal \boldsymbol{\psi}_t\right)\,dt + \boldsymbol{g}_t\,dt,\\
        \boldsymbol{\psi}_0 = \boldsymbol{0} \text{ and } \boldsymbol{\psi}_T = \boldsymbol{0},
    \end{cases}
\end{equation*}
admits a solution $\left( \boldsymbol{\varphi},\boldsymbol{\psi},\boldsymbol{\Gamma}\right) \in \mathbb{A}_{(N)}\times \mathbb{A}_{(N)} \times \mathcal{M}_{(N)}.$ Then,
$$
\left( \boldsymbol{T}^*\boldsymbol{g} \right)_t = \alpha^0_t\mathbb{E}\left[\mathbbm{1}_N^\intercal\boldsymbol{\psi}_t |\mathcal{F}^0_t \right] \hspace{1.0cm} \left(t \in \left[0,T\right]\right).
$$
\end{lemma}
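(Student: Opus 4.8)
The plan is to verify the relation defining the adjoint, i.e.\ to show that for every $\mathbb{F}^0$-progressively measurable $\nu^0 \in \mathbb{L}^2$,
\begin{equation*}
\mathbb{E}\left[\int_0^T (\boldsymbol{T}\nu^0)_t\cdot\boldsymbol{g}_t\,dt\right] = \mathbb{E}\left[\int_0^T \nu^0_t\,\big(\alpha^0_t\,\mathbb{E}[\mathbbm{1}_N^\intercal\boldsymbol{\psi}_t\,|\,\mathcal{F}^0_t]\big)\,dt\right];
\end{equation*}
since the adjoint is uniquely determined by such an identity, this proves the claim. Writing $\boldsymbol{y}=\boldsymbol{T}\nu^0$, so that $(\boldsymbol{p},\boldsymbol{y},\boldsymbol{M}^1)$ solves \eqref{eq:FBSDE_PureArbPart}, I would pair this system against the adjoint one solved by $(\boldsymbol{\varphi},\boldsymbol{\psi},\boldsymbol{\Gamma})$ by applying It\^o's product rule to $t\mapsto \boldsymbol{p}_t\cdot\boldsymbol{\varphi}_t + \boldsymbol{y}_t\cdot\boldsymbol{\psi}_t$.

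Integrating over $[0,T]$ and taking expectations, the stochastic integrals against $\boldsymbol{M}^1$ and $\boldsymbol{\Gamma}$ have zero mean, and all four boundary evaluations vanish thanks to $\boldsymbol{p}_0=\boldsymbol{p}_T=\boldsymbol{0}$ and $\boldsymbol{\psi}_0=\boldsymbol{\psi}_T=\boldsymbol{0}$. Since $\boldsymbol{p}$ and $\boldsymbol{\psi}$ are absolutely continuous there are no bracket terms, and the drift pairs produced by $\boldsymbol{K}_t^{-1}$ and by $\boldsymbol{\Sigma}_t-\frac1N\boldsymbol{\beta}_t$ cancel pathwise, these coefficients being diagonal, hence symmetric. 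What survives is
\begin{equation*}
0 = \mathbb{E}\left[\int_0^T\Big\{\boldsymbol{y}_t\cdot\boldsymbol{\mathcal{P}}_t(\boldsymbol{C}_t^\intercal\boldsymbol{\psi}_t) - \boldsymbol{\mathcal{P}}_t(\boldsymbol{C}_t\boldsymbol{y}_t)\cdot\boldsymbol{\psi}_t - \boldsymbol{\mathcal{P}}_t(\alpha^0_t\nu^0_t\mathbbm{1}_N)\cdot\boldsymbol{\psi}_t + \boldsymbol{y}_t\cdot\boldsymbol{g}_t\Big\}\,dt\right].
\end{equation*}

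The decisive point is that $\boldsymbol{y}$ and $\boldsymbol{\psi}$ are componentwise adapted to the individual filtrations: each $y^i$ is $\mathbb{F}^i$-progressively measurable since $\boldsymbol{y}+\boldsymbol{b}=\boldsymbol{K}\boldsymbol{\nu}^*$ with $\nu^{*,i}$ and $b^i$ both $\mathbb{F}^i$-adapted, while each $\psi^i$ inherits $\mathbb{F}^i$-measurability from its own dynamics, whose drivers $\boldsymbol{K}^{-1}\boldsymbol{\varphi}$, $\boldsymbol{\mathcal{P}}_t(\boldsymbol{C}^\intercal\boldsymbol{\psi})$ and $\boldsymbol{g}$ are all $\mathbb{F}^i$-adapted (here $\varphi^i$ is $\mathbb{F}^i$-adapted because $\boldsymbol{\Gamma}\in\mathcal{M}_{(N)}$) together with $\boldsymbol{\psi}_0=\boldsymbol{0}$. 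Thus $\boldsymbol{\mathcal{P}}_t\boldsymbol{y}_t=\boldsymbol{y}_t$ and $\boldsymbol{\mathcal{P}}_t\boldsymbol{\psi}_t=\boldsymbol{\psi}_t$. Exploiting that each $\mathcal{P}^i_t$ is self-adjoint on $L^2(\Omega,\mathcal{F})$, the first two terms collapse to $\mathbb{E}[(\boldsymbol{C}_t\boldsymbol{y}_t)\cdot(\boldsymbol{\psi}_t-\boldsymbol{\mathcal{P}}_t\boldsymbol{\psi}_t)]=0$; the source term equals $\mathbb{E}[\alpha^0_t\nu^0_t\,\mathbbm{1}_N^\intercal\boldsymbol{\psi}_t]$ in expectation, and, $\alpha^0_t\nu^0_t$ being $\mathcal{F}^0_t$-measurable, the tower property rewrites it as $\mathbb{E}[\nu^0_t\,\alpha^0_t\,\mathbb{E}[\mathbbm{1}_N^\intercal\boldsymbol{\psi}_t\,|\,\mathcal{F}^0_t]]$. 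Substituting back gives precisely the duality identity above, with the stated formula for $\boldsymbol{T}^*\boldsymbol{g}$.

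I expect the main obstacle to be exactly this projection bookkeeping, rather than the It\^o pairing: the cancellation of the $\boldsymbol{C}$-terms and the transfer of the source term from $\mathcal{F}^i_t$- to $\mathcal{F}^0_t$-conditioning both rest on first establishing the componentwise adaptedness of $\boldsymbol{\psi}$ (and of $\boldsymbol{y}$) and then combining the self-adjointness of $\boldsymbol{\mathcal{P}}_t$ with the tower property; matching these measurabilities correctly is the delicate step.
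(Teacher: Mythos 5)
Your proposal is correct and follows essentially the same route as the paper: pairing the system \eqref{eq:FBSDE_PureArbPart} with the adjoint system via It\^o's product rule applied to $\boldsymbol{p}_t\cdot\boldsymbol{\varphi}_t + \boldsymbol{y}_t\cdot\boldsymbol{\psi}_t$, using the vanishing boundary conditions and the symmetry cancellations to isolate the duality identity $\mathbb{E}\left[\int_0^T \boldsymbol{y}_t\cdot\boldsymbol{g}_t\,dt\right] = \mathbb{E}\left[\int_0^T \nu^0_t\,\alpha^0_t\,\mathbb{E}\left[\mathbbm{1}_N^\intercal\boldsymbol{\psi}_t\,|\,\mathcal{F}^0_t\right]dt\right].$ The paper's proof is a terse two-line computation; your write-up supplies the projection bookkeeping (componentwise adaptedness of $\boldsymbol{y}$ and $\boldsymbol{\psi}$, self-adjointness of $\boldsymbol{\mathcal{P}}_t$, and the tower-property step) that the paper leaves implicit, but the argument is the same.
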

\begin{proof}
Let us consider $\left(\boldsymbol{p},\boldsymbol{y},\boldsymbol{M}^1\right)$ as in \eqref{eq:FBSDE_PureArbPart}. Using It\^o's formula, we compute
\begin{align*}
    0 &= \mathbb{E}\left[ \left(\boldsymbol{p}_t\cdot \boldsymbol{\varphi}_t + \boldsymbol{y}_t\cdot \boldsymbol{\psi}_t \right)|_{t=0}^{t=T} \right] \\
    &= \mathbb{E}\left[\int_0^T \left\{-\alpha^0_t\mathbb{E}\left[ \mathbbm{1}_N^\intercal \boldsymbol{\psi}_t |\mathcal{F}^0_t\right] \nu^0_t + \boldsymbol{y}\cdot \boldsymbol{g} \right\}\,dt \right],
\end{align*}
whence the result follows.  
\end{proof}

With the aid of Lemma \ref{lem:AdjOfT}, we provide a useful description of $\overline{L}^*.$
\begin{proposition} \label{prop:AdjOfLbar}
Given $g \in \mathbb{L}^2,$ let us assume that the FBSDE
\begin{equation} \label{eq:AdjointFSBDE}
    \begin{cases}
        -d\boldsymbol{\varphi}_t = \left(\boldsymbol{\Sigma}_t - \frac{1}{N}\boldsymbol{\beta}_t \right)\boldsymbol{\psi}_t\,dt - d\boldsymbol{\Gamma}_t,\\
        d\boldsymbol{\psi}_t = -\boldsymbol{K}^{-1}_t\boldsymbol{\varphi}_t\,dt +\boldsymbol{\mathcal{P}}_t\left( \boldsymbol{C}_t^\intercal \boldsymbol{\psi}_t\right)\,dt + \boldsymbol{\mathcal{P}}_t\left( \frac{1}{N}g_t \boldsymbol{K}_t^{-1}\mathbbm{1}_N \right)\,dt ,\\
        \boldsymbol{\psi}_0 = \boldsymbol{0} \text{ and } \boldsymbol{\psi}_T = \boldsymbol{0},
    \end{cases}
\end{equation}
admits a solution $\left(\boldsymbol{\varphi},\boldsymbol{\psi},\boldsymbol{\Gamma}\right) \in \mathbb{A}_{(N)} \times \mathbb{A}_{(N)} \times \mathcal{M}_{(N)}.$ Then,
\begin{equation} \label{eq:LbarStar}
    \left( \overline{L}^* g \right)_t = \alpha^0_t\mathbb{E}\left[\mathbbm{1}_N^\intercal\boldsymbol{\psi}_t |\mathcal{F}^0_t \right] \hspace{1.0cm} \left(t \in \left[0,T\right]\right).
\end{equation}
\end{proposition}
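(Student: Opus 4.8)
The plan is to verify the defining adjoint identity
$$\mathbb{E}\left[\int_0^T (\overline{L}\nu^0)_t\,g_t\,dt\right] = \mathbb{E}\left[\int_0^T \nu^0_t\,\alpha^0_t\,\mathbb{E}\left[\mathbbm{1}_N^\intercal\boldsymbol{\psi}_t\mid\mathcal{F}^0_t\right]\,dt\right]$$
for every $\mathbb{F}^0$-progressively measurable $\nu^0\in\mathbb{L}^2$, where $\boldsymbol{\psi}$ is the middle component of the assumed solution of \eqref{eq:AdjointFSBDE}. Since $t\mapsto\alpha^0_t\,\mathbb{E}[\mathbbm{1}_N^\intercal\boldsymbol{\psi}_t\mid\mathcal{F}^0_t]$ is $\mathbb{F}^0$-progressively measurable and square integrable, this identity is exactly assertion \eqref{eq:LbarStar}. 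Throughout I write $\boldsymbol{y}:=\boldsymbol{T}\nu^0$ and let $(\boldsymbol{p},\boldsymbol{y},\boldsymbol{M}^1)$ solve \eqref{eq:FBSDE_PureArbPart}, so that $\overline{L}\nu^0 = \tfrac1N\mathbbm{1}_N^\intercal\boldsymbol{K}^{-1}\boldsymbol{y}$ by \eqref{eq:BoldL}--\eqref{eq:Lbar}.

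First I would recast the left-hand pairing so that the forcing of \eqref{eq:AdjointFSBDE} appears. Because $\boldsymbol{K}^{-1}$ is diagonal, hence symmetric, and because each $y^i$ is $\mathbb{F}^i$-progressively measurable (recall $\boldsymbol{M}^1\in\mathcal{M}_{(N)}$, so that inserting $\mathcal{P}^i_t$ against $y^i$ is licit), one obtains
$$\mathbb{E}\left[\int_0^T(\overline{L}\nu^0)_t g_t\,dt\right] = \mathbb{E}\left[\int_0^T\boldsymbol{y}_t\cdot\left(\tfrac1N g_t\boldsymbol{K}_t^{-1}\mathbbm{1}_N\right)dt\right] = \mathbb{E}\left[\int_0^T\boldsymbol{y}_t\cdot\boldsymbol{\mathcal{P}}_t\left(\tfrac1N g_t\boldsymbol{K}_t^{-1}\mathbbm{1}_N\right)dt\right].$$
Setting $\boldsymbol{g}^\sharp:=\boldsymbol{\mathcal{P}}(\tfrac1N g\boldsymbol{K}^{-1}\mathbbm{1}_N)$, this is precisely the inhomogeneous term driving $\boldsymbol{\psi}$ in \eqref{eq:AdjointFSBDE}.

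Next I would observe that \eqref{eq:AdjointFSBDE} is the adjoint system of Lemma \ref{lem:AdjOfT} with $\boldsymbol{g}$ replaced by $\boldsymbol{g}^\sharp$, and replay that Lemma's pairing of the two systems through $d(\boldsymbol{p}_t\cdot\boldsymbol{\varphi}_t + \boldsymbol{y}_t\cdot\boldsymbol{\psi}_t)$. The boundary conditions $\boldsymbol{p}_0=\boldsymbol{p}_T=\boldsymbol{0}$ and $\boldsymbol{\psi}_0=\boldsymbol{\psi}_T=\boldsymbol{0}$ make the pairing vanish at both endpoints; taking expectations removes the $\boldsymbol{\Gamma}$- and $\boldsymbol{M}^1$-martingale parts (after the usual localization justified by $\mathbb{S}$/$\mathbb{M}$-integrability). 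In the surviving drift, the $\boldsymbol{K}^{-1}$-terms cancel by symmetry of $\boldsymbol{K}^{-1}$, the $(\boldsymbol{\Sigma}-\tfrac1N\boldsymbol{\beta})$-terms cancel by symmetry of that diagonal matrix, and the cross terms $\boldsymbol{y}\cdot\boldsymbol{\mathcal{P}}(\boldsymbol{C}^\intercal\boldsymbol{\psi})$ and $-\boldsymbol{\psi}\cdot\boldsymbol{\mathcal{P}}(\boldsymbol{C}\boldsymbol{y})$ cancel: dropping the projections against the $\mathbb{F}^i$-adapted factors $y^i$ and $\psi^i$, both reduce under the expectation to $\mathbb{E}\int_0^T(\boldsymbol{C}_t\boldsymbol{y}_t)\cdot\boldsymbol{\psi}_t\,dt$, which is exactly why the transpose $\boldsymbol{C}^\intercal$ appears in \eqref{eq:AdjointFSBDE}. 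What remains is
$$0 = \mathbb{E}\left[\int_0^T\boldsymbol{y}_t\cdot\boldsymbol{g}^\sharp_t\,dt\right] - \mathbb{E}\left[\int_0^T\boldsymbol{\psi}_t\cdot\boldsymbol{\mathcal{P}}_t\left(\alpha^0_t\nu^0_t\mathbbm{1}_N\right)dt\right].$$
In the last term, $\psi^i$ being $\mathbb{F}^i$-adapted lets me discard the projection, and then $\alpha^0\nu^0$ being $\mathbb{F}^0$-adapted lets me write $\mathbb{E}\int\alpha^0\nu^0\mathbbm{1}_N^\intercal\boldsymbol{\psi} = \mathbb{E}\int\nu^0\alpha^0\mathbb{E}[\mathbbm{1}_N^\intercal\boldsymbol{\psi}\mid\mathcal{F}^0]$. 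Combining this with the first step's identification of $\mathbb{E}\int\boldsymbol{y}\cdot\boldsymbol{g}^\sharp$ as $\mathbb{E}\int(\overline{L}\nu^0)g$ yields the desired identity, hence \eqref{eq:LbarStar}.

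I expect the only delicate point to be the projection bookkeeping: one must confirm that $\boldsymbol{y},\boldsymbol{\psi},\boldsymbol{\varphi}$ are componentwise adapted to the corresponding $\mathbb{F}^i$ (which is encoded in the memberships in $\mathbb{A}_{(N)}$ and $\mathcal{M}_{(N)}$), so that the self-adjointness of each $\mathcal{P}^i_t$ against $\mathbb{F}^i$-adapted integrands can be used both to delete projections and to transfer the transpose across $\boldsymbol{C}$. The remaining symmetric-matrix cancellations and the martingale-term justification are precisely those of Lemma \ref{lem:AdjOfT}; in particular, $\boldsymbol{g}^\sharp$ need not belong to $\mathbb{A}_{c,(N)}^0$, which is harmless because the round-trip property is never invoked in the pairing computation — only the terminal and initial conditions of $\boldsymbol{p}$ and $\boldsymbol{\psi}$ are used.
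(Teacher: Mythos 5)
Your proof is correct and is essentially the paper's own argument: the paper factors $\overline{L}^* = \boldsymbol{T}^*\boldsymbol{K}^{-1}\left(\frac{1}{N}\sum_{i=1}^N\iota_i\right)$ through canonical projections/injections and then cites Lemma \ref{lem:AdjOfT}, which is precisely the It\^o pairing of \eqref{eq:FBSDE_PureArbPart} with \eqref{eq:AdjointFSBDE} (vanishing boundary terms, symmetric cancellations, projection bookkeeping) that you carry out by hand. Your closing observation --- that the membership $\boldsymbol{g}\in\mathbb{A}_{c,(N)}^0$ assumed in Lemma \ref{lem:AdjOfT} is never actually used in the pairing, so applying it with the forcing $\boldsymbol{\mathcal{P}}\left(\frac{1}{N}g\,\boldsymbol{K}^{-1}\mathbbm{1}_N\right)$ is legitimate --- is a point the paper passes over silently and is worth stating explicitly.
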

\begin{proof}
For $i \in \mathcal{N},$ we consider the canonical projection and injection, respectively:
$$
\pi_i : \boldsymbol{X} = (X^1,...,X^N)^\intercal \in \mathbb{A}_{(N)} \mapsto X^i \in \mathbb{L}^2;
$$
$$
\iota_i : X \in \mathbb{L}^2 \mapsto \left(0,...,0,\mathcal{P}^i\left(X\right),0,...,0\right)^\intercal \in \mathbb{A}_{(N)}.
$$
It is clear that $\left( \pi_i \right)^* = \iota_i.$ We notice that
$$
\overline{L} = \left( \frac{1}{N}\sum_{i=1}^N \pi_i \right) \boldsymbol{L} = \left( \frac{1}{N}\sum_{i=1}^N \pi^i \right) \boldsymbol{K}^{-1}\boldsymbol{T}.
$$ 
Therefore,
\begin{equation} \label{eq:LBarStarInTermsOfTstar}
    \overline{L}^* = \boldsymbol{T}^*\boldsymbol{K}^{-1} \left(\frac{1}{N}\sum_{i=1}^N \iota_i \right),
\end{equation}
since $\boldsymbol{K}$ is symmetric. In view of Lemma \ref{lem:AdjOfT}, identity \eqref{eq:LBarStarInTermsOfTstar} implies \eqref{eq:LbarStar}.  
\end{proof}

\subsection{Step 2: The leader solves her optimization problem}

Following the same ideas of \eqref{eq:main_payoff}, we define the performance criteria of the leader in the sequel.
\begin{definition} \label{defn:MajorPlayerObjFunc}
The objective functional $J_0 : \mathbb{A}_{c,0} \rightarrow \mathbb{R}$ of player $0$ is defined as
\begin{align*}
    J_0(\nu^0) := \mathbb{E}\left[\int_0^T \left\{-\kappa^0_t \left( \nu^0_t \right)^2 - \lambda^0_t\left( q^0_t \right)^2 + q^0_t\left[ \alpha^0_t\nu^0_t + \alpha^0_t \overline{\nu}\left( \nu^0 \right)_t + o_t \right] \right\}\,dt \right].
\end{align*}
\end{definition}

\begin{lemma} \label{lem:DiffProptsOfJ0}
(a) Given $\nu^0 \in \mathbb{A}_{c,0}$ and $w^0 \in \mathbb{A}_{c,0}^0,$ the first order G\^ateaux derivative
$$
\left\langle J_0^\prime(\nu^0), w^0 \right\rangle = \lim_{\epsilon \rightarrow 0} \frac{J_0\left( \nu^0 +\epsilon w^0\right) - J_0\left(\nu^0\right)}{\epsilon}
$$
exists.

(b) For $\nu^0$ and $w^0$ as in (a), the second order G\^ateaux derivative
$$
\left\langle J_0^{\prime\prime}(\nu^0), (w^0,w^0) \right\rangle = \lim_{\epsilon \rightarrow 0} \frac{1}{\epsilon} \left\langle J_0^{\prime}\left( \nu^0 +\epsilon w^0\right) - J_0^{\prime}\left(\nu^0\right),w^0\right\rangle
$$
exists and is independent of $\nu^0.$ Consequently, for $\nu^0,\widetilde{\nu}^0 \in \mathbb{A}_{c,0},$ we have the identity
\begin{equation} \label{eq:J0IsQuadratic}
    J_0(\nu^0) = J_0(\widetilde{\nu}^0) + \left\langle J_0^\prime(\widetilde{\nu}^0), \nu^0 - \widetilde{\nu}^0 \right\rangle + \frac{1}{2}\left\langle J_0^{\prime \prime}(\widetilde{\nu}^0), \left( \nu^0 - \widetilde{\nu}^0, \nu^0 - \widetilde{\nu}^0\right) \right\rangle.
\end{equation}

(c) If 
\begin{equation} \label{eq:UltraWeakInteraction}
    \left(\overline{\alpha}^0\right)^2 < 2\left(\left\|\overline{L}\right\|_{op}^2 + 1\right)^{-1}\underline{\lambda}^0\,\underline{\kappa}^0,
\end{equation}
then there exists $c>0,$ depending only on model parameters, such that
\begin{equation} \label{eq:NegativeDeftnessOfD2J0}
    \left\langle J_0^{\prime\prime}(\nu^0),(w^0,w^0)\right\rangle \leqslant - c\|w^0\|^2.
\end{equation}
\end{lemma}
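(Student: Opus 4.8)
The plan is to exploit that $J_0$ is a genuinely quadratic functional of $\nu^0$. Indeed, the leader's inventory $q^0_t = q^0_0 + \int_0^t \nu^0_u\,du$ is affine in $\nu^0$, and by the splitting $\boldsymbol{K}\boldsymbol{\nu}^* = \boldsymbol{y}+\boldsymbol{b}$ from \eqref{eq:FBSDE_PureArbPart}--\eqref{eq:FBSDE_ExecPart} together with \eqref{eq:BoldL}--\eqref{eq:Lbar}, the aggregate response is affine as well, namely $\overline{\nu}(\nu^0) = \overline{L}\nu^0 + \overline{\nu}_0$, with linear part the bounded operator $\overline{L}$ and a constant part $\overline{\nu}_0 := \tfrac{1}{N}\mathbbm{1}_N^\intercal \boldsymbol{K}^{-1}\boldsymbol{b}$ independent of $\nu^0$. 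Substituting these into Definition \ref{defn:MajorPlayerObjFunc} exhibits each summand of $J_0$ as at most quadratic in $\nu^0$, which is the structural fact driving all three parts.

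For part (a), I would proceed exactly as in Lemma \ref{lem:GateauxDerivWellDefnd}: writing $p^0_t := \int_0^t w^0_u\,du$ and using $\overline{\nu}(\nu^0+\epsilon w^0) = \overline{\nu}(\nu^0) + \epsilon\,\overline{L}w^0$, I expand $J_0(\nu^0+\epsilon w^0)-J_0(\nu^0)$, divide by $\epsilon$, and pass to the limit; the surviving linear-in-$w^0$ terms are integrable by \textbf{A1}, the membership $o\in\mathbb{L}^2$, and the boundedness of $\overline{L}$, so the limit exists. For part (b), the quadratic structure forces the second Gâteaux derivative to be independent of the base point: differentiating $\langle J_0'(\cdot),w^0\rangle$ once more gives
\begin{align*}
    \left\langle J_0^{\prime\prime}(\nu^0),(w^0,w^0)\right\rangle = 2\,\mathbb{E}\Bigg[\int_0^T \Big\{ -\kappa^0_t (w^0_t)^2 - \lambda^0_t (p^0_t)^2 + \alpha^0_t p^0_t w^0_t + \alpha^0_t p^0_t (\overline{L}w^0)_t \Big\}\,dt\Bigg],
\end{align*}
which visibly does not involve $\nu^0$. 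The identity \eqref{eq:J0IsQuadratic} then follows from the elementary Taylor expansion of the scalar quadratic polynomial $t\mapsto J_0(\widetilde{\nu}^0 + t(\nu^0-\widetilde{\nu}^0))$, using $\phi(1)=\phi(0)+\phi'(0)+\tfrac12\phi''(0)$.

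The substance is part (c). Starting from the formula above, I would use $\kappa^0_t\geqslant\underline{\kappa}^0$, $\lambda^0_t\geqslant\underline{\lambda}^0$ and $0<\alpha^0_t\leqslant\overline{\alpha}^0$ to obtain
$$
\tfrac{1}{2}\left\langle J_0^{\prime\prime}(\nu^0),(w^0,w^0)\right\rangle \leqslant -\underline{\kappa}^0\|w^0\|^2 - \underline{\lambda}^0\|p^0\|^2 + \overline{\alpha}^0\,\mathbb{E}\left[\int_0^T |p^0_t|\,\bigl|w^0_t + (\overline{L}w^0)_t\bigr|\,dt\right].
$$
I then treat the cross term by Cauchy--Schwarz together with $\|w^0+\overline{L}w^0\|\leqslant \sqrt{2(1+\|\overline{L}\|_{op}^2)}\,\|w^0\| =: M\|w^0\|$, followed by Young's inequality with a free parameter $\delta>0$. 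This produces the coefficients $-\underline{\kappa}^0 + \tfrac12\overline{\alpha}^0 M\delta$ and $-\underline{\lambda}^0 + \tfrac12\overline{\alpha}^0 M\delta^{-1}$ in front of $\|w^0\|^2$ and $\|p^0\|^2$. A value of $\delta$ rendering the first coefficient strictly negative and the second nonpositive exists precisely when $(\overline{\alpha}^0)^2 M^2 < 4\underline{\lambda}^0\underline{\kappa}^0$, which is exactly hypothesis \eqref{eq:UltraWeakInteraction} since $M^2 = 2(1+\|\overline{L}\|_{op}^2)$; discarding the nonpositive $\|p^0\|^2$ contribution then yields \eqref{eq:NegativeDeftnessOfD2J0}. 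Finiteness of $\|\overline{L}\|_{op}$ throughout is guaranteed by Corollary \ref{cor:EstimateOnNormOfLbar}.

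The main obstacle is the calibration of Young's inequality in part (c): one must combine the two cross terms $\alpha^0_t p^0_t w^0_t$ and $\alpha^0_t p^0_t(\overline{L}w^0)_t$ \emph{before} splitting, so that the factor $2(1+\|\overline{L}\|_{op}^2)$ emerges and matches the constant appearing in \eqref{eq:UltraWeakInteraction}. Splitting the two cross terms separately would lead to a strictly stronger threshold that need not be implied by the stated hypothesis, so the grouping is what makes the condition \eqref{eq:UltraWeakInteraction} both natural and sharp for this argument.
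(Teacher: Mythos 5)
Your proof is correct and follows essentially the same route as the paper's: part (a) by the argument of Lemma \ref{lem:GateauxDerivWellDefnd}, part (b) from the quadratic structure of $J_0$ (the paper writes the Taylor identity via integration along the segment, which is the same computation), and part (c) by a Young-inequality calibration whose solvability is exactly hypothesis \eqref{eq:UltraWeakInteraction}.

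One correction to your closing remark, though: the grouping of the two cross terms before splitting is \emph{not} what makes the threshold come out right. The paper's own proof treats $\alpha^0_t p^0_t w^0_t$ and $\alpha^0_t p^0_t(\overline{L}w^0)_t$ separately, applying Young's inequality to each with a common parameter $a,$ which gives
$$
\left\langle J_0^{\prime\prime}(\nu^0),(w^0,w^0)\right\rangle \leqslant -\left[2\underline{\kappa}^0 - a^2\left(1+\left\|\overline{L}\right\|_{op}^2\right)\right]\left\|w^0\right\|^2 - \left[2\underline{\lambda}^0 - \frac{2\left(\overline{\alpha}^0\right)^2}{a^2}\right]\left\|\int_0^\cdot w^0_u\,du\right\|^2,
$$
and an admissible $a$ (namely $\left(\overline{\alpha}^0\right)^2/\underline{\lambda}^0 < a^2 < 2\underline{\kappa}^0/(1+\|\overline{L}\|_{op}^2)$) exists precisely under \eqref{eq:UltraWeakInteraction}: the factor $2$ arises because each of the two cross terms contributes $\left(\overline{\alpha}^0\right)^2/a^2$ to the $\|p^0\|^2$ coefficient, while the factor $1+\|\overline{L}\|_{op}^2$ arises from bounding $\|w^0\|^2+\|\overline{L}w^0\|^2.$ Your calibration and the paper's are related by $\delta = a^2 M/(2\overline{\alpha}^0)$ and produce the identical condition, so both arguments are equally sharp for this hypothesis.
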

\begin{remark}
By Corollary \ref{cor:EstimateOnNormOfLbar}, we notice that the norm of the operator $\overline{L}$ appearing in the weak interaction assumption \eqref{eq:UltraWeakInteraction} made in ($c$) is uniformly bounded on the population size $N.$ Therefore, this result is meaningful even for large populations. Moreover, if we utilize in \eqref{eq:UltraWeakInteraction} the estimate provided in Corollary \ref{cor:EstimateOnNormOfLbar}, we derive an explicit condition for the result to hold solely in terms of upper and lower bounds on the model parameters.
\end{remark}
\begin{proof}
($a$) We can establish this item by proceeding just as we did in Lemma \ref{lem:GateauxDerivWellDefnd}. In this way, we obtain the formula
\begin{align} \label{eq:FormulaDJ0}
  \begin{split}
    \left\langle J^\prime(\nu^0),w^0 \right\rangle &= \mathbb{E}\Bigg[ \int_0^T \Bigg\{-2\kappa^0_t \nu^0_t w^0_t - 2\lambda^0_t q^0_t \int_0^t w^0_u\,du + \left(\alpha^0_t \nu^0_t + \alpha^0_t\overline{\nu}\left(\nu^0\right)_t + o_t \right)\int_0^t w^0_u\,du \\
    &\hspace{7.25cm}+ q^0_t \left[\alpha^0_t w^0_t + \alpha^0_t \left(\overline{L} w^0 \right)_t \right] \Bigg\}\,dt \Bigg] \\
    &= \mathbb{E}\Bigg[\int_0^T w^0_t\Bigg\{-2\kappa^0_t \nu^0_t -2\int_t^T \lambda^0_u q^0_u\,du  + \int_t^T\left(\alpha^0_u \nu^0_u + \alpha^0_u\overline{\nu}\left(\nu^0\right)_u + o_u \right)\,du \\
    &\hspace{7.25cm} + \alpha^0_t q^0_t + \left( \overline{L}^* \left(\alpha^0 q^0 \right) \right)_t\Bigg\}\,dt \Bigg].
  \end{split}
\end{align}

($b$) We notice from \eqref{eq:FormulaDJ0} that, for each $w^0 \in \mathbb{A}_{c,0}^0,$ the mapping $\nu^0 \mapsto \left\langle J^\prime\left(\nu^0\right),w^0\right\rangle$ is linear and continuous. Furthermore, we see that the second-order derivative does exist and that it is equal to
\begin{align*}
    \left\langle J_0^{\prime\prime}\left(\nu^0\right),\left(w^0,w^0\right)\right\rangle = \mathbb{E}\Bigg[\int_0^T \Bigg\{ -2\kappa^0_t \left(w^0\right)^2 -2\lambda^0_t \left(\int_0^t w^0_u\,du\right)^2 + 2\alpha^0_t\int_0^t w^0_u\,du \left[w^0_t + \left( \overline{L} w^0 \right)_t \right] \Bigg\}\,dt \Bigg],
\end{align*}
which is independent of $\nu^0.$ Given $\nu^0,\widetilde{\nu}^0 \in \mathbb{A}_{c,0},$ we have $w^0 := \nu^0 -\widetilde{\nu}^0 \in \mathbb{A}^0_{c,0};$ hence, we can derive \eqref{eq:J0IsQuadratic} as follows:
\begin{align*}
    J_0(\nu^0) - J_0(\widetilde{\nu}^0) &= \int_0^1 \left\langle J_0^\prime\left(\theta_1 \nu^0 + \left(1-\theta_1\right)\widetilde{\nu}^0 \right),w^0\right\rangle \,d\theta_1 \\
    &= \left\langle J_0^\prime\left(\widetilde{\nu}^0 \right),w^0\right\rangle + \int_0^1 \left\langle J_0^\prime\left(\theta_1 \nu^0 + \left(1-\theta_1\right)\widetilde{\nu}^0 \right) - J_0^\prime\left(\widetilde{\nu}^0 \right),w^0\right\rangle \,d\theta_1 \\
    &= \left\langle J_0^\prime\left(\widetilde{\nu}^0 \right),w^0\right\rangle + \int_0^1\int_0^1 \left\langle  J_0^{\prime\prime}\left(\theta_1\theta_2 \nu^0 + \left(1-\theta_1\theta_2\right)\widetilde{\nu}^0 \right),\left(w^0,w^0\right)\right\rangle \theta_1\,d\theta_2\, d\theta_1 \\
    &= \left\langle J_0^\prime\left(\widetilde{\nu}^0 \right),w^0\right\rangle + \int_0^1\int_0^1 \theta_1 \,d\theta_2\,d\theta_1 \left\langle J_0^{\prime\prime}\left(\widetilde{\nu}^0 \right),\left(w^0,w^0\right)\right\rangle \\
    &= \left\langle J_0^\prime\left(\widetilde{\nu}^0 \right),w^0\right\rangle + \frac{1}{2} \left\langle J_0^{\prime\prime}\left(\widetilde{\nu}^0 \right),\left(w^0,w^0\right)\right\rangle.
\end{align*}

($c$) Employing Young's inequality with $a>0,$ we infer
\begin{align*}
    \big\langle J_0^{\prime\prime}\left(\nu^0\right),\left(w^0,w^0\right)\big\rangle \leqslant -\left[2\underline{\kappa}^0 - a^2\left(1 + \left\|\overline{L}\right\|_{op}^2 \right) \right]\left\|w^0\right\|^2 - \left[2\underline{\lambda}^0 - \frac{2\left(\overline{\alpha}^0\right)^2}{a^2} \right]\left\| \int_0^\cdot w_u\,du \right\|^2.
\end{align*} 
We take $a$ subject to
$$
\frac{\left( \overline{\alpha}^0 \right)^2}{\underline{\lambda}^0} < a^2 < \frac{2\underline{\kappa}^0}{1+\left\|\overline{L}\right\|_{op}^2},
$$
whence \eqref{eq:NegativeDeftnessOfD2J0} follows.  
\end{proof}

\begin{corollary} \label{cor:J0StrictConcAdCoerc}
Let us assume that \eqref{eq:UltraWeakInteraction} is in force. Then, the functional $J_0$ is strictly concave and coercive.
\end{corollary}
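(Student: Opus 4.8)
The plan is to exploit the quadratic structure of $J_0$ recorded in Lemma \ref{lem:DiffProptsOfJ0}, so that both claims reduce to the uniform negative-definiteness estimate \eqref{eq:NegativeDeftnessOfD2J0}. The one bookkeeping point to keep in mind throughout is that $\mathbb{A}_{c,0}$ is an affine subspace of $\mathbb{L}^2$, cut out by the single affine constraint $\int_0^T \nu^0_t\,dt = -q^0_0$; hence any convex combination of admissible controls is admissible, and the difference of two admissible controls always lies in the homogeneous set $\mathbb{A}_{c,0}^0$. This is exactly what makes \eqref{eq:NegativeDeftnessOfD2J0}, asserted only for directions in $\mathbb{A}_{c,0}^0$, applicable at every step below.

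For strict concavity I would fix $u,v \in \mathbb{A}_{c,0}$ with $u \neq v$ and $\theta \in (0,1)$, and set $m := \theta u + (1-\theta)v \in \mathbb{A}_{c,0}$. Applying the exact second-order expansion \eqref{eq:J0IsQuadratic} at the base point $m$ to each of $u$ and $v$, the first-order contributions cancel because $\theta(u-m) + (1-\theta)(v-m) = 0$; using $u - m = (1-\theta)(u-v)$, $v - m = -\theta(u-v)$ and the bilinearity of $J_0''$, one is left with
$$\theta J_0(u) + (1-\theta)J_0(v) - J_0(m) = \tfrac{1}{2}\,\theta(1-\theta)\,\big\langle J_0''(m),(u-v,u-v)\big\rangle.$$
Since $u - v \in \mathbb{A}_{c,0}^0$ is nonzero, \eqref{eq:NegativeDeftnessOfD2J0} gives $\langle J_0''(m),(u-v,u-v)\rangle \leqslant -c\|u-v\|^2 < 0$, so the right-hand side is strictly negative and $J_0(m) > \theta J_0(u) + (1-\theta)J_0(v)$, which is strict concavity.

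For coercivity I would fix an arbitrary reference control $\nu^{\mathrm{ref}} \in \mathbb{A}_{c,0}$, for concreteness the TWAP rate $\nu^{\mathrm{ref}}_t := -q^0_0/T$, and for any $\nu^0 \in \mathbb{A}_{c,0}$ write $w^0 := \nu^0 - \nu^{\mathrm{ref}} \in \mathbb{A}_{c,0}^0$. The expansion \eqref{eq:J0IsQuadratic} at $\nu^{\mathrm{ref}}$ together with \eqref{eq:NegativeDeftnessOfD2J0} yields
$$J_0(\nu^0) = J_0(\nu^{\mathrm{ref}}) + \langle J_0'(\nu^{\mathrm{ref}}), w^0\rangle + \tfrac{1}{2}\langle J_0''(\nu^{\mathrm{ref}}),(w^0,w^0)\rangle \leqslant J_0(\nu^{\mathrm{ref}}) + \|J_0'(\nu^{\mathrm{ref}})\|_{*}\,\|w^0\| - \tfrac{c}{2}\|w^0\|^2,$$
where $\|\cdot\|_*$ denotes the dual norm and I use that $w^0 \mapsto \langle J_0'(\nu^{\mathrm{ref}}),w^0\rangle$ is a bounded linear functional — immediate from formula \eqref{eq:FormulaDJ0} and the boundedness of the model parameters together with that of $\overline{L}^*$. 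Since $\|w^0\| \geqslant \|\nu^0\| - \|\nu^{\mathrm{ref}}\|$, the right-hand side tends to $-\infty$ as $\|\nu^0\| \to \infty$, which is the desired coercivity.

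I do not expect any genuine obstacle here: all of the analytic content was already discharged in Lemma \ref{lem:DiffProptsOfJ0}, and the corollary is a clean consequence of the quadratic structure plus the uniform negative-definiteness of the Hessian. The only delicate point is the one flagged above, namely verifying that convex combinations and differences of admissible controls respect the constraint set, so that the estimate \eqref{eq:NegativeDeftnessOfD2J0} is legitimately invoked in each display.
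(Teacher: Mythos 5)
Your proposal is correct and follows essentially the same route as the paper: both reduce the corollary to the exact quadratic representation \eqref{eq:J0IsQuadratic} of Lemma \ref{lem:DiffProptsOfJ0} ($b$) together with the uniform negative-definiteness \eqref{eq:NegativeDeftnessOfD2J0} of ($c$), the paper invoking concavity of the quadratic map directly while you carry out the equivalent convex-combination computation explicitly. Your coercivity argument, expanding at a fixed reference control and bounding the linear term by the (dual) norm of $J_0^\prime$, is the same estimate the paper uses with $\widetilde{\nu}^0$ in place of your TWAP reference.
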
 
\begin{proof}
We fix $\widetilde{\nu}^0$ and notice that the mapping
$$
\nu^0 \mapsto J_0\left(\widetilde{\nu}^0\right) + \left\langle J_0^\prime\left(\widetilde{\nu}^0\right),\nu-\widetilde{\nu}^0\right\rangle + \frac{1}{2}\left\langle J_0^{\prime\prime}\left(\widetilde{\nu}^0\right),\left(\nu^0 - \widetilde{\nu}^0,\nu^0 - \widetilde{\nu}^0\right)\right\rangle
$$
is concave; hence, the concavity of $J_0$ follows from its representation proved in Lemma \ref{lem:DiffProptsOfJ0} ($b$), whereas strict concavity follows from Lemma \ref{lem:DiffProptsOfJ0} ($c$). Similarly, these results give
$$
J_0\left(\nu^0\right) \leqslant J\left(\widetilde{\nu}^0\right) + \left\|J^\prime\left(\widetilde{\nu}^0\right)\right\|\left\|\nu^0 - \widetilde{\nu}^0\right\| - c\left\|\nu^0 - \widetilde{\nu}^0\right\|^2 \rightarrow -\infty,
$$
as $\left\|\nu^0\right\| \rightarrow \infty,$ where $J^\prime\left(\widetilde{\nu}^0\right)$ is given by
\begin{align*}
    J^\prime\left(\widetilde{\nu}^0\right)_t = -2\kappa^0_t \widetilde{\nu}^0_t + \alpha^0_t \widetilde{q}^0_t + \mathbb{E}\left[ -2 \int_t^T \lambda^0_u \widetilde{q}^0_u\,du + \int_t^T\left(\alpha^0_u \widetilde{\nu}^0_u + \alpha^0_u\overline{\nu}\left(\widetilde{\nu}^0\right)_u + o_u \right)\,du  + \left( \overline{L}^* \left(\alpha^0 \widetilde{q}^0 \right) \right)_t \bigg| \mathcal{F}^0_t\right],
\end{align*}
if $\widetilde{q}^0_t = q^0_0 + \int_0^t \widetilde{\nu}^0_u\,du$ and $t \in \left[0,T\right].$ Therefore, $J_0$ is coercive.  
\end{proof}

Putting together Corollary \ref{cor:J0StrictConcAdCoerc}, identity \eqref{eq:FormulaDJ0}, and the reasoning of Lemma \ref{lem:TestingAgainstA0}, we conclude:
\begin{theorem} \label{thm:CharactMajorStrat}
Under condition \eqref{eq:UltraWeakInteraction}, there exists a unique solution $\nu^{*,0}$ of the optimization problem
\begin{equation} \label{eq:MajorMaximazation}
    \nu^{*,0} = \argmax_{\nu^0 \in \mathbb{A}_{c,0} } J_0\left(\nu^0\right).
\end{equation}
Moreover, we characterize it as the solution of the system
\begin{equation} \label{eq:AbstractFBSDEMajor}
    \begin{cases}
        q^{*,0}_t = q^0_0 + \int_0^t \nu^{*,0}_u\,du,\\
        2\kappa^0_t \nu^{*,0}_t = 2\int_0^t \lambda^0_u q^{*,0}_u du  - \int_0^t \alpha^0_u \nu^{*,0}_u\,du -\, \int_0^t\mathbb{E}\left[ \alpha^0_u\overline{\nu}\left(\nu^{*,0}\right)_u + o_u | \mathcal{F}^0_u\right] \,du +\alpha^0_t q^{*,0}_t \\
        \hspace{9.025cm}+\, \left( \overline{L}^* \left(\alpha^0 q^{*,0} \right) \right)_t + M^{*,0}_t,\\
        q^{*,0}_0 = q^0_0 \text{ and } q^{*,0}_T = 0,
    \end{cases}
\end{equation}
for some square-integrable martingale $\left\{ M^{*,0}_t,\mathcal{F}^0_t \right\}_{0\leqslant t \leqslant T}.$
\end{theorem}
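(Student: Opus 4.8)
The plan is to split the statement into its abstract variational part (existence and uniqueness of the maximizer) and its analytic part (the FBSDE characterization), paralleling the route taken for the followers in Corollary \ref{cor:PartialOptmFBSDE} and Proposition \ref{prop:CharacterizationNashConstrained}. First I would settle existence and uniqueness by the direct method of the calculus of variations. The admissible set $\mathbb{A}_{c,0}$ is a closed affine subspace of the Hilbert space $\mathbb{L}^2$ of $\mathbb{F}^0$-progressively measurable processes — it is the translate of the closed linear subspace $\mathbb{A}_{c,0}^0$ by any fixed element satisfying the liquidation constraint — hence convex and weakly closed. Corollary \ref{cor:J0StrictConcAdCoerc} supplies, under \eqref{eq:UltraWeakInteraction}, that $J_0$ is strictly concave and coercive. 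Since $J_0$ is quadratic by Lemma \ref{lem:DiffProptsOfJ0}(b), hence continuous and concave, it is weakly upper semicontinuous. Taking a maximizing sequence, coercivity forces it to be norm-bounded, so a subsequence converges weakly to some $\nu^{*,0} \in \mathbb{A}_{c,0}$; weak upper semicontinuity then shows this limit attains the supremum in \eqref{eq:MajorMaximazation}, while strict concavity yields uniqueness.

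Next I would characterize $\nu^{*,0}$ through its first-order condition. Because $\mathbb{A}_{c,0}$ is affine with direction space $\mathbb{A}_{c,0}^0$, for any $w^0 \in \mathbb{A}_{c,0}^0$ both $\nu^{*,0} \pm \epsilon w^0$ remain admissible; concavity together with Gâteaux differentiability (Lemma \ref{lem:DiffProptsOfJ0}(a)) then make the maximality of $\nu^{*,0}$ equivalent to the stationarity condition $\langle J_0'(\nu^{*,0}), w^0\rangle = 0$ for every $w^0 \in \mathbb{A}_{c,0}^0$. This is the single-player analogue of the equivalence established in Corollary \ref{cor:PartialOptmFBSDE}.

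Finally I would convert stationarity into the system \eqref{eq:AbstractFBSDEMajor}. Inserting the explicit formula \eqref{eq:FormulaDJ0} for the derivative, the stationarity condition reads $\mathbb{E}[\int_0^T w^0_t \Phi_t\,dt] = 0$ for all $w^0 \in \mathbb{A}_{c,0}^0$, where $\Phi$ is the process in braces in the second expression of \eqref{eq:FormulaDJ0}. By the reasoning of Lemma \ref{lem:TestingAgainstA0} applied to the leader's filtration $\mathbb{F}^0$, this is equivalent to $\Phi$ being an $\mathbb{F}^0$-martingale. Splitting each integral as $\int_t^T = \int_0^T - \int_0^t$, absorbing the $\mathcal{F}^0_T$-measurable $\int_0^T$ contributions together with the martingale part of $\Phi$ into a single square-integrable $\mathbb{F}^0$-martingale $M^{*,0}$, and invoking the tower property to replace the non-adapted integrand $\alpha^0_u\overline{\nu}(\nu^{*,0})_u + o_u$ by its $\mathcal{F}^0_u$-conditional expectation, rearranges the stationarity relation into \eqref{eq:AbstractFBSDEMajor}.

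The main obstacle is precisely this last bookkeeping step. One must verify the requisite square-integrability so that Lemma \ref{lem:TestingAgainstA0} is applicable and the resulting $M^{*,0}$ is a genuine square-integrable $\mathbb{F}^0$-martingale, and one must handle the two distinct conditioning operations consistently — the projection onto $\mathcal{F}^0_u$ arising from the non-adapted aggregate rate, and the conditional expectation already built into $\overline{L}^*$ through Proposition \ref{prop:AdjOfLbar} — so that every term matches \eqref{eq:AbstractFBSDEMajor} exactly rather than merely up to an adaptedness discrepancy.
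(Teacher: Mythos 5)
Your proposal is correct and follows essentially the same route as the paper, whose entire proof is the one-line synthesis ``putting together Corollary \ref{cor:J0StrictConcAdCoerc}, identity \eqref{eq:FormulaDJ0}, and the reasoning of Lemma \ref{lem:TestingAgainstA0}'': existence and uniqueness from strict concavity and coercivity, the first-order condition over the direction space $\mathbb{A}_{c,0}^0$, and the martingale characterization of the stationarity relation. You merely fill in the standard details (the direct method for the maximizer and the conditional-expectation bookkeeping absorbing the non-adapted terms into $M^{*,0}$) that the paper leaves implicit.
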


Consequently, if we condense \eqref{eq:FBSDE_PureArbPart}, Proposition \ref{prop:AdjOfLbar}, and Theorem \ref{thm:CharactMajorStrat}, we can describe the optimal leader strategy $\nu^{*,0}$ and the population's NE $\boldsymbol{\nu}$ via a coupled FBSDE system including an adjoint state.
\begin{corollary} \label{cor:adjoinedFBSDE}
Let us assume that $\left(q^0,\boldsymbol{p},\boldsymbol{\varphi},x^0,\boldsymbol{x},\boldsymbol{\psi},M^0,\boldsymbol{M}^1,\boldsymbol{\Gamma}\right)$ solves the adjoined FBSDE
$$
\begin{cases}
        dq^0_t = \frac{x^0_t}{2\kappa^0_t}\,dt,\\
        d\boldsymbol{p}_t = \boldsymbol{K}_t^{-1} \boldsymbol{y}_t\,dt,\\
        -d\boldsymbol{\varphi}_t = \left(\boldsymbol{\Sigma}_t - \frac{1}{N}\boldsymbol{\beta}_t \right)\boldsymbol{\psi}_t\,dt -  d\boldsymbol{\Gamma}_t,\\
        -dx^0_t = -2\lambda^0_t q^0_t + \frac{\alpha^0_t}{2\kappa^0_t}x^0_t\,dt + \frac{\alpha^0_t}{N}\mathbb{E}\left[\mathbbm{1}_N^\intercal\boldsymbol{K}_t^{-1}\left( \boldsymbol{y}_t + \boldsymbol{b}_t\right) | \mathcal{F}^0_t \right]\,dt \\
        \hspace{2.5cm}+\,\mathbb{E}\left[o_t | \mathcal{F}^0_t \right]\,dt - d \left( \alpha^0_t q^0_t + \alpha^0_t\mathbb{E}\left[ \mathbbm{1}_N^\intercal\boldsymbol{\psi}_t |\mathcal{F}^0_t\right] \right) - dM^0_t,\\
        -d\boldsymbol{y}_t = \boldsymbol{\mathcal{P}}_t\left( \boldsymbol{C}_t \boldsymbol{y}_t\right)\,dt - \left(\boldsymbol{\Sigma}_t - \frac{1}{N}\boldsymbol{\beta}_t \right)\boldsymbol{p}_t\,dt + \boldsymbol{\mathcal{P}}_t\left(\alpha^0_t \nu^0_t \mathbbm{1}_N \right)\,dt - d\boldsymbol{M}^1_t,\\
        d\boldsymbol{\psi}_t = -\boldsymbol{K}^{-1}_t\boldsymbol{\varphi}_t\,dt +\boldsymbol{\mathcal{P}}_t\left( \boldsymbol{C}_t^\intercal \boldsymbol{\psi}_t\right)\,dt + \boldsymbol{\mathcal{P}}_t\left( \frac{\alpha^0_t}{N}q^0_t \boldsymbol{K}_t^{-1}\mathbbm{1}_N \right)\,dt,\\
        q^0_0 \text{ given in } L^2\left(\Omega,\mathcal{F}^0_0\right),\, q^0_T=0,\\
        \boldsymbol{p}_0 = \boldsymbol{0},\, \boldsymbol{p}_T = \boldsymbol{0},\\
        \boldsymbol{\psi}_0 = \boldsymbol{0} \text{ and } \boldsymbol{\psi}_T = \boldsymbol{0},
\end{cases}
$$
where $\boldsymbol{M}^1,\boldsymbol{\Gamma} \in \mathcal{M}_{(N)},$ and $\left\{ M^0_t,\mathcal{F}^0_t\right\}_{0\leqslant t \leqslant T}$ is a square-integrable martingale. Then, the process $\nu^0 := \left(2\kappa^0\right)^{-1}x^0$ is the optimal control for the leader, whereas $\boldsymbol{\nu} := \boldsymbol{K}^{-1}\left( \boldsymbol{y} + \boldsymbol{b} \right)$ is a Nash equilibrium of the followers.
\end{corollary}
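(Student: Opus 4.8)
The plan is to verify that a given solution of the large adjoined system decouples into the two structural blocks already analyzed, namely the followers' constrained Nash system and the leader's first-order optimality condition, correctly coupled through the adjoint state. Since existence of a solution is \emph{assumed}, no fixed-point or continuation argument is required: the whole task reduces to matching terms line by line.

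First I would set $\nu^0 := (2\kappa^0)^{-1}x^0$ and read off the $(\boldsymbol{p},\boldsymbol{y},\boldsymbol{M}^1)$ sub-block. With this choice of $\nu^0$ those lines are verbatim the pure-arbitrage FBSDE \eqref{eq:FBSDE_PureArbPart}, so that $\boldsymbol{y} = \boldsymbol{T}\nu^0$. Taking the pre-determined solution $(\boldsymbol{h},\boldsymbol{b},\boldsymbol{M}^2)$ of the execution FBSDE \eqref{eq:FBSDE_ExecPart}, which depends only on the exogenous order-flow $o$ and not on the leader, and adding \eqref{eq:FBSDE_PureArbPart} to \eqref{eq:FBSDE_ExecPart}, one obtains precisely the constrained follower system \eqref{eq:FBSDEHeterogConstrained} whose drift has components $\mathbb{E}[\alpha^0_t\nu^0_t + o_t \mid \mathcal{F}^i_t]$, i.e.\ the form \eqref{eq:DriftLeaderFollowers}. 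Hence $\boldsymbol{\nu} := \boldsymbol{K}^{-1}(\boldsymbol{y}+\boldsymbol{b})$ is the Nash equilibrium of the followers facing the leader rate $\nu^0$ (unique under the assumed membership in $\mathbb{S}_{c,(N)}$), which settles the second claim.

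Next I would treat the adjoint block: the $(\boldsymbol{\varphi},\boldsymbol{\psi},\boldsymbol{\Gamma})$ lines coincide exactly with the adjoint FBSDE \eqref{eq:AdjointFSBDE} of Proposition \ref{prop:AdjOfLbar} for the source $g_t = \alpha^0_t q^0_t$, since the forcing term there is $\boldsymbol{\mathcal{P}}_t\bigl(\tfrac{1}{N}(\alpha^0_t q^0_t)\boldsymbol{K}_t^{-1}\mathbbm{1}_N\bigr)$. Proposition \ref{prop:AdjOfLbar} then yields $\bigl(\overline{L}^*(\alpha^0 q^0)\bigr)_t = \alpha^0_t\mathbb{E}[\mathbbm{1}_N^\intercal\boldsymbol{\psi}_t \mid \mathcal{F}^0_t]$, which is exactly the process differentiated in the $x^0$-equation. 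Substituting $x^0 = 2\kappa^0\nu^0$ turns $\tfrac{\alpha^0}{2\kappa^0}x^0$ into $\alpha^0\nu^0$, and $\mathbbm{1}_N^\intercal\boldsymbol{K}^{-1}(\boldsymbol{y}+\boldsymbol{b}) = \mathbbm{1}_N^\intercal\boldsymbol{\nu} = N\,\overline{\nu}(\nu^0)$ turns the coupling term into $\alpha^0\mathbb{E}[\overline{\nu}(\nu^0)\mid\mathcal{F}^0]$. Integrating the $x^0$-BSDE from $0$ to $t$ and absorbing the constant $x^0_0 - \alpha^0_0 q^0_0 - M^0_0$ together with $M^0$ into a single $\mathbb{F}^0$-martingale $M^{*,0}$ (note $\boldsymbol{\psi}_0=\boldsymbol{0}$ kills the adjoint contribution at $t=0$) reproduces \eqref{eq:AbstractFBSDEMajor} verbatim, with the constraint $q^0_T=0$ and the data $q^0_0$ carried over directly. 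By the \emph{if-and-only-if} characterization of Theorem \ref{thm:CharactMajorStrat}, this identifies $\nu^0$ with the unique maximizer $\nu^{*,0}$ of $J_0$, proving the first claim.

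The hard part will not be analysis but bookkeeping: one must track the conditional expectations and the projections $\boldsymbol{\mathcal{P}}_t$ so that the representation of Proposition \ref{prop:AdjOfLbar} slots in precisely where $\overline{L}^*(\alpha^0 q^0)$ is needed, and confirm that the genuinely coupled feedback — the adjoint source $\alpha^0 q^0$ re-entering the $x^0$-dynamics through $\boldsymbol{\psi}$ — is consistent. This causes no difficulty here because we only match the structure of a \emph{given} solution rather than constructing one; the consistency of that coupling is exactly what makes the condensed system \eqref{eq:AbstractFBSDEMajor}, Proposition \ref{prop:AdjOfLbar}, and \eqref{eq:FBSDE_PureArbPart} collapse into the single adjoined FBSDE.
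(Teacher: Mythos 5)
Your proposal is correct and takes essentially the same approach as the paper: the paper gives no separate proof, presenting the corollary as the condensation of \eqref{eq:FBSDE_PureArbPart}, Proposition \ref{prop:AdjOfLbar} (with source $g = \alpha^0 q^0$), and Theorem \ref{thm:CharactMajorStrat}, which is exactly the block-by-block matching you carry out. Your added details --- the decomposition $\boldsymbol{K}\boldsymbol{\nu} = \boldsymbol{T}\nu^0 + \boldsymbol{b}$ recovering \eqref{eq:FBSDEHeterogConstrained} with drift \eqref{eq:DriftLeaderFollowers}, the identification $\left( \overline{L}^*\left(\alpha^0 q^0\right) \right)_t = \alpha^0_t\mathbb{E}\left[\mathbbm{1}_N^\intercal\boldsymbol{\psi}_t |\mathcal{F}^0_t \right]$, and the absorption of the $\mathcal{F}^0_0$-measurable constants (using $\boldsymbol{\psi}_0 = \boldsymbol{0}$) into the martingale $M^{*,0}$ of \eqref{eq:AbstractFBSDEMajor} --- supply precisely the bookkeeping the paper leaves implicit.
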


\subsection{Constant model parameters}

Let us assume all model parameters and initial datum are constant. We also suppose that, among the population of followers, parameters are homogeneous, i.e.,
$$
\alpha^i \equiv \alpha,\, \kappa^i \equiv \kappa,\, \lambda^i \equiv \lambda \text{ and } \beta^i \equiv 0.
$$
For simplicity, we assume an absent noise traders' order-flow $o \equiv 0,$ and that the initial inventory holdings of the followers vanish in average. It is straightforward to derive that these simplifying assumptions imply $\mathbbm{1}_N^\intercal \boldsymbol{b} \equiv 0$ (cf. equation \eqref{eq:FBSDE_ExecPart}). 

Let us assume $\left(q^0,\boldsymbol{p},\boldsymbol{\varphi},x^0,\boldsymbol{y},\boldsymbol{\psi} \right)$ solves the adjoined FBSDE system presented in Corollary \ref{cor:adjoinedFBSDE}. We introduce the dependent variables
$$
\overline{p} := \frac{1}{N}\mathbbm{1}_N^\intercal \boldsymbol{p},\, \overline{y} := \frac{1}{N}\mathbbm{1}_N^\intercal \boldsymbol{y},\,\overline{\varphi} := \mathbbm{1}_N^\intercal\boldsymbol{\varphi} \text{ and } \overline{\psi} := \mathbbm{1}_N^\intercal\boldsymbol{\psi}.
$$
We remark that $2\kappa\overline{y}$ is the average rate of trading of the minor agents in the Nash equilibrium, and likewise $\overline{p}$ is their average inventory holdings. In the present deterministic setting, we derive from Corollary \ref{cor:adjoinedFBSDE} that
$$
\boldsymbol{E} = \left( q^0, \overline{p},\overline{\varphi} \right)^\intercal \text{ and } \boldsymbol{F} = \left(x^0,\overline{y},\overline{\psi} \right)^\intercal
$$
solve the ODE system
$$
\begin{cases}
\dot{\boldsymbol{E}} = \boldsymbol{H}_1 \boldsymbol{F},\\
\dot{\boldsymbol{F}} = \boldsymbol{H}_2 \boldsymbol{E} + \boldsymbol{H}_3 \boldsymbol{F},
\end{cases}
$$
with boundary conditions (BC)
$$
E_1(0) = q^0_0,\, E_1(T) = 0,\, E_2(0) =0,\, E_2(T) = 0,\, F_3(0) = 0 \text{ and } F_3(T) = 0,
$$
where $\boldsymbol{E} = \left( E_1,E_2,E_3\right)^\intercal,\, \boldsymbol{F} = \left( F_1,F_2,F_3\right)^\intercal,\,\boldsymbol{e}_1 = \left( 1,0,0 \right)^\intercal,$ and the matrices $\boldsymbol{H}_1, \boldsymbol{H}_2$ and $\boldsymbol{H}_3$ are given by
$$
\boldsymbol{H}_1 = \begin{bmatrix}
\frac{1}{2\kappa^0} & 0 & 0 \\
0 & \frac{1}{2\kappa} & 0 \\
0 & 0 & -2\lambda
\end{bmatrix},
$$
$$
\boldsymbol{H}_2 = \begin{bmatrix}
2\lambda^0 + \frac{\left( \alpha^0 \right)^2}{2\kappa} & 0 & - \frac{\alpha^0}{2\kappa} \\
0 & 2\lambda & 0 \\
\frac{\alpha^0}{2\kappa} & 0 & -\frac{1}{2\kappa}
\end{bmatrix},
$$
and
$$
\boldsymbol{H}_3 = \begin{bmatrix}
0 & -\frac{\alpha^0}{2\kappa} & \frac{\alpha^0 \alpha}{2\kappa}\left( 1 - \frac{1}{N} \right) \\
-\frac{\alpha^0}{2\kappa^0} & - \frac{\alpha}{2\kappa}\left( 1 - \frac{1}{N} \right) & 0 \\
0 & 0 & \frac{\alpha}{2\kappa}\left( 1 - \frac{1}{N} \right)
\end{bmatrix}.
$$
We can express everything in terms of $\boldsymbol{E}$ only, as in Theorem \ref{thm:ConstParamsHomog_Unconstrained}; it solves the ODE system
\begin{equation} \label{eq:ODE_HierarchicalMkt}
    \ddot{\boldsymbol{E}} = \boldsymbol{H}_1\boldsymbol{H}_3 \boldsymbol{H}_1^{-1} \dot{\boldsymbol{E}} + \boldsymbol{H}_1\boldsymbol{H}_2 \boldsymbol{E}, \, \text{ in } \left]0,T\right[,
\end{equation}
with BC
\begin{equation} \label{eq:BC_HierarchicalMkt}
    \begin{cases}
        E_1(0) = q^0_0,\, E_1(T) = 0,\\
        E_2(0) =0,\, E_2(T) = 0,\\
        \dot{E}_3(0) = 0 \text{ and } \dot{E}_3(T) = 0.
    \end{cases}
\end{equation}

The last main result of this paper concerns the existence and uniqueness of a classical solution of the ODE \eqref{eq:ODE_HierarchicalMkt} with BC \eqref{eq:BC_HierarchicalMkt}. In preparation to prove it, we observe that $\boldsymbol{E}$ solves \eqref{eq:ODE_HierarchicalMkt} and \eqref{eq:BC_HierarchicalMkt} if, and only if, 
$$
\boldsymbol{g} \equiv \left(g_1,g_2,g_3\right)^\intercal :=\boldsymbol{H}_1^{-1}\left(  \boldsymbol{E} - q^0_0\left(1-\frac{t}{T}\right)\boldsymbol{e}_1 \right)
$$
solves
\begin{equation} \label{eq:ODE_HierarchicalMkt_translated}
    \ddot{\boldsymbol{g}} = \boldsymbol{H}_3 \dot{\boldsymbol{g}} + \boldsymbol{H}_2\boldsymbol{H}_1 \boldsymbol{g} + \boldsymbol{h}, \, \text{ in } \left]0,T\right[,
\end{equation}
with homogeneous BC
\begin{equation} \label{eq:BC_HierarchicalMkt_homog}
    \begin{cases}
        g_1(0) = 0,\, g_1(T) = 0,\\
        g_2(0) =0,\, g_2(T) = 0,\\
        \dot{g}_3(0) = 0 \text{ and } \dot{g}_3(T) = 0,
    \end{cases}
\end{equation}
where
$$
\boldsymbol{h}(t) := -\frac{q^0_0}{T} \boldsymbol{H}_3 \boldsymbol{H}_1^{-1}\boldsymbol{e}_1 +q^0_0\left(1-\frac{t}{T}\right) \boldsymbol{H}_2\boldsymbol{e}_1.
$$
We introduce the Hilbert space
\begin{align*}
    \mathcal{H} := \bigl\{ \boldsymbol{v} = \left(v_1,v_2,v_3\right)^\intercal :\,& \boldsymbol{v},\,\dot{\boldsymbol{v}} \in L^2\left(0,T\right), \text{ and } v_1(0) = v_1(T) = v_2(0) = v_2(T) = 0 \bigr\},
\end{align*}
with the inner product
$$
\left( \boldsymbol{v}, \boldsymbol{w}\right)_{\mathcal{H}} := \int_0^T \left( \dot{\boldsymbol{v}} \cdot \dot{\boldsymbol{w}} + \boldsymbol{v}\cdot \boldsymbol{w} \right)\,dt. 
$$

\begin{lemma} \label{lem:Hierarchical_BilinearFormPropts}
Let us assume that
\begin{equation} \label{eq:WeakInteractionODE_Hierarchy}
    c^* < 16\lambda^*,
\end{equation}
where
$$
c^* := \left( \alpha^0 \right)^2\left( \frac{1}{\kappa^0} - \frac{1}{\kappa} \right)^2 + \left(\frac{\alpha}{\kappa} \right)^2 \left[ 1 + \left( \alpha^0\right)^2 \right] \left( 1 - \frac{1}{N} \right)^2
$$
and
$$
\lambda^* := \frac{\lambda}{\kappa} \wedge \left[ \frac{1}{2}\left(\frac{\lambda^0}{\kappa^0} + \frac{\left( \alpha^0 \right)^2}{4\kappa \kappa^0} + \frac{\lambda}{\kappa} - \sqrt{\left(\frac{\lambda^0}{\kappa^0} + \frac{\left( \alpha^0 \right)^2}{4\kappa \kappa^0} - \frac{\lambda}{\kappa} \right)^2 + \frac{\left(\alpha^0\right)^2\lambda}{\kappa^2\kappa^0}} \right) \right].
$$
Then, the bilinear form $b : \mathcal{H}\times \mathcal{H} \rightarrow \mathbb{R}$ defined as
\begin{equation} \label{eq:TheBilinearForm_Hierarchy}
    b\left( \boldsymbol{v},\boldsymbol{w}\right) := \int_0^T \left( \dot{\boldsymbol{v}}\cdot \dot{\boldsymbol{w}} + \boldsymbol{v} \cdot \boldsymbol{H}_3 \dot{\boldsymbol{w}} + \boldsymbol{v} \cdot \boldsymbol{H}_2\boldsymbol{H}_1 \boldsymbol{w} \right)\,dt
\end{equation}
is coercive, i.e.,
$$
b\left( \boldsymbol{v},\boldsymbol{v}\right) \geqslant C \|\boldsymbol{v}\|_{\mathcal{H}}^2.
$$
\end{lemma}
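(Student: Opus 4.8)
The plan is to verify the hypotheses of the Lax--Milgram theorem; continuity of $b$ is immediate from the boundedness of the constant matrices $\boldsymbol{H}_1,\boldsymbol{H}_2,\boldsymbol{H}_3$, so coercivity is the only substantive point. First I would put $\boldsymbol{w}=\boldsymbol{v}$ in \eqref{eq:TheBilinearForm_Hierarchy} and record
$$
b(\boldsymbol{v},\boldsymbol{v}) = \int_0^T |\dot{\boldsymbol{v}}|^2\,dt + \int_0^T \boldsymbol{v}\cdot \boldsymbol{H}_3\dot{\boldsymbol{v}}\,dt + \int_0^T \boldsymbol{v}\cdot \boldsymbol{H}_2\boldsymbol{H}_1\boldsymbol{v}\,dt,
$$
so that the whole matter is to absorb the first-order coupling term into the Dirichlet energy $\int_0^T|\dot{\boldsymbol{v}}|^2\,dt$ and the zeroth-order term. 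The natural device is the pointwise completion of squares $|\dot{\boldsymbol{v}}|^2 + \boldsymbol{v}\cdot\boldsymbol{H}_3\dot{\boldsymbol{v}} = |\dot{\boldsymbol{v}} + \tfrac12\boldsymbol{H}_3^\intercal\boldsymbol{v}|^2 - \tfrac14\,\boldsymbol{v}\cdot\boldsymbol{H}_3\boldsymbol{H}_3^\intercal\boldsymbol{v}$, reserving (via a Young parameter) a fixed positive fraction of the Dirichlet energy so that $\|\dot{\boldsymbol{v}}\|_{L^2}^2$ also survives in the final bound. This reduces coercivity to the positive definiteness of the constant symmetric matrix $\boldsymbol{S} := \tfrac12\big(\boldsymbol{H}_2\boldsymbol{H}_1 + (\boldsymbol{H}_2\boldsymbol{H}_1)^\intercal\big) - \tfrac14\boldsymbol{H}_3\boldsymbol{H}_3^\intercal$, up to Poincar\'e on the Dirichlet components $v_1,v_2$.

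Next I would exploit the block structure of $\boldsymbol{H}_2\boldsymbol{H}_1$. The $v_2$--coordinate decouples: its only nonzero entry in $\boldsymbol{H}_2\boldsymbol{H}_1$ is the diagonal $\lambda/\kappa$, contributing the term $\tfrac{\lambda}{\kappa}\,v_2^2$, which is why $\lambda/\kappa$ appears as one of the two arguments of the minimum defining $\lambda^*$. The genuinely coupled $(1,3)$ block has trace $\tfrac{\lambda^0}{\kappa^0}+\tfrac{(\alpha^0)^2}{4\kappa\kappa^0}+\tfrac\lambda\kappa$ and the clean determinant $\tfrac{\lambda\lambda^0}{\kappa\kappa^0}>0$; the bracketed quantity in the definition of $\lambda^*$ is exactly the smaller root of the corresponding characteristic polynomial. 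Thus $\lambda^*$ is the spectral floor produced by $\boldsymbol{H}_2\boldsymbol{H}_1$, while the entries of $\boldsymbol{H}_3\boldsymbol{H}_3^\intercal$ (the off-diagonal couplings $\tfrac{\alpha^0}{4}(\tfrac1{\kappa^0}-\tfrac1\kappa)$ and $\tfrac{\alpha^0\alpha}{4\kappa}(1-\tfrac1N)$, together with the diagonal block $\tfrac{\alpha}{2\kappa}(1-\tfrac1N)$) aggregate, after estimating the cross terms, into the quantity $c^*$.

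I would then show that the normalization is arranged precisely so that $\boldsymbol{S}\geqslant \boldsymbol{0}^{3\times 3}$, in the strict sense, as soon as $16\lambda^* > c^*$; the factor $16$ arises from the $\tfrac14$ in the completed square combined with the balancing of the cross contributions. Strict positivity gives $\int_0^T\boldsymbol{v}\cdot\boldsymbol{S}\boldsymbol{v}\,dt \geqslant c_0\|\boldsymbol{v}\|_{L^2}^2$, and adding the nonnegative square integral, keeping the reserved fraction of $\int_0^T|\dot{\boldsymbol{v}}|^2\,dt$, and using Poincar\'e on $v_1,v_2$ together with the triangle inequality to recover $\|\dot{\boldsymbol{v}}\|_{L^2}$, controls both $\|\boldsymbol{v}\|_{L^2}^2$ and $\|\dot{\boldsymbol{v}}\|_{L^2}^2$, hence $\|\boldsymbol{v}\|_{\mathcal{H}}^2$, as required.

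The step I expect to be the main obstacle is the sharp bookkeeping that makes the constants line up exactly. Because $\boldsymbol{H}_2\boldsymbol{H}_1$ is \emph{not} symmetric, the quadratic form $\boldsymbol{v}\cdot\boldsymbol{H}_2\boldsymbol{H}_1\boldsymbol{v}$ only feels its symmetric part, whose $(1,3)$ off-diagonal entry is the arithmetic mean of the two off-diagonal entries of the block, whereas the determinant-based floor $\lambda^*$ corresponds to their geometric mean. Reconciling these — so that one recovers the clean determinant $\tfrac{\lambda\lambda^0}{\kappa\kappa^0}$ rather than a cruder spectral bound — forces a careful weighting of the $(1,3)$ coordinates, equivalently an optimal choice of the Young parameter used on the $\boldsymbol{H}_3$ coupling, so that the residual cross terms collapse exactly to $c^*$. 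Producing this optimal balance, rather than merely some coercivity threshold, is the delicate part of the argument; the remaining manipulations (Poincar\'e, triangle inequality, assembling the final constant $C$) are routine.
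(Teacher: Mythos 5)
Your overall skeleton (absorb the first--order term, then invoke a spectral floor for the zeroth--order matrix, then balance the two constants) is the same as the paper's, and your spectral computation for $\boldsymbol{H}_2\boldsymbol{H}_1$ (trace, determinant $\lambda\lambda^0/\kappa\kappa^0$, $\lambda^*$ as the smaller root) is correct; but your execution has a genuine gap in the absorption step. The constant $c^*$ \emph{cannot} be produced by the pointwise completion of squares, because the matrix $\tfrac14\boldsymbol{H}_3\boldsymbol{H}_3^\intercal$ contains $\left(\alpha^0/2\kappa\right)^2$ and $\left(\alpha^0/2\kappa^0\right)^2$ in separate entries (the quantities you list are not its entries), whereas $c^*$ contains the \emph{difference} $\left(\alpha^0\right)^2\left(1/\kappa^0-1/\kappa\right)^2$. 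That cancellation is an integrated effect, not a pointwise one: the paper's bound
$\left|\int_0^T \boldsymbol{v}\cdot\boldsymbol{H}_3\dot{\boldsymbol{v}}\,dt\right|\leqslant \epsilon\int_0^T\left|\dot{\boldsymbol{v}}\right|^2dt+\frac{c^*}{16\epsilon}\int_0^T\left|\boldsymbol{v}\right|^2dt$
is obtained by first integrating by parts, using $v_1(0)=v_1(T)=v_2(0)=v_2(T)=0$, so that the terms $-\frac{\alpha^0}{2\kappa}v_1\dot{v}_2$ and $-\frac{\alpha^0}{2\kappa^0}v_2\dot{v}_1$ merge into $\frac{\alpha^0}{2}\left(\frac{1}{\kappa}-\frac{1}{\kappa^0}\right)v_2\dot{v}_1$ and $\int_0^T v_2\dot{v}_2\,dt$ vanishes, and only then applying Cauchy--Schwarz and Young. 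Your pointwise route leads instead to the requirement $\frac12\left(\boldsymbol{H}_2\boldsymbol{H}_1+\left(\boldsymbol{H}_2\boldsymbol{H}_1\right)^\intercal\right)-\tfrac14\boldsymbol{H}_3\boldsymbol{H}_3^\intercal>\boldsymbol{0}^{3\times3}$, which is strictly more restrictive than \eqref{eq:WeakInteractionODE_Hierarchy}: already for $\kappa=\kappa^0$ the first term of $c^*$ is zero while $\boldsymbol{H}_3\boldsymbol{H}_3^\intercal$ retains entries of size $\left(\alpha^0/2\kappa\right)^2$.

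The second gap is precisely the step you flag and postpone, and it cannot be closed with the tools you propose. You are right that the quadratic form only feels the symmetric part of $\boldsymbol{H}_2\boldsymbol{H}_1$, whose $(1,3)$ entry is the arithmetic mean $\frac12\left(\frac{\alpha^0\lambda}{\kappa}+\frac{\alpha^0}{4\kappa\kappa^0}\right)$, while $\lambda^*$ encodes the geometric mean; hence $\lambda_{\min}\left(\mathrm{sym}\right)<\lambda^*$ whenever $4\lambda\kappa^0\neq 1$. But a ``careful weighting of the $(1,3)$ coordinates'' acts on the zeroth--order form by congruence, $\boldsymbol{M}\mapsto\boldsymbol{D}\boldsymbol{M}\boldsymbol{D}$, which multiplies the $(1,3)$ and $(3,1)$ entries by the \emph{same} factor $D_{11}D_{33}$, so the arithmetic--geometric mismatch is invariant under every such weighting; and the Young parameter used on the $\boldsymbol{H}_3$ coupling never touches this block. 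Concretely, taking $\boldsymbol{v}(t)=f(t)\boldsymbol{u}$ with $f(0)=f(T)=0$ and $\boldsymbol{u}$ the minimizing eigenvector of the symmetric part shows that the bound $\int_0^T\boldsymbol{v}\cdot\boldsymbol{H}_2\boldsymbol{H}_1\boldsymbol{v}\,dt\geqslant\lambda^*\int_0^T\left|\boldsymbol{v}\right|^2dt$ fails on $\mathcal{H}$ in that regime (note also that $\int_0^T\boldsymbol{v}\cdot\boldsymbol{H}_3\dot{\boldsymbol{v}}\,dt=\left(\boldsymbol{u}\cdot\boldsymbol{H}_3\boldsymbol{u}\right)\int_0^T f\dot{f}\,dt=0$ for such $\boldsymbol{v}$, so no help comes from the first--order term). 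To your credit, you have located the soft spot of the paper's own proof, which silently uses the eigenvalues of the non-symmetric matrix $\boldsymbol{H}_2\boldsymbol{H}_1$ as a quadratic-form lower bound; but your proposal does not repair it, and repairing it requires either replacing $\lambda^*$ by $\lambda_{\min}$ of the symmetric part (a strictly stronger hypothesis) or importing new information such as the Poincar\'e inequality for $v_1,v_2$ (at the price of a $T$--dependent condition). As written, neither the deferred ``delicate part'' nor the rest of your argument establishes coercivity under \eqref{eq:WeakInteractionODE_Hierarchy}.
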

\begin{proof}
On the one hand, let us observe that the matrix $\boldsymbol{H}_2\boldsymbol{H}_1$ is strictly positive definite, and its smallest eigenvalue is precisely $\lambda^*.$ On the other hand, it is straightforward to see that
$$
\left| \int_0^T \boldsymbol{v} \cdot \boldsymbol{H}_3 \dot{\boldsymbol{v}} \, dt \right| \leqslant \epsilon\left| \dot{\boldsymbol{v}}\right|^2 + \frac{c^*}{16\epsilon} \left| \boldsymbol{v}\right|^2,
$$
for every $\boldsymbol{v} \in \mathcal{H}.$ Therefore,
$$
b\left( \boldsymbol{v},\boldsymbol{v}\right) \geqslant \left(1-\epsilon\right)\left|\dot{\boldsymbol{v}}\right|^2 + \left(\lambda^* - \frac{c^*}{16\epsilon} \right)\left| \boldsymbol{v} \right|^2.
$$
Choosing $\epsilon \in \left]0,1\right[$ suitably, i.e.,
$$
\frac{c^*}{16\lambda^*} < \epsilon < 1,
$$
we establish the result.  
\end{proof}

\begin{theorem}
Under the weak interaction assumption \eqref{eq:WeakInteractionODE_Hierarchy} of Lemma \ref{lem:Hierarchical_BilinearFormPropts}, the ODE \eqref{eq:ODE_HierarchicalMkt_translated} with BC \eqref{eq:BC_HierarchicalMkt_homog} has a unique twice continuously differentiable solution.
\end{theorem}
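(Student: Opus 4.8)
The plan is to recast \eqref{eq:ODE_HierarchicalMkt_translated}--\eqref{eq:BC_HierarchicalMkt_homog} as a variational problem on $\mathcal{H}$ and invoke the Lax--Milgram theorem, using the coercivity already secured in Lemma \ref{lem:Hierarchical_BilinearFormPropts}. First I would derive the weak formulation: multiplying \eqref{eq:ODE_HierarchicalMkt_translated} by a test function $\boldsymbol{w}\in\mathcal{H}$, integrating over $]0,T[$, and integrating the second-order term by parts, one obtains
$$
b\left(\boldsymbol{w},\boldsymbol{g}\right) = -\int_0^T \boldsymbol{h}\cdot \boldsymbol{w}\,dt =: \ell\left(\boldsymbol{w}\right), \qquad \boldsymbol{w}\in\mathcal{H},
$$
where $b$ is the form \eqref{eq:TheBilinearForm_Hierarchy}. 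The boundary terms produced by the integration by parts drop out: the contributions from the first two components vanish because every element of $\mathcal{H}$ has vanishing first and second components at $0$ and $T$, while the third-component contribution is precisely the quantity that the natural boundary conditions $\dot{g}_3(0)=\dot{g}_3(T)=0$ will force to zero. Because the drift term $\boldsymbol{H}_3\dot{\boldsymbol{g}}$ is not symmetric, the unknown $\boldsymbol{g}$ naturally occupies the \emph{second} slot of $b$; equivalently, one works with the transpose form $a(\boldsymbol{v},\boldsymbol{w}):=b(\boldsymbol{w},\boldsymbol{v})$, which shares the coercivity constant of $b$ since $a(\boldsymbol{v},\boldsymbol{v})=b(\boldsymbol{v},\boldsymbol{v})$.

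Next I would verify the hypotheses of Lax--Milgram for $a$ on $\mathcal{H}$. Boundedness of $a$ is immediate from the Cauchy--Schwarz inequality and the constancy, hence boundedness, of the entries of $\boldsymbol{H}_1,\boldsymbol{H}_2,\boldsymbol{H}_3$, while $\ell$ is bounded because $\boldsymbol{h}$ is affine in $t$ and therefore square-integrable on $[0,T]$. Coercivity is exactly the content of Lemma \ref{lem:Hierarchical_BilinearFormPropts}, valid under the weak interaction assumption \eqref{eq:WeakInteractionODE_Hierarchy}. Lax--Milgram then yields a unique $\boldsymbol{g}\in\mathcal{H}$ with $a(\boldsymbol{g},\boldsymbol{w})=\ell(\boldsymbol{w})$ for all $\boldsymbol{w}\in\mathcal{H}$, that is, a unique weak solution.

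Then I would upgrade the weak solution to a classical one via a standard one-dimensional bootstrap. Testing with $\boldsymbol{w}\in C_c^\infty(]0,T[)^3\subseteq\mathcal{H}$ shows that $\ddot{\boldsymbol{g}}=\boldsymbol{H}_3\dot{\boldsymbol{g}}+\boldsymbol{H}_2\boldsymbol{H}_1\boldsymbol{g}+\boldsymbol{h}$ holds distributionally; since $\boldsymbol{g}\in H^1$ forces the right-hand side into $L^2$, we get $\boldsymbol{g}\in H^2\hookrightarrow C^1([0,T])$, whereupon the right-hand side is continuous and $\boldsymbol{g}\in C^2$ (indeed $C^\infty$, as $\boldsymbol{h}$ is smooth). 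With $\boldsymbol{g}\in H^2$ available, I would integrate by parts back in the weak identity for a general $\boldsymbol{w}\in\mathcal{H}$; after subtracting the interior equation, the residual boundary term $w_3(T)\dot{g}_3(T)-w_3(0)\dot{g}_3(0)$ must vanish for every admissible $\boldsymbol{w}$, and since $w_3(0),w_3(T)$ are unconstrained in $\mathcal{H}$ this yields $\dot{g}_3(0)=\dot{g}_3(T)=0$. The essential conditions $g_1(0)=g_1(T)=g_2(0)=g_2(T)=0$ hold by membership in $\mathcal{H}$, so \eqref{eq:BC_HierarchicalMkt_homog} is recovered in full. Uniqueness among $C^2$ solutions is inherited from the variational uniqueness, since any classical solution is a weak solution.

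The main obstacle I anticipate is the correct bookkeeping of the boundary contributions under the asymmetry introduced by $\boldsymbol{H}_3$: one must place the unknown in the appropriate slot of $b$ so that the integration by parts reproduces the term $+\boldsymbol{H}_3\dot{\boldsymbol{g}}$ of \eqref{eq:ODE_HierarchicalMkt_translated} rather than its transpose, and one must carefully separate the essential conditions (Dirichlet, on $g_1,g_2$) from the natural ones (Neumann, on $g_3$) so that exactly the right data are encoded in $\mathcal{H}$ and the remainder emerge variationally. Once the variational setting is arranged, the coercivity from Lemma \ref{lem:Hierarchical_BilinearFormPropts} does the heavy lifting and the remaining steps are routine.
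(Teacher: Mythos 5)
Your proposal is correct and follows essentially the same route as the paper: Lax--Milgram applied on $\mathcal{H}$ with the coercive form from Lemma \ref{lem:Hierarchical_BilinearFormPropts}, a bootstrap from the distributional equation to $\ddot{\boldsymbol{g}} \in L^2$ and then to $C^2$ via one-dimensional Sobolev embedding, recovery of the natural Neumann conditions on $g_3$ by integrating back by parts, and uniqueness among classical solutions inherited from variational uniqueness. Your one point of extra care --- placing the unknown in the second slot of $b$ (equivalently, using the transpose form $a(\boldsymbol{v},\boldsymbol{w}) = b(\boldsymbol{w},\boldsymbol{v})$, which shares the coercivity constant) so that the Euler--Lagrange equation reproduces $+\boldsymbol{H}_3\dot{\boldsymbol{g}} + \boldsymbol{H}_2\boldsymbol{H}_1\boldsymbol{g}$ rather than $-\boldsymbol{H}_3^\intercal\dot{\boldsymbol{g}} + (\boldsymbol{H}_2\boldsymbol{H}_1)^\intercal\boldsymbol{g}$ --- is in fact a correction of a slip in the paper, which writes the variational problem as $b(\boldsymbol{g},\boldsymbol{v}) = -\int_0^T \boldsymbol{h}\cdot\boldsymbol{v}\,dt$ yet later differentiates it as if the unknown sat in the second slot.
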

\begin{proof}
Since the bilinear form $b$ we defined in \eqref{eq:TheBilinearForm_Hierarchy} is continuous, as well as the linear functional
$$
\boldsymbol{v} \in \mathcal{H} \mapsto -\int_0^T \boldsymbol{h}\cdot \boldsymbol{v}\,dt \in \mathbb{R},
$$
it follows from the Lax-Milgram Lemma, see \cite[Corollary 5.8]{brezis2010functional}, that there exists a unique $\boldsymbol{g} \in \mathcal{H}$ such that
\begin{equation} \label{eq:VariationalProblem}
    b\left(\boldsymbol{g},\boldsymbol{v}\right) = -\int_0^T \boldsymbol{h}\cdot \boldsymbol{v}\,dt,
\end{equation}
for every $\boldsymbol{v} \in \mathcal{H}.$ We claim that $\ddot{\boldsymbol{g}} \in L^2\left(0,T\right).$ In effect, for every infinitely differentiable $\boldsymbol{v},$ compactly supported within $\left]0,T\right[,$ we have the subsequent relations in the distributional sense:
\begin{align*}
    \left\langle \ddot{\boldsymbol{g}},\boldsymbol{v} \right\rangle &= - \int_0^T \dot{\boldsymbol{g}}\cdot \dot{\boldsymbol{v}}\,dt \\
    &=\int_0^T \left( \boldsymbol{H}_3 \dot{\boldsymbol{g}} + \boldsymbol{H}_2\boldsymbol{H}_1 \boldsymbol{g} + \boldsymbol{h} \right)\cdot \boldsymbol{v}\,dt.
\end{align*}
Thus, our assertion is valid, and we further have that \eqref{eq:ODE_HierarchicalMkt_translated} is valid at a.e. $t\in \left[0,T\right].$ 

The memberships of the functions $\boldsymbol{g},\, \dot{\boldsymbol{g}}$ and $\ddot{\boldsymbol{g}}$ in $L^2\left(0,T\right)$ imply $\boldsymbol{g},\,\dot{\boldsymbol{g}} \in C\left( \left[0,T\right]\right),$ see \cite[Chapter 5, Section 5.9, Theorem 2 (i)]{evans10pde}, whence the right-hand side of \eqref{eq:ODE_HierarchicalMkt_translated} is continuous. From \cite[Chapter 6, Section 6.10]{lieb2001analysis}, we conclude that $\boldsymbol{g}$ is of class $C^2$ (i.e., twice continuously differentiable). 

Since $\boldsymbol{g} = \left(g_1,g_2,g_3\right)^\intercal \in \mathcal{H},$ the adequate boundary conditions for $g_1$ and $g_2$ hold. As for $g_3,$ let us test equation \eqref{eq:ODE_HierarchicalMkt_translated} against a $\boldsymbol{v} \in \mathcal{H}$ to obtain:
\begin{align*}
    0 &= \int_0^T \left( -\ddot{\boldsymbol{g}} +  \boldsymbol{H}_3 \dot{\boldsymbol{g}} + \boldsymbol{H}_2\boldsymbol{H}_1 \boldsymbol{g} + \boldsymbol{h} \right) \cdot \boldsymbol{v} \,dt \\
    &= \dot{\boldsymbol{g}}\cdot \boldsymbol{v} \Big|_{t=0}^{t=T} + b\left( \boldsymbol{g},\boldsymbol{v} \right) + \int_0^T \boldsymbol{h}\cdot \boldsymbol{v}\,dt \\
    &= \dot{g}_3(T)v_3(T) - \dot{g}_3(0)v_3(0).
\end{align*}
The arbitrariness of $v_3(0)$ and $v_3(T)$ implies $\dot{g}_3(0) = \dot{g}_3(T) = 0.$

Finally, we notice that there can be only one $C^2$ solution $\boldsymbol{g}$ of \eqref{eq:ODE_HierarchicalMkt_translated} with BC \eqref{eq:BC_HierarchicalMkt_homog}. Indeed, any function solving this problem must solve the variational one, \eqref{eq:VariationalProblem}, which we proved that admits a single solution.  
\end{proof}

We present an illustration of the current situation in Figure \ref{fig:Hierarchy}. We obtained it by numerically solving the ODE system \eqref{eq:ODE_HierarchicalMkt} with BC \eqref{eq:BC_HierarchicalMkt}. The parameters we employed in this simulation are in Table \ref{tab:Hierarchy}. 

\begin{table}[!htp]
\centering
\begin{tabular}{@{}ccccccc@{}}
\toprule
$q^0_0$ & $\alpha^0$        & $\kappa^0$           & $\lambda^0$        & $\alpha$           & $\kappa$             & $\lambda$           \\ \midrule
 $1$    & $5\times 10^{-5}$ & $2.5 \times 10^{-5}$ & $5 \times 10^{-4}$ & $3 \times 10^{-5}$ & $1.5 \times 10^{-5}$ & $5 \times 10^{-7}$  \\ \bottomrule
\end{tabular}
\caption{Parameters we used in the simulation of the hierarchical market. We fixed $T=1$ and $N=5.$}
\label{tab:Hierarchy}
\end{table}

\begin{figure}[!htp]
    \centering
    \includegraphics[scale = 0.4]{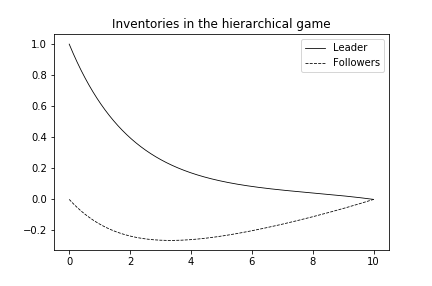}
    \caption{Leader's inventory evolution, and the dynamics of the average holdings of the followers.}
    \label{fig:Hierarchy}
\end{figure}

\section{Conclusions} \label{sec:conclusions}

We analyzed a finite population stochastic differential game of optimal trading. We allowed asymmetry of information, as well as for parameters to be stochastic in our model, with mild assumptions. Preferences were completely heterogeneous among the agents, and differences in their choices were not only due to informational asymmetry. In the first two parts of the work, we investigated Nash equilibria in two settings: the unconstrained and the constrained ones. In the third one, we extended the basic model to a hierarchical market. We introduced hierarchy by assuming there was a leader, also called a leader, having a first-mover advantage, apart from the population of minor traders, which we referred to as followers.

In the unconstrained setting, we characterized the Nash equilibrium as a solution of a coupled vector FBSDE. Employing a continuation technique, we proved that the latter system had a unique solution, under weak interaction. Therefore, given market parameters and initial inventories and cash amounts, a unique corresponding Nash equilibrium exists. Assuming parameters to be constant, we deduced that the FBSDE becomes an ODE, and we obtained semi-explicit forms for it. With the further assumption of homogeneous parameters, we could compare the resulting average inventory with its MFG counterpart.

For the constrained problem, we proved --- under weak interaction and continuity assumptions --- that Nash equilibria are characterized by an FBSDE, which is quite close to that of the unconstrained one, differing only on the terminal condition. We obtained bounds on the solution and corresponding inventory process, uniformly on the terminal penalization parameter, under a more stringent weak interaction assumption. Using functional analytic arguments, we established that the limit was the constrained Nash equilibrium. Putting ourselves under a suitable framework, we proved the convergence of the average speed of trading of the Nash equilibrium of the finite population game to the mean-field optimal aggregation rate, as the population size tends to infinity. We also provided a rate of convergence. 

We proceeded in two steps to study the hierarchical game. Firstly, we assumed that the leader's strategy was given and derived properties of the followers' Nash equilibrium in terms of it. It was appropriate to factor this Nash equilibrium in two parts: one of pure arbitrage and another concerning inventory execution. In the sequel, we feedback the Nash equilibrium of the population, in terms of the leading strategy, in the leader's objective functional. In this way, we rendered the major problem into a single-player optimization. We proved the existence and uniqueness of optimal control, under a suitable weak interaction assumption. We finished this part by discussing the case of constant parameters, assuming homogeneity among followers.

\section*{Acknowledgements}

Y. Thamsten was financed in part by Coordena\c{c}\~ao de Aperfei\c{c}oamento de Pessoal de N\'ivel Superior - Brasil (CAPES) - Finance code 001.


\bibliographystyle{apalike}
\bibliography{refs}


\end{document}